\newcommand{\p}{\partial}
\newcommand{\const}{\mathop{\rm const}\nolimits}
\newcommand{\EqOrd}{r}
\newcommand{\rsemioplus}{\mathbin{\mbox{$\lefteqn{\hspace{.67ex}\rule{.4pt}{1.2ex}}{\ni}$}}}
\newcommand{\spanindex}{{\mbox{\tiny$\langle\,\rangle$}}}
\newcommand{\ve}{\varepsilon}
\newcommand{\DD}{\mathrm D}
\newcommand{\ZZ}{\mathcal Z}
\newcommand{\PP}{\mathcal P}
\newcommand{\RR}{\mathcal R}
\newcommand{\DDD}{\mathcal D}
\newcommand{\todo}[1][\null]{\ensuremath{\clubsuit}}
\newcommand{\noprint}[1]{}
\newcommand{\checked}[1][\null]{\ensuremath{\boldsymbol{\surd}}}
\newtheorem{theorem}{Theorem}
\newtheorem{lemma}[theorem]{Lemma}
\newtheorem{corollary}[theorem]{Corollary}
\newtheorem{proposition}[theorem]{Proposition}
{\theoremstyle{definition}
\newtheorem{remark}[theorem]{Remark}
\newtheorem{definition}[theorem]{Definition}
}
\newcounter{tbn}
\newcounter{tbnn}[tbn]
\renewcommand{\thetbnn}{{\rm\arabic{tbn}\alph{tbnn}}}
\renewcommand{\thetbn}{{\rm\arabic{tbn}}}
\newcommand{\cn}[1]{\refstepcounter{tbn}\thetbn\label{#1}}
\newcommand{\ca}[1]{\refstepcounter{tbn}\refstepcounter{tbnn}\thetbnn\label{#1}}
\newcommand{\cb}[1]{\refstepcounter{tbnn}\thetbnn\label{#1}}
\begin{document}

\par\noindent {\LARGE\bf
Generalization of the algebraic method\\ of group classification with application\\ to nonlinear wave and elliptic equations
\par}

\vspace{4mm}\par\noindent
{\large Olena~O.~Vaneeva$^\dag$, Alexander Bihlo$^\ddag$ and Roman O. Popovych$^{\dag\S}$
\par}

{\it
\vspace{5mm}\par\noindent
$^\dag$Institute of Mathematics of NAS of Ukraine, 3 Tereshchenkivska Str., 01024 Kyiv, Ukraine

\vspace{2mm}\par\noindent
$^\ddag$Department of Mathematics and Statistics, Memorial University of Newfoundland,\\
$\phantom{^\ddag}$\,St.\ John's (NL) A1C 5S7, Canada

\vspace{2mm}\par\noindent
$^\S$Fakult\"at f\"ur Mathematik, Universit\"at Wien, Oskar-Morgenstern-Platz 1, A-1090 Wien, Austria

}

\vspace{2mm}

\noindent
E-mails: vaneeva@imath.kiev.ua, abihlo@mun.ca, rop@imath.kiev.ua

\vspace{6mm}\par\noindent\hspace*{8mm}\parbox{140mm}{\small
Enhancing and essentially generalizing previous results 
on a class of (1+1)-dimensional nonlinear wave and elliptic equations, 
we apply several new techniques to classify admissible point transformations 
within this class up to the equivalence generated by its equivalence group.
This gives an exhaustive description of its equivalence groupoid.
After extending the algebraic method of group classification 
to non-normalized classes of differential equations, 
we solve the complete group classification problem for the class under study
up to both usual and general point equivalences.
The solution includes the complete preliminary group classification of the class 
and the construction of singular Lie-symmetry extensions,
which are not related to subalgebras of the equivalence algebra.
The complete preliminary group classification is based on classifying 
appropriate subalgebras of the entire infinite-dimensional equivalence algebra 
whose projections are qualified as maximal extensions of the kernel invariance algebra. 
The results obtained can be used to construct exact solutions of nonlinear wave and elliptic equations.
\par}\vspace{4mm}

\noprint{
MSC: 35B06 (Primary) 35A30, 35L70, 35J60 (Secondary)
35-XX   Partial differential equations
    35A30   Geometric theory, characteristics, transformations [See also 58J70, 58J72]
    35B06   Symmetries, invariants, etc.
  35Jxx	 Elliptic equations and systems [See also 58J10, 58J20]
	35J60  	Nonlinear elliptic equations
	35J61  	Semilinear elliptic equations
  35Lxx	 Hyperbolic equations and systems [See also 58J45]
	35L70  	Nonlinear second-order hyperbolic equations
    35L71  	Semilinear second-order hyperbolic equations

Keywords: group classification of differential equations, 
          nonlinear wave equations, nonlinear elliptic equations, 
          Lie symmetry, equivalence group, equivalence groupoid, equivalence algebra,
          method of furcate splitting
          algebraic method of group classification
}

\section{Introduction}

Group classification is concerned with finding an exhaustive list of inequivalent equations 
from a class of differential equations containing one or more arbitrary elements. 
It was originally motivated from theoretical physics, 
where traditionally the equations admitting the maximal number of symmetries among equations 
from a given class yield the most promising model describing real-world phenomena.
Mathematically, group classification problems for classes of differential equations 
have been intensively investigated, 
starting with Sophus Lie's classifications of second-order ordinary differential equations~\cite{lie1891A} 
and of second-order linear partial differential equations with two independent variables~\cite{lie1881a}. 
Recently, a number of novel techniques for group classification have been introduced, 
which include various flavors of the algebraic method~\cite{bihl2012b,card2011a,kuru2018a,opan2017a,popo2010a}
and variations of the advanced modification of the direct method 
called the method of furcate splitting~\cite{bihl2019a,opan2020b,niki2001a}.
The algebraic method of group classification has proven so far to be the most powerful 
since it has been efficiently applied to classes of differential equations 
with arbitrary elements that are functions of several arguments. 

Among the classes considered in the literature on group classification, 
the most prominent are classes of (1+1)-dimensional evolution equations, 
see e.g.
\cite{akha1991a,CRC1994V1,basa2001a,basa2001b,bihlo2016a,blum1989A,doro1982a,card2011a,gaze1992a,gung2004a,
huan2009a,huan2012c,ivan2010a,maga1993a,opan2017a,opan2019b,opan2020b,ovsi1982A,popo2004b,vane2007a,vane2015d,vane2009a,vane2012a} 
and references therein. 
It is thus also no coincidence that in the field of invariant discretization, 
which is concerned with deriving numerical schemes for differential equations 
possessing the same symmetries as the original, undiscretized equation, 
mostly evolutionary equations have been considered in the past, see e.g.~\cite{bihlo2017a}. 
What distinguishes evolutionary equations from the symmetry-perspective is 
the special role of the time variable, which is similar to the role of a parameter. 
Thus,  the time component of any point or contact transformation 
between evolution equations only depends on the time variable~\cite{king1998a,maga1993a}.
This considerably simplifies the classification procedure.

Wave and elliptic equations play an important role in physics and in the mathematical sciences 
since wave equations model the transport of quantities at finite speeds 
whereas elliptic equations describe stationary processes. 
From the symmetry perspective, such equations are challenging 
since all the independent variables in them enter on equal footing. 
Lie symmetries of wave and elliptic equations with two independent variables have also been studied extensively, 
see e.g. \cite{CRC1994V1,bihl2012b,blum2010A,blum2006a,blum1987a,blum1989A,gand2004a,huan2007a,huan2012a,ibra1991a,lahn2005a,lahn2006a,lahn2007a} 
and references therein. 
Note that the first investigations of such equations within the framework of group analysis of differential equations, 
which are relevant for the subject of the present paper, 
were carried out by Sophus Lie 
in the course of his classification of second-order linear partial differential equations 
with two independent variables~\cite{lie1881a} 
and in the course of his study of contact transformations 
between nonlinear Klein--Gordon equations of the form $d^2z/dx\,dy=F(z)$~\cite{lie1881b}. 
Exact solutions constructed for nonlinear wave and elliptic equations using group-theoretical and related methods 
are collected, e.g., in \cite{CRC1994V1,poly2012A,poly2019b,poly2019f,zhur2020a}.

In the present paper, we exhaustively solve the group classification problem
for the class~$\mathcal W$ of nonlinear wave and elliptic equations of the form
\begin{equation}\label{eq:GenWaveEqs}
u_{tt}=f(x,u)u_{xx}+g(x,u).
\end{equation}
We need to explicitly impose two auxiliary inequalities on the arbitrary-element tuple $\theta=(f,g)$
in order to precisely describe the class~$\mathcal W$,
which is also referred to as the class~\eqref{eq:GenWaveEqs} in the paper.
The auxiliary inequality $f\ne0$ is natural 
since equations of the form~\eqref{eq:GenWaveEqs} with $f=0$ are not true partial differential equations.%
\footnote{
Since we work within the local framework, auxiliary inequalities on arbitrary elements are interpreted
as satisfied for all values of arguments of arbitrary elements on the relevant domain.
}
We denote by~$\mathcal W_{\rm gen}$ the superclass of equations of the form~\eqref{eq:GenWaveEqs} with $f\ne0$.
In order to guarantee nonlinearity of equations from the class~$\mathcal W$,
the definition of this class should also include the auxiliary inequality
\[(f_u,g_{uu})\ne(0,0).\]
The subclass~$\mathcal W_{\rm lin}$ of linear equations in~$\mathcal W_{\rm gen}$
is the complement of~$\mathcal W$ in~$\mathcal W_{\rm gen}$,
$\mathcal W=\mathcal W_{\rm gen}\setminus\mathcal W_{\rm lin}$.
The reason why we separate nonlinear and linear equations of the form~\eqref{eq:GenWaveEqs} is 
that they are not mixed by point transformations (see Remark~\ref{rem:OnInequivOfLinAndNonlinCasesOfGenWaveEqs} below)
and have quite different Lie-symmetry properties.
Although linear wave and elliptic equations with two independent variables were already extensively investigated
within the framework of classical symmetry analysis, see, e.g., \cite{blum1987a,blum1989A,lie1881a,ovsj1962A,ovsi1982A}, 
we discuss specific transformational and symmetry properties of equations from~$\mathcal W_{\rm lin}$ 
in Remark~\ref{rem:GenWaveEqsLinCase} below, relating them to equations from~$\mathcal W$.
The sign of~$f$ is not too essential in the course of group classification of the class~$\mathcal W$.
In fact, we classify both the subclass of hyperbolic equations for which $f>0$ and the subclass of elliptic equations with $f<0$.
Hyperbolic and elliptic equations are also not mixed by point transformations.
Note that the consideration is local and all values are real throughout the paper 
although the transition to the complex case needs only minor modifications.

\looseness=-1
Following~\cite{akha1991a,ibra1991a}, 
a so-called partial preliminary group classification problem~\cite{bihl2012b,card2011a} 
for the class~$\mathcal W$ has been considered in~\cite{song2009a}. 
Specifically, the authors selected a six-dimensional subalgebra~$\mathfrak g_6$  
of the infinite-dimensional equivalence algebra~$\mathfrak g^\sim$ of the class~$\mathcal W$ 
and tried to only classify one-dimensional subalgebras of the subalgebra~$\mathfrak g_6$ 
up to the equivalence generated by the corresponding six-dimensional subgroup~$G_6$ of 
the infinite-dimensional equivalence (pseudo)group~$G^\sim$ of the class~\eqref{eq:GenWaveEqs}. 
The $G_6$-equivalence is much weaker than the $G^\sim$-equivalence. 
This is why the classification in~\cite{song2009a} led to 
an excessively large list of 24 $G_6$-equivalent simplest classification cases 
of one-dimensional Lie-symmetry extensions
most of which are $G^\sim$-equivalent to each other and, up to $G^\sim$-equivalence, 
fit into the first four cases of Table~\ref{tab:GenWaveEqsExtensions} below. 
Moreover, a number of classification cases were missed 
even within the posed partial preliminary group classification problem. 

We enhance and substantially generalize the results of~\cite{song2009a}. 
The class~$\mathcal W$ is neither normalized nor semi-normalized 
in any sense (the usual, the generalized or the extended ones). 
It cannot be partitioned into normalized or semi-normalized subclasses 
that are not related by point transformations. 
There is no mapping of it by families of point transformations to a class 
with better transformational properties. 
This is why Lie symmetries of equations from the class~$\mathcal W$ 
cannot be exhaustively classified by the existing versions of 
the algebraic method of group classification, 
which are explicitly \cite{bihl2012b,bihlo2016a,card2011a,kuru2018a,opan2017a,popo2006b,popo2010a}
or implicitly \cite{basa2001a,basa2001b,gagn1993a,gaze1992a,gung2004a,huan2009a,huan2012c,lahn2005a,lahn2006a,lahn2007a,maga1993a}
based on certain normalization properties of classified classes.
(Note that most of the above papers are devoted to group classifications 
of various classes of single (1+1)-dimensional evolution equations.)
On the other hand, the class~$\mathcal W$ is not convenient 
for considering within the framework of the direct method of group classification 
\cite{akha1991a,CRC1994V1,blum1989A,doro1982a,ovsj1962A,ovsi1982A}, 
including its advanced versions like the method of furcate splitting suggested in~\cite{niki2001a}. 
The last method is especially efficient for classes of differential equations 
with arbitrary elements depending on single arguments 
\cite{huan2007a,huan2012a,ivan2010a,opan2020b,popo2004b,vane2015d,vane2012a},
although it has also been applied to classes whose arbitrary elements depend on two arguments 
\cite{bihl2019a,niki2001a}.
Various specific algebraic techniques were suggested for group classification of classes such that 
sets of certain objects related to Lie symmetries of equations from these classes can be endowed 
with Lie-algebra structures \cite{bihl2019a,niki2017a,popo2001b} 
but this is not applicable for the class~$\mathcal W$. 

This is why to efficiently solve the complete group classification problem 
for the class~$\mathcal W$,
we develop a new version of the algebraic method of group classification 
for non-normalized classes of (systems of) differential equations, which is based 
on classifying admissible transformations of the class under study
up to their equivalence generated by the equivalence group of this class. 
We revisit the general framework of the classification of admissible transformations 
via modifying its basic notion of equivalent admissible transformations 
and introducing the notion of generating sets for equivalence groupoids. 
Several new techniques for classifying admissible transformations 
of non-normalized classes are also suggested. 
More specifically, we show that the method of furcate splitting 
and the algebraic method for computing 
the complete point or contact symmetry groups of single systems of differential equations 
\cite{hydo1998a,hydo2000b,hydo2000A} 
and the complete equivalence groups of classes of such systems~\cite{bihl2015a}
(including discrete symmetry and equivalence transformations) 
can be extended to admissible transformations. 
We also use the unexpected opportunity of describing admissible transformations 
via establishing a functor between the equivalence groupoids of classes 
that are not related by families of point transformations. 
Revisiting the algebraic method of group classification, 
we introduce the notions of regular and singular Lie-symmetry extensions 
for a class of differential equations, $\mathcal L|_{\mathcal S}$. 
Regular Lie-symmetry extensions are associated with subalgebras 
of the equivalence algebra of the class~$\mathcal L|_{\mathcal S}$. 
They are the extensions that can be constructed by the algebraic method 
in the course of the complete preliminary group classification~\cite{bihl2012b,card2011a} 
of~$\mathcal L|_{\mathcal S}$. 
Singular Lie-symmetry extensions can involve only systems from~$\mathcal L|_{\mathcal S}$ 
being sources of admissible transformations of~$\mathcal L|_{\mathcal S}$ 
that are not generated by equivalence transformations of~$\mathcal L|_{\mathcal S}$.
As a result, the group classification problem for the class~$\mathcal W$ 
that originated the above studies 
turns into a proof-of-concept example for the new methods designed for its solution. 

The further organization of the present paper is as follows. 
In the subsequent Section~\ref{sec:GenWaveEqsTheory}, 
we briefly present a new view on the classification of admissible transformations 
and use it for revisiting the algebraic method of group classification. 
The determining equations for admissible transformations of the class~$\mathcal W$, 
its equivalence group~$G^\sim$ jointly with its equivalence algebra~$\mathfrak g^\sim$ 
and the determining equations for Lie symmetries of equations from this class 
are computed in Sections~\ref{sec:PreliminaryStudyOfAdmTrans}, \ref{sec:EquivGroup}
and~\ref{sec:DetEqsForLieSymsIbragrimovClass}, respectively. 
The results on the complete group classification of the class~$\mathcal W$ up to $G^\sim$-equivalence
and on the description of the equivalence groupoid~$\mathcal G^\sim$ of~$\mathcal W$
in terms of a generating set of its admissible transformations 
are collected in Section~\ref{sec:GenWaveEqsResultsOfClassifications} 
for convenience of further references.  
The obtained description of~$\mathcal G^\sim$ allows us to find 
a complete set of additional equivalence transformations~\cite{popo2004b} between 
listed $G^\sim$-equivalent Lie-symmetry extensions, 
which leads to the complete group classification of the class~$\mathcal W$ 
up to $\mathcal G^\sim$-equivalence as well. 
A generating set of the equivalence groupoid~$\mathcal G^\sim$ and 
an exhaustive list of $G^\sim$-inequivalent singular Lie-symmetry extensions are computed 
in Section~\ref{sec:GenWaveEqsEquivGroupoid} using the variety of techniques mentioned above. 
In Section~\ref{sec:ClassificationSubalgebrasGenWaveEqs}, 
we classify presumptively appropriate subalgebras of the entire infinite-dimensional equivalence algebra~$\mathfrak g^\sim$ 
whose projections may be qualified as maximal extensions of the kernel invariance algebra~$\mathfrak g^\cap$ 
of equations from the class~$\mathcal W$.
The complete preliminary group classification of the class~$\mathcal W$ 
is finished in Section~\ref{sec:GenWaveEqsRegularLieSymmetryExtensions}. 
It gives an exhaustive list of $G^\sim$-inequivalent regular Lie-symmetry extensions, 
which is in fact relevant for the $\mathcal G^\sim$-equivalence as well. 
In Section~\ref{sec:ConclusionGenWaveEqs} we analyze our approach to 
simultaneously classifying admissible transformations and Lie symmetries 
for equations from the class~$\mathcal W$.  
We also compare the obtained list of Lie-symmetry extensions with similar lists 
existing in the literature for related classes 
of (1+1)-dimensional nonlinear wave and elliptic equations.

\section[Classification of admissible transformations and group classification problem]
{Classification of admissible transformations\\ and group classification problem}\label{sec:GenWaveEqsTheory}

In the present paper, we use various concepts of the groupoid theory more intensively 
than in previous papers on admissible transformations in classes of differential equations 
and their group classification by the algebraic method, 
see e.g. \cite{basa2001a,bihl2012b,bihlo2016a,card2011a,gagn1993a,gaze1992a,kuru2018a,opan2017a,popo2006b,popo2010a}.
This is why below we relate existing notions and results from group analysis of differential equations 
to the groupoid-relevant terminology 
and then present procedures for classifying admissible transformations and Lie symmetries 
for non-normalized classes of differential equations.
The notation in this section differs from other parts of the present paper.

Consider a class $\mathcal L|_{\mathcal S}=\{\mathcal L_\theta\mid\theta\in\mathcal S\}$
of systems of differential equations~$\mathcal L_\theta$
for unknown functions $u=(u^1,\ldots,u^m)$ of independent variables $x=(x_1,\ldots,x_n)$
with the arbitrary-element tuple~$\theta=(\theta^1,\dots,\theta^k)$ running through a set~$\mathcal S$.
Here $\mathcal L_\theta$ denotes a system of differential equations of the form $L(x,u_{(\EqOrd)},\theta(x,u_{(\EqOrd)}))=0$
with a fixed tuple~$L$ of $\EqOrd$th order differential functions in~$u$ parameterized by~$\theta$.
We use the short-hand notation~$u_{(\EqOrd)}$ for the tuple of derivatives of~$u$ with respect to $x$ up to order~$\EqOrd$,
including~$u$ as the zeroth order derivatives.
The set~$\mathcal S$ is the solution set of an auxiliary system
of differential equations and inequalities in~$\theta$, 
where $\EqOrd$th order jet variables $(x,u_{(\EqOrd)})$ play the role of independent variables,
$S(x,u_{(\EqOrd)},\theta_{(q)})=0$ and, e.g., $\Sigma(x,u_{(\EqOrd)},\theta_{(q)})\ne0$
with the tuple $\theta_{(q)}$ constituted by the derivatives of~$\theta$ up to order $q$ with respect to $(x,u_{(\EqOrd)})$.
Up to the gauge equivalence of systems from~$\mathcal L|_{\mathcal S}$~\cite{popo2010a}, 
which is usually trivial, the correspondence $\theta\mapsto\mathcal L_\theta$
between~$\mathcal S$ and~$\mathcal L|_{\mathcal S}$ is bijective. 

The \emph{equivalence groupoid} $\mathcal G^\sim$ of the class~$\mathcal L|_{\mathcal S}$ 
is the small category with~$\mathcal L|_{\mathcal S}$ or, equivalently, with $\mathcal S$ 
as the set of objects and with the set of point transformations of~$(x,u)$, 
i.e., of (local) diffeomorphisms in the space with the coordinates~$(x,u)$, 
between pairs of systems from~$\mathcal L|_{\mathcal S}$ as the set of arrows. 
Specifically, 
\[
\mathcal G^\sim=\big\{\mathcal T=(\theta,\Phi,\tilde\theta)\mid
\theta,\tilde\theta\in\mathcal S,\,\Phi\in{\rm Diff}^{\rm loc}_{(x,u)}\colon\Phi_*\mathcal L_\theta=\mathcal L_{\tilde\theta}\big\}.
\]
Elements of~$\mathcal G^\sim$ are called \emph{admissible (point) transformations} within the class~$\mathcal L|_{\mathcal S}$. 
The pushforward of $\theta$ by $\Phi$ is defined by $\Phi_*\theta=\tilde\theta$ if $\Phi_*\mathcal L_\theta=\mathcal L_{\tilde\theta}$.

The definitions of all notions related to groupoids are obvious. 
Thus, the \emph{source} and \emph{target} maps ${\rm s},{\rm t}\colon\mathcal G^\sim\to\mathcal S$ 
are defined by ${\rm s}(\mathcal T)=\theta$ and ${\rm t}(\mathcal T)=\tilde\theta$ 
for any $\mathcal T=(\theta,\Phi,\tilde\theta)\in\mathcal G^\sim$, 
which gives rise to the groupoid notation $\mathcal G^\sim\rightrightarrows\mathcal S$, 
where the symbol~``$\rightrightarrows$'' denotes the pair of the source and target maps. 
Admissible transformations~$\mathcal T$ and $\mathcal T'=(\theta',\Phi',\tilde\theta')$ are \emph{composable} 
if $\tilde\theta=\theta'$, and then their \emph{composition} is $\mathcal T\star\mathcal T'=(\theta,\Phi'\circ\Phi,\tilde\theta')$, 
which defines a natural partial multiplication on~$\mathcal G^\sim$.
For any $\theta\in\mathcal S$, the \emph{unit at~$\theta$} is given by \smash{${\rm id}_\theta:=(\theta,{\rm id}_{(x,u)},\theta)$}, 
where \smash{${\rm id}_{(x,u)}$} is the identity transformation of~$(x,u)$. 
This defines the \emph{object inclusion map} $\mathcal S\ni\theta\mapsto{\rm id}_\theta\in\mathcal G^\sim$, 
i.e., the object set~$\mathcal S$ can be regarded 
to coincide with the base groupoid $\mathcal S\rightrightarrows\mathcal S:=\{{\rm id}_\theta\mid\theta\in\mathcal S\}$. 
The \emph{inverse} of~$\mathcal T$ is $\mathcal T^{-1}:=(\tilde\theta,\Phi^{-1},\theta)$, where 
$\Phi^{-1}$ is the inverse of~$\Phi$.
All required properties like associativity of the partial multiplication, 
its consistency with the source and target maps,  
natural properties of units and inverses are obviously satisfied. 

The \emph{${\rm s}$-fibre over $\theta\in\mathcal S$}, ${\rm s}^{-1}(\theta)\subseteq\mathcal G^\sim$, 
is the set of possible admissible transformations within~$\mathcal L|_{\mathcal S}$ with source at~$\theta$.
Similarly, the \emph{${\rm t}$-fibre over $\theta\in\mathcal S$}, ${\rm t}^{-1}(\theta)\subseteq\mathcal G^\sim$, 
is the set of possible admissible transformations within~$\mathcal L|_{\mathcal S}$ with target at~$\theta$.
The subset $\mathcal G(\theta,\tilde\theta):={\rm s}^{-1}(\theta)\cap{\rm t}^{-1}(\tilde\theta)$ of~$\mathcal G^\sim$ 
with $\theta,\tilde\theta\in\mathcal S$ 
corresponds to the set of point transformations mapping the system~$\mathcal L_\theta$ to the system~$\mathcal L_{\tilde\theta}$.
The \emph{vertex group} $\mathcal G_\theta:=\mathcal G(\theta,\theta)={\rm s}^{-1}(\theta)\cap{\rm t}^{-1}(\theta)$ 
is associated with the point symmetry (pseudo)group~$G_\theta$ of the system~$\mathcal L_\theta$, 
\[G_\theta=\big\{\Phi\in{\rm Diff}^{\rm loc}_{(x,u)}\mid(\theta,\Phi,\theta)\in\mathcal G_\theta\big\}.\] 
The \emph{orbit} $\mathcal O_\theta:={\rm t}\big({\rm s}^{-1}(\theta)\big)$ of~$\theta$ 
is the subset of values of the arbitrary-element tuple 
such that the corresponding systems in the class~$\mathcal L|_{\mathcal S}$ 
are similar to~$\mathcal L_\theta$ with respect to point transformations.

Denote by $\varpi$ and $\varpi^\EqOrd$ the projections from the space with the coordinates $(x,u_{(\EqOrd)},\theta)$ 
to the spaces with the coordinates $(x,u)$ and $(x,u_{(\EqOrd)})$, respectively.

The \emph{(usual) equivalence group}~$G^\sim$ of the class~$\mathcal L|_{\mathcal S}$ 
is the (pseudo)group of point transformations, $\mathscr T$, in the space with the coordinates $(x,u_{(\EqOrd)},\theta)$ 
that are projectable to the spaces with the coordinates $(x,u)$ and $(x,u_{(\EqOrd)})$ 
with $\varpi^\EqOrd_*\mathscr T$ being the standard prolongation of~$\varpi_*\mathscr T$ to $\EqOrd$th order jets $(x,u_{(\EqOrd)})$ 
and that map the class~$\mathcal L|_{\mathcal S}$ onto itself. 
The group~$G^\sim$ can be considered to act in the space with the coordinates $(x,u_{(\EqOrd')},\theta)$, where $\EqOrd'<\EqOrd$, 
if the arbitrary-element tuple depends only on $(x,u_{(\EqOrd')})$.
The notion of usual equivalence group can be generalized in several ways by weakening the specific restrictions on equivalence transformations, 
which are their projectability and their locality with respect to arbitrary elements. 
This gives the notions of generalized equivalence group and extended equivalence group, respectively, 
or the notions of extended generalized equivalence group if both restrictions are weakened simultaneously 
\cite{ivan2010a,mele1996a,opan2017a,popo2006b,popo2010a,vane2007a,vane2009a}.

The \emph{action groupoid} $\mathcal G^{G^\sim}\!\!$ of the equivalence group~$G^\sim$ 
of the class~$\mathcal L|_{\mathcal S}$, 
\[
\mathcal G^{G^\sim}\!\!:=\big\{(\theta,\varpi_*\mathscr T,\mathscr T_*\theta)\mid\theta\in\mathcal S,\,\mathscr T\in G^\sim\big\},
\] 
is a subgroupoid of the equivalence groupoid~$\mathcal G^\sim$ of this class, 
$\mathcal G^{G^\sim}\!\!\subseteq\mathcal G^\sim$, with the same object set~$\mathcal S$. 
We say that an admissible transformation~$\mathcal T$ in the class~$\mathcal L|_{\mathcal S}$ 
is generated by an equivalence transformation of this class if $\mathcal T\in\mathcal G^{G^\sim}\!\!$.

The \emph{fundamental groupoid}~$\mathcal G^{\rm f}$ of the class~$\mathcal L|_{\mathcal S}$ 
is the disjoint union of the vertex groups $\mathcal G_\theta$, $\theta\in\mathcal S$, 
$\mathcal G^{\rm f}=\sqcup_{\theta\in\mathcal S}\mathcal G_\theta$.
Since it has the same object set~$\mathcal S$ and the same vertex groups as~$\mathcal G^\sim$
and $\mathcal T^{-1}\mathcal G_{\tilde\theta}\mathcal T=\mathcal G_\theta$ 
for any $\mathcal T\in\mathcal G(\theta,\tilde\theta)$,
it is a normal subgroupoid of the equivalence groupoid~$\mathcal G^\sim$, 
which is also called the \emph{fundamental subgroupoid} of~$\mathcal G^\sim$. 
In other words, the groupoid~$\mathcal G^{\rm f}$ is constituted by 
the admissible transformations generated by point symmetry transformations 
of systems from~$\mathcal L|_{\mathcal S}$, 
\[
\mathcal G^{\rm f}:=\big\{(\theta,\Phi,\theta)\mid\theta\in\mathcal S,\,\Phi\in G_\theta\big\}. 
\]
The \emph{kernel point symmetry group} $G^\cap:=\cap_{\theta\in\mathcal S}G_\theta$ 
of systems from the class~$\mathcal L|_{\mathcal S}$, 
which consists of the common point symmetries of these systems,
can be associated with the normal subgroup~$\tilde G^\cap$ of~$G^\sim$ 
whose elements are obtained from elements of~$G^\cap$ 
by the standard prolongation to $\EqOrd'$th order jets $(x,u_{(\EqOrd')})$ 
and the trivial prolongation to the arbitrary-element tuple~$\theta$, 
$G^\cap=\varpi_*\tilde G^\cap$.
Thus, \smash{$\tilde G^\cap$} is the unfaithful subgroup of~$G^\sim$ 
under the action on~$\mathcal L|_{\mathcal S}$.   

The ${\rm s}$-, the ${\rm t}$- and the conjugation actions of~$G^\sim$ on~$\mathcal G^\sim$ 
respectively defined by 
\[
\mathcal T=(\theta,\Phi,\tilde\theta) \ \stackrel{\mathscr T}\mapsto \
(\mathscr T_*\theta,\Phi\circ(\varpi_*\mathscr T)^{-1},\tilde\theta),\ 
(\theta,(\varpi_*\mathscr T)\circ\Phi,\mathscr T_*\tilde\theta),\ 
(\mathscr T_*\theta,(\varpi_*\mathscr T)\circ\Phi\circ(\varpi_*\mathscr T)^{-1},\mathscr T_*\tilde\theta) 
\] 
for any $\mathscr T\in G^\sim$ and for any $\mathcal T=(\theta,\Phi,\tilde\theta)\in\mathcal G^\sim$, 
induce several equivalence relations on~$\mathcal G^\sim$
(${\rm s}$-$G^\sim$-equivalence, ${\rm t}$-$G^\sim$-equivalence, $G^\sim$-conjugation 
and $G^\sim$-equivalence).

\begin{definition}
Admissible transformations $\mathcal T^1=(\theta^1,\Phi^1,\tilde\theta^1)$ and $\mathcal T^2=(\theta^2,\Phi^2,\tilde\theta^2)$ 
in the class $\mathcal L|_{\mathcal S}$
are called \emph{conjugate} with respect to the equivalence group~$G^\sim$ of this class
if there exists $\mathscr T\in G^\sim$ such that
$\theta^2=\mathscr T_*\theta^1$, $\tilde\theta^2=\mathscr T_*\tilde\theta^1$
and $\Phi^2=(\varpi_*\mathscr T)\circ\Phi^1\circ(\varpi_*\mathscr T)^{-1}$.
Admissible transformations $\mathcal T^1$ and $\mathcal T^2$ 
are called \emph{$G^\sim$-equivalent} 
if there exist $\mathscr T,\tilde{\mathscr T}\in G^\sim$ such that
$\theta^2=\mathscr T_*\theta^1$, $\tilde\theta^2=\tilde{\mathscr T}_*\tilde\theta^1$
and $\Phi^2=(\varpi_*\tilde{\mathscr T})\circ\Phi^1\circ(\varpi_*\mathscr T)^{-1}$. 
If additionally \smash{$\tilde{\mathscr T}={\rm id}_{(x,u_{(\EqOrd)},\theta)}$} 
(resp.\ \smash{$\mathscr T={\rm id}_{(x,u_{(\EqOrd)},\theta)}$}), 
then the admissible transformations $\mathcal T^1$ and $\mathcal T^2$ 
are called \emph{${\rm s}$-$G^\sim$-equivalent} (resp.\ \emph{${\rm t}$-$G^\sim$-equivalent}).
\end{definition}

A different terminology was used in~\cite{popo2010a}, 
where the stronger equivalence relation of $G^\sim$-con\-ju\-ga\-tion of admissible transformations 
was called $G^\sim$-equivalence,
whereas in the present paper we use a weaker notion of $G^\sim$-equivalence of admissible transformations.
An admissible transformation in~$\mathcal L|_{\mathcal S}$ belongs to~$\mathcal G^{G^\sim}\!\!$ if and only if
this admissible transformation is $G^\sim$-equivalent in the above sense
to the identity admissible transformation with the same source system.

Since the fundamental groupoid is a normal subgroupoid of~$\mathcal G^\sim$,
the Frobenius product 
\[
\mathcal G^{\rm f}\star\mathcal G^{G^\sim}\!\!=\big\{\mathcal T\star\mathcal T'\mid
\mathcal T\in\mathcal G^{\rm f},\,\mathcal T'\in\mathcal G^{G^\sim}\!\!,\,{\rm t}(\mathcal T)={\rm s}(\mathcal T')\big\}
\] 
is a subgroupoid of~$\mathcal G^\sim$, 
which coincides with the image of~$\mathcal G^{\rm f}$ 
under the ${\rm s}$-action (resp.\ the ${\rm t}$-action) of~$G^\sim$ on~$\mathcal G^\sim$.

There are several kinds of classes of differential equations 
that are convenient for group classification by the algebraic method 
in different ways \cite{bihl2012b,kuru2018a,popo2006b,popo2010a}. 

\begin{definition}
The class~$\mathcal L|_{\mathcal S}$ is called
\emph{normalized} if $\mathcal G^{G^\sim}\!\!=\mathcal G^\sim$. 
It is called \emph{semi-normalized} if $\mathcal G^{\rm f}\star\mathcal G^{G^\sim}\!\!=\mathcal G^\sim$. 
Depending on the kind of the equivalence group~$G^\sim$ 
(the usual, the generalized, the extended or the extended generalized equivalence group of~$\mathcal L|_{\mathcal S}$), 
we distinguish the \mbox{(semi-)}normalization in the usual, the generalized, the extended or the extended generalized sense. 
\end{definition}

\begin{definition}
Let $\mathcal G^H$ be the action groupoid of a subgroup $H$ of~$G^\sim$. 
Suppose that a family
$N_{\mathcal S}:=\{N_\theta<G_\theta\mid\theta\in\mathcal S\}$ 
of subgroups of the point symmetry groups~$G_\theta$
with the associated subgroups $\mathcal N_\theta:=\{(\theta,\Phi,\theta)\mid\theta\in\mathcal S,\,\Phi\in N_\theta\}$ 
of the vertex groups~$\mathcal G_\theta$ 
satisfies the property $\mathcal T\mathcal N_\theta=\mathcal N_{\mathcal T\theta}\mathcal T$
for any $\theta\in\mathcal S$ and for any $\mathcal T\in\mathcal G^H$ with ${\rm s}(\mathcal T)=\theta$.
Then the Frobenius product 
\[
\mathcal N^{\rm f}\star\mathcal G^H=\big\{\mathcal T\star\mathcal T'\mid
\mathcal T\in\mathcal N^{\rm f},\,\mathcal T'\in\mathcal G^H,\,{\rm t}(\mathcal T)={\rm s}(\mathcal T')\big\},
\]
with $\mathcal N^{\rm f}:=\sqcup_{\theta\in\mathcal S}\mathcal N_\theta$  
is a subgroupoid of~$\mathcal G^\sim$, 
which coincides with the image of~$\mathcal N^{\rm f}$ 
under the ${\rm s}$-action (resp.\ the ${\rm t}$-action) of~$H$ on~$\mathcal G^\sim$.
If $\mathcal N^{\rm f}\star\mathcal G^H=\mathcal G^\sim$, 
we call the class~$\mathcal L|_{\mathcal S}$ 
\emph{semi-normalized with respect to the subgroup $H$ of~$G^\sim$ 
and the family $N_{\mathcal S}$ of subgroups of the point symmetry groups}.
If additionally $\mathcal G^H\cap\mathcal N^{\rm f}=\mathcal S\rightrightarrows\mathcal S$, 
then the class~$\mathcal L|_{\mathcal S}$ is called
\emph{disjointedly semi-normalized with respect to the subgroup $H$ of~$G^\sim$ 
and the family $N_{\mathcal S}$ of subgroups of the point symmetry groups}.
\end{definition}

If $H=G^\sim$ and $N_\theta=\{{\rm id}_{(x,u)}\}$ for any $\theta\in\mathcal S$, 
then a class (disjointedly) semi-normalized with respect to the group $H$
and the family $N_{\mathcal S}$ is literally normalized.
If $H=G^\sim$ and $N_\theta=G_\theta$ for any $\theta\in\mathcal S$, 
then a class semi-normalized with respect to the group $H$
and the family $N_{\mathcal S}$ is literally semi-normalized.
It is obvious that a normalized class is semi-normalized.

The most powerful method for describing admissible transformations 
within a class of differential equations is still the \emph{direct method}, 
which is based on the definition of admissible transformations.  
Applying this method to the class~$\mathcal L|_{\mathcal S}$,
we consider an arbitrary pair $(\theta,\tilde\theta)\in\mathcal S\times\mathcal S$ 
and a point transformation in the space with coordinates $(x,u)$ 
of the most general form $\Phi$: $\tilde x=X(x,u)$, $\tilde u=U(x,u)$ 
with nonzero Jacobian $|\p(X,U)/\p(x,u)|$
and assume that $\Phi_*\mathcal L_\theta=\mathcal L_{\smash{\tilde\theta}}$.
Expressing the required derivatives of~$\tilde u$ with respect to~$\tilde x$ 
in terms of derivatives of~$u$ with respect to~$x$ using the chain rule, 
we substitute the derived expressions into the system~$\mathcal L_{\smash{\tilde\theta}}$, 
obtaining the system~$(\Phi^{-1})_*\mathcal L_{\smash{\tilde\theta}}$, 
which should be identically satisfied by solutions of the system~$\mathcal L_\theta$. 
To take into account the last condition, 
we fix a ranking of derivatives of~$u$ that is consistent with the structure of~$\mathcal L_\theta$, 
substitute the expressions for the leading derivatives of~$u$ 
in view of the system~$\mathcal L_\theta$ and its differential consequences 
into~$(\Phi^{-1})_*\mathcal L_{\smash{\tilde\theta}}$ 
and split the resulting system with respect to the involved parametric derivatives of~$u$. 
As a result, we obtain a system that implies both the expression of~$\tilde\theta$ via~$(\theta,X,U)$ 
and the system~${\rm DE}$ of determining equations for components of~$\Phi$. 
The system~${\rm DE}$ involves only the arbitrary-element tuple~$\theta$ (resp.\ $(\Phi^{-1})_*\tilde\theta$).

Assuming~$\theta$ varying within~$\mathcal S$ 
and splitting with respect to derivatives of~$\theta$ in view of the auxiliary system defining the set~$\mathcal S$,%
\footnote{%
This means that we set a ranking among the derivatives of~$\theta$ 
that is consistent with structure of the auxiliary system, 
solve this system jointly with its differential consequences for the leading derivatives of~$\theta$, 
substitute the derived expressions into~${\rm DE}$ and 
split the obtained system with respect to the parametric derivatives of~$\theta$.
}
we get the system~${\rm DE}^\sim$ of determining equations for the $(x,u)$-components 
of usual equivalence transformations. 
After finding the $(x,u)$-components via the integration of~${\rm DE}^\sim$, 
the $\theta$-component of usual equivalence transformations is obtained from the above expression for~$\tilde\theta$. 
As a result, we construct the usual equivalence group~$G^\sim$ of the class~$\mathcal L|_{\mathcal S}$.

If the solution sets of~${\rm DE}$ and~${\rm DE}^\sim$ coincide, 
then $\mathcal G^\sim=\mathcal G^{G^\sim}\!\!$, i.e., the class~$\mathcal L|_{\mathcal S}$ is normalized, 
which completes the description of the equivalence groupoid~$\mathcal G^\sim$. 
The first example of such description in the literature was given for 
the normalized class of generalized Burgers equations of the form $u_t+uu_x+f(t,x)u_{xx}=0$ 
in~\cite{king1991c} although the normalization property was implicitly used there. 

Otherwise, the class~$\mathcal L|_{\mathcal S}$ is not normalized, 
and integrating~${\rm DE}$, which can be carried out up to $G^\sim$-equivalence of admissible transformations,   
is a complicated problem. 
A number of various techniques can be used to simplify the solution of this problem. 
Below we present some of them.

\medskip\par\noindent 
{\it Partition of classes.} 
Let the set~$\mathcal S$ be represented as a disjoint union of its subsets, 
$\mathcal S=\sqcup_{\gamma\in\Gamma}\mathcal S_\gamma$ with some index set~$\Gamma$, 
where each of the subsets~$\mathcal S_\gamma$ is singled out from~$\mathcal S$ 
by additional constraints, which are differential equations or differential inequalities. 
The partition of~$\mathcal S$ is equivalent to the partition 
of the class~$\mathcal L|_{\mathcal S}$ into the subclasses $\mathcal L|_{\mathcal S_\gamma}$ 
with $\gamma$ running through~$\Gamma$, 
$\mathcal L|_{\mathcal S}=\sqcup_{\gamma\in\Gamma}\mathcal L|_{\mathcal S_\gamma}$. 
Denote by~$G^\sim_\gamma$ and by~$\mathcal G^\sim_\gamma$ 
the equivalence group and the equivalence groupoid of the subclass~$\mathcal L|_{\mathcal S_\gamma}$, 
respectively.

If systems from different subclasses of the partition are not related by point transformations, 
then the partition of the class~$\mathcal L|_{\mathcal S}$ induces 
the partition $\mathcal G^\sim=\sqcup_{\gamma\in\Gamma}\mathcal G^\sim_\gamma$ of its equivalence groupoid.
In general, the structure of the groupoid of a subclass may even be more complicated 
than that of the entire class. 
This is why a preliminary analysis of the system~${\rm DE}$ is needed  
for an appropriate partition of the class~$\mathcal L|_{\mathcal S}$, 
where for any $\gamma\in\Gamma$ the structure of~$\mathcal G^\sim_\gamma$ is simpler than the structure of~$\mathcal G^\sim$. 
Then we can find the subgroupoids~$\mathcal G^\sim_\gamma$ separately and unite them. 
The best kind of partitions is given by partitions into normalized subclasses,  
for which \smash{$\mathcal G^\sim_\gamma=\mathcal G^{G^\sim_\gamma}$} and thus $\mathcal G^\sim=\sqcup_{\gamma\in\Gamma}\mathcal G^{G^\sim_\gamma}$ 
\cite{popo2006b,popo2010a}. 

There are several generalizations of the partition technique. 

Disjoint subclasses may be related by point transformations, 
and thus the partition of the class~$\mathcal L|_{\mathcal S}$ 
into the subclasses~$\mathcal L|_{\mathcal S_\gamma}$
does not induce the partition of the equivalence groupoid~$\mathcal G^\sim$ 
into the equivalence groupoids~$\mathcal G^\sim_\gamma$. 
Consider a simple situation, where we have a partition of~$\mathcal L|_{\mathcal S}$ 
into normalized subclasses~$\mathcal L|_{\mathcal S_\gamma}$, $\gamma\in\Gamma$, 
and for some fixed $\gamma_0\in\Gamma$
and for each $\gamma\in\Gamma$ there exists a point transformation~$\Phi_\gamma$ 
that maps~$\mathcal L|_{\mathcal S_\gamma}$ onto~$\mathcal L|_{\mathcal S_{\gamma_0}}$. 
We can assume that $\Phi_{\gamma_0}={\rm id}_{(x,u)}$. 
In fact, for any $\gamma\in\Gamma$ the normalization of $\mathcal L|_{\mathcal S_\gamma}$ 
follows from the normalization of $\mathcal L|_{\mathcal S_{\gamma_0}}$ and the existence of~$\Phi_\gamma$.    
Then 
\begin{gather}\label{eq:GroupoidsWithMixedTrans}
\mathcal G^\sim=\Big\{
\big((\Phi_{\gamma}^{-1})_*\theta,\Phi_{\gamma'}^{-1}\circ(\varpi\mathscr T)\circ\Phi_{\gamma},(\Phi_{\gamma'})_*(\mathscr T_*\theta)\big)
\,\big|\,\theta\in\mathcal S_{\gamma_0},\,\mathscr T\in G^\sim_{\gamma_0}, \gamma,\gamma'\in\Gamma
\Big\}.
\end{gather}
This structure is admitted by the groupoid of the class~\eqref{eq:GenWaveEqsSpecial},
where the parameter~$\sigma$ plays the role of~$\gamma$, see Remark~\ref{rem:StructuteOfGroupoidOfW12} below.

The condition that the class~$\mathcal L|_{\mathcal S}$ is a disjoint union of appropriate subclasses 
can be weakened by allowing a proper intersection of subclasses in the union. 
Thus, in \cite{vane2007a} a class of variable-coefficient reaction--diffusion equations with power nonlinearities
was represented as a non-disjoint union of normalized subclasses, 
and its groupoid was proved to be constituted by the admissible transformations for the action groupoids of the subclasses 
and the compositions of such composable admissible transformations from the action groupoids of different subclasses 
with nonempty intersections. 

\medskip\par\noindent 
{\it Construction of generalized/extended/extended generalized equivalence group.} 
If the class~$\mathcal L|_{\mathcal S}$ is not normalized in the usual sense, 
one can try to describe its equivalence groupoid~$\mathcal G^\sim$  
via finding a generalized counterpart of the usual equivalence group~$G^\sim$, 
with respect to which the class~$\mathcal L|_{\mathcal S}$ is normalized 
in the corresponding sense~\cite{opan2017a,popo2010a}.

\medskip\par\noindent 
{\it Mappings between classes.}
Suppose that there are a class~$\mathcal L'|_{\mathcal S'}$ of (systems of) differential equations 
with the same independent and dependent variables $x$ and~$u$ as systems from the class~$\mathcal L|_{\mathcal S}$
and a family of point transformations $\mathcal F=\{\Psi^\theta\mid\theta\in\mathcal S\}$
such that $\Psi^\theta_*\mathcal L_\theta\in\mathcal L'|_{\mathcal S'}$ for any $\theta\in\mathcal S$, 
and for any $\theta'\in\mathcal S'$ there exists $\theta\in\mathcal S$ with $\Psi^\theta_*\mathcal L_\theta=\mathcal L'_{\theta'}$. 
Then we say that the family~$\mathcal F$ generates the mapping~$\mathcal F_*$ 
from the class~$\mathcal L|_{\mathcal S}$ onto the class~$\mathcal L'|_{\mathcal S'}$, where
$\mathcal F_*\mathcal L_\theta:=\Psi^\theta_*\mathcal L_\theta$, 
or, equivalently, the mapping~$\mathcal F_*\colon\mathcal S\to\mathcal S'$ 
with $\mathcal F_*\theta=\theta'$ if $\Psi^\theta_*\mathcal L_\theta=\mathcal L'_{\theta'}$; 
see \cite{popo2010a,vane2009a} for the first explicit discussions of mappings between the classes.
Via~$\mathcal F_*$, the family~$\mathcal F$ also induces the mapping 
from the equivalence groupoid~$\mathcal G^\sim$ of the class~$\mathcal L|_{\mathcal S}$
to the equivalence groupoid~$\mathcal G^\sim{}'$ of the class~$\mathcal L'|_{\mathcal S'}$ 
that is defined by 
\[
\mathcal G^\sim\ni\mathcal T=(\theta,\Phi,\tilde\theta) \ \stackrel{\mathcal F_*\mathstrut}\mapsto \
\big((\Psi^\theta)_*\theta,\Psi^{\tilde\theta}\circ\Phi\circ(\Psi^\theta)^{-1},(\Psi^{\tilde\theta})_*\tilde\theta\big)\in\mathcal G^\sim{}'.
\] 
We will denote this mapping by the same symbol as the corresponding mapping between classes. 
The mapping $\mathcal F_*\colon\mathcal G^\sim\to\mathcal G^\sim{}'$ is in fact a groupoid homomorphism 
since 
\[
\mathcal F_*(\mathcal T_1\star\mathcal T_2)=(\mathcal F_*\mathcal T_1)\star(\mathcal F_*\mathcal T_2),\quad
\mathcal F_*({\rm id}_\theta)={\rm id}_{\mathcal F_*\theta},\quad
\mathcal F_*(\mathcal T^{-1})=(\mathcal F_*\mathcal T)^{-1}
\] 
for any $\mathcal T_1,\mathcal T_2\in\mathcal G^\sim$, 
any $\theta\in\mathcal S$ and any $\mathcal T\in\mathcal G^\sim$. 
Moreover, this homomorphism is surjective. 
Indeed, take any $\mathcal T'=(\theta',\Phi',\tilde\theta')\in\mathcal G^\sim{}'$. 
By the choice of the family~$\mathcal F$, there exist $\theta,\tilde\theta\in\mathcal S$ 
such that $\mathcal F_*(\theta)=\theta'$ and $\mathcal F_*(\tilde\theta)=\tilde\theta'$. 
The triple $\mathcal T=(\theta,\Phi,\tilde\theta)$ with $\Phi=(\Psi^{\tilde\theta})^{-1}\circ\Phi'\circ\Psi^\theta$
belongs to $\mathcal G^\sim$, and $\mathcal F_*\mathcal T=\mathcal T'$.

Under an appropriate choice of~$\mathcal F$, 
the structure of~$\mathcal G^\sim{}'$ is simpler than the structure of~$\mathcal G^\sim$. 
Then after the study of~$\mathcal G^\sim{}'$, we can pull back obtained results 
with respect to~$\mathcal F$ and thus get results on~$\mathcal G^\sim$. 
For example, an appropriate partition of a class into its subclasses can become evident 
only after a mapping of this class to another class~\cite{opan2019b}. 
The complete group classification of~$\mathcal L|_{\mathcal S}$ up to $\mathcal G^\sim$-equivalence 
can easily be derived from the analogous classification of~$\mathcal L|_{\mathcal S'}$ 
using the pullback by~$\mathcal F$. 
In this way, the complete group classifications 
of the classes of (1+1)-dimensional Kolmogorov and Fokker--Planck equations 
modulo the general point equivalence were obtained 
from the classical group classification of the class of linear heat equations with potentials, 
see Corollaries~7 and~17 in~\cite{popo2008a}.
Using mappings between classes for deriving complete group classifications 
up to $G^\sim$-equivalence needs a more delicate consideration~\cite{vane2009a}.

An important particular case of mappings between classes 
is given by mappings of classes to their subclasses 
that are generated by subgroups of the corresponding equivalence groups. 
Let $\mathcal S'$ be a subset of~$\mathcal S$ that is singled out from~$\mathcal S$ 
by additional auxiliary differential equations or inequalities 
with respect to the arbitrary-element tuple~$\theta$. 
Thus, $\mathcal L|_{\mathcal S'}$ is a subclass of the class~$\mathcal L|_{\mathcal S}$, 
and its equivalence groupoid~$\mathcal G^\sim{}'$ is a subgroupoid 
of the equivalence groupoid~$\mathcal G^\sim$ of the class~$\mathcal L|_{\mathcal S}$.
Suppose that for some subgroup~$H$ of~$G^\sim$
each orbit of the action groupoid~$\mathcal G^H$ 
intersects $\mathcal S'$ by a single~$\theta'$.
Denote by $\Psi^\theta$ the point transformation $\varpi_*\mathscr T$ 
with $\mathscr T\in H$ such that $\mathscr T_*\theta\in\mathcal S'$. 
Then the family $\mathcal F=\{\Psi^\theta\mid\theta\in\mathcal S\}$ 
satisfies the required conditions to generate the corresponding mapping
$\mathcal F_*\colon\mathcal L|_{\mathcal S}\to\mathcal L|_{\mathcal S'}$
and the corresponding surjective homomorphism
$\mathcal F_*\colon\mathcal G^\sim\to\mathcal G^\sim{}'$.
In practice, such mappings are realized via gauging arbitrary elements 
by equivalence transformations. 

\medskip\par\noindent 
{\it Conditional equivalence groups.}
The equivalence group of a subclass of the class~$\mathcal L|_{\mathcal S}$ is called 
a conditional equivalence group of this class. 
The conditional equivalence group~$G^\sim{}'$ of~$\mathcal L|_{\mathcal S}$ 
associated with the subclass~$\mathcal L|_{\mathcal S'}$ is called maximal if 
for any subclass of~$\mathcal L|_{\mathcal S}$ properly containing~$\mathcal L|_{\mathcal S'}$, 
its equivalence group does not contain~$G^\sim{}'$. 
The equivalence group~$G^\sim$ of the entire class~$\mathcal L|_{\mathcal S}$ 
acts on subclasses of~$\mathcal L|_{\mathcal S}$ simultaneously with their equivalence groups, 
and the set of maximal conditional equivalence groups of~$\mathcal L|_{\mathcal S}$ is closed under 
this action. 
Hence maximal conditional equivalence groups of~$\mathcal L|_{\mathcal S}$ can be classified 
modulo $G^\sim$-equivalence. 
This classification can be a step in the description of~$\mathcal G^\sim$ \cite{popo2010a}. 
For some classes, this lone step gives the complete description 
of the corresponding equivalence groupoids~\cite{opan2019a,opan2017a}.
The classification of maximal conditional equivalence groups of the class~$\mathcal L|_{\mathcal S}$ 
can be combined with a partition of~$\mathcal L|_{\mathcal S}$ into subclasses 
that is consistent with the structure of the set of such groups. 
Generalized versions of conditional equivalence groups also can be considered~\cite{opan2019a,vane2017a}.

\medskip\par\noindent 
{\it Generating set of admissible transformations.}
A set $\mathcal B=\{\mathcal T_\gamma\in\mathcal G^\sim\mid\gamma\in\Gamma\}$, 
where $\Gamma$ is an index set, 
is called a generating set of admissible transformations for the class~$\mathcal L|_{\mathcal S}$ 
up to $G^\sim$-equivalence if any admissible transformation of this class 
can be represented as the composition of a finite number of elements of the set~$\mathcal B\cup\hat{\mathcal B}\cup\mathcal G^{G^\sim}\!\!$, 
where $\hat{\mathcal B}$ is the set of inverses of admissible transformations from~$\mathcal B$, 
$\hat{\mathcal B}:=\{\mathcal T^{-1}\mid\mathcal T\in\mathcal B\}$.
To make the set~$\mathcal B$ as small as possible, it is natural 
to choose~$\mathcal B$ as a subset of $\mathcal G^\sim\setminus\mathcal G^{G^\sim}\!\!$. 
Moreover, if a canonical representative in a coset of $G^\sim$-equivalent admissible transformations can be assigned, 
only this representative should be selected from the coset for including in~$\mathcal B$.  

We call admissible transformations~$\mathcal T_1$ and~$\mathcal T_2$ for the class~$\mathcal L|_{\mathcal S}$
\emph{composable up to $G^\sim$-equiv\-a\-lence} if 
an admissible transformation that is $G^\sim$-equivalent to~$\mathcal T_1$ is composable with~$\mathcal T_2$ 
or, equivalently, if 
$\mathcal T_1$ is composable with an admissible transformation that is $G^\sim$-equivalent to~$\mathcal T_2$. 
It happens if and only if there exists $\mathscr T\in G^\sim$ such that
$\mathscr T_*\big({\rm t}(\mathcal T_1)\big)={\rm s}(\mathcal T_2)$.
We call a subset~$\mathcal B$ of~$\mathcal G^\sim$ \emph{self-consistent with respect to $G^\sim$-equivalence}
if the composability of elements of~$\mathcal B\cup\hat{\mathcal B}$ up to $G^\sim$-equivalence implies 
their usual composability. (The converse implication always holds.) 
A~necessary condition for the self-consistency of~$\mathcal B$ is the equality
\[
\big({\rm s}(\mathcal B)\times{\rm t}(\mathcal B)\big)\cap
({\rm s}\times{\rm t})\big(\mathcal G^{G^\sim}\!\setminus\mathcal G^{\rm f}\big)=\varnothing
\]
meaning that there is no element of the action groupoid~$\mathcal G^{G^\sim}\!$ 
with different source and target in ${\rm s}(\mathcal B)\times{\rm t}(\mathcal B)$.
If $\mathcal B$ is a self-consistent generating set of~$\mathcal G^\sim$ with respect to $G^\sim$-equivalence, 
then any element of~$\mathcal G^\sim$ is $G^\sim$-equivalent 
to the composition of a finite number of elements of~$\mathcal B\cup\hat{\mathcal B}$. 
More specifically, then for any $\mathcal T\in\mathcal G^\sim$ 
there exist $n\in\mathbb N\cup\{0\}$, $\mathcal T_0,\mathcal T_{n+1}\in\mathcal G^{G^\sim}\!$ 
and $\mathcal T_1,\dots,\mathcal T_n\in\mathcal B\cup\hat{\mathcal B}$ such that 
\[\mathcal T=\mathcal T_0\star\mathcal T_1\star\dots\star\mathcal T_n\star\mathcal T_{n+1}.\] 
If additionally $\mathcal B$ is minimal up to $G^\sim$-equivalence, 
then this representation for~$\mathcal T$ is unique up to the transformations 
\[
\tilde{\mathcal T}_0=\mathcal T_0\star\breve{\mathcal T},\quad 
\tilde{\mathcal T}_{n+1}=\mathcal T_n^{-1}\star\dots\star\mathcal T_1^{-1}\star\breve{\mathcal T}^{-1}\star\mathcal T_1\star\dots\star\mathcal T_n\star\mathcal T_{n+1}
\] 
with an arbitrary $\breve{\mathcal T}\in\mathcal G_{{\rm t}(\mathcal T_0)}$ if $n>0$  
and under setting $\mathcal T_{n+1}={\rm id}_{{\rm t}(\mathcal T_0)}$ if $n=0$.

\medskip\par\noindent 
{\it Furcate splitting.}  
This technique was suggested in~\cite{niki2001a}
as a refinement of the direct method of group classification. 
Its essence is a special way of handling the system of determining equations 
for Lie symmetries of systems from the class under study 
depending on the possible number of independent constraints on values of~$\theta$
that are induced by this system.
This is why it can be extended to descriptions of other objects 
that are related to systems from classes of differential equations 
and are computed via solving certain systems of determining equations, 
including conservation laws~\cite{bihlo2019b}, conditional equivalence groups~\cite{opan2017a} 
and generating sets of admissible transformations 
(see footnote~\ref{fnt:MethodOfFurcateSplittingForGenerating SetsOfAdmTrans} below).
The method of furcate splitting can further be enhanced 
by involving algebraic techniques~\cite{bihl2019a,bihlo2019b}.

\medskip\par\noindent 
{\it (Bijective) functors between groupoids.} 
Suppose that we construct an isomorphism between~$\mathcal G^\sim$ and 
the equivalence groupoid~$\tilde{\mathcal G}^\sim$ of a class~$\tilde{\mathcal L}|_{\tilde{\mathcal S}}$, 
and the description of~$\tilde{\mathcal G}^\sim$ has been known or 
it is easier or more convenient to describe the groupoid~$\tilde{\mathcal G}^\sim$ than the original groupoid~$\mathcal G^\sim$. 
For the latter option, for example, some computation techniques that are relevant for~$\tilde{\mathcal G}^\sim$ 
might be inapplicable to~$\mathcal G^\sim$.  
Here it is not necessary for the classes~$\mathcal L|_{\mathcal S}$ and~$\tilde{\mathcal L}|_{\tilde{\mathcal S}}$
to be related by a family of point transformations. 
Then the description of~\smash{$\tilde{\mathcal G}^\sim$} implies the description of~$\mathcal G^\sim$.

The technique involving bijective functors 
is effectively applied to the study of equivalence groupoids of classes of differential equations
for the first time in the present paper. 
It is still unclear whether the construction of non-bijective functors 
from~$\mathcal G^\sim$ to~\smash{$\tilde{\mathcal G}^\sim$} or conversely is useful as well. 

\medskip\par
The infinitesimal counterparts of the (pseudo)groups~$G^\theta$, $G^\cap$ and~$G^\sim$ 
are the Lie algebras~$\mathfrak g^\theta$, $\mathfrak g^\cap$ and~$\mathfrak g^\sim$ 
that are constituted by the generators of local one-parameter subgroups of the corresponding groups 
and which are called  
the \emph{maximal Lie invariance algebras} of the systems~$\mathcal L_\theta$,
the \emph{kernel invariance algebra} of systems from the class~$\mathcal L|_{\mathcal S}$ and 
the \emph{equivalence algebra} of the class~$\mathcal L|_{\mathcal S}$, respectively. 
Note that $\mathfrak g^\cap=\cap_{\theta\in\mathcal S}\mathfrak g_\theta$.

The \emph{(complete) group classification problem} for the class~$\mathcal L|_{\mathcal S}$ 
up to $G^\sim$-equivalence (resp.\ up to $\mathcal G^\sim$-equivalence)
is to find $\mathfrak g^\cap$ and an exhaustive list 
of $G^\sim$-inequivalent (resp.\ $\mathcal G^\sim$-inequivalent) values of~$\theta$ 
jointly with the corresponding algebras~$\mathfrak g_\theta$ 
for which $\mathfrak g_\theta\ne\mathfrak g^\cap$.
An admissible transformation from~$\mathcal G^\sim\setminus\mathcal G^{G^\sim}\!$ 
between systems from the final group classification list modulo $G^\sim$-equivalence 
is called an \emph{additional equivalence transformation}. 
Supplementing the group classification up to $G^\sim$-equivalence 
with the complete set of additional equivalence transformations results in 
the group classification up to $\mathcal G^\sim$-equivalence.

Any version of the algebraic method of group classification in fact reduces to 
the classification, modulo $G^\sim$-equivalence, 
of certain subalgebras contained by the span 
$\mathfrak g_\spanindex:=\langle\mathfrak g_\theta, \theta\in\mathcal S\rangle$.
The efficiency of using the algebraic method depends on additional conditions 
satisfied by the class~$\mathcal L|_{\mathcal S}$, 
in particular, how consistent the span $\mathfrak g_\spanindex$ is with $G^\sim$-equivalence
\cite[Section~12]{bihl2012b}. 

Normalized classes are the most convenient for group classification by the algebraic method. 
If the class~$\mathcal L|_{\mathcal S}$ is normalized, 
then $\mathfrak g_\spanindex\subseteq\varpi_*\mathfrak g^\sim$, 
and the solution of the complete group classification problem for this class reduces 
to the classification of appropriate subalgebras of~$\mathfrak g^\sim$ 
whose pushforwards by~$\varpi$ can be qualified as 
the maximal Lie invariance algebras of systems from~$\mathcal L|_{\mathcal S}$. 
Since then $G^\sim$-equivalence coincides with $\mathcal G^\sim$-equivalence, 
it is obvious that there are no additional equivalence transformations 
between Lie-symmetry extensions classified modulo $G^\sim$-equivalence.
Moreover, it is inessential which of the two equivalences is used in the course 
of the classification. 
The above is also true if the class~$\mathcal L|_{\mathcal S}$ 
is semi-normalized with respect to a subgroup $H$ of~$G^\sim$ 
and a family $N_{\mathcal S}$ of subgroups of the point symmetry groups, 
and additionally the subgroup~$H$ and the family $N_{\mathcal S}$ are known. 
In this case, we call the class~$\mathcal L|_{\mathcal S}$ \emph{definitely semi-normalized},
and looking for $G^\sim$-inequivalent subalgebras of $\mathfrak g^\sim$ 
is substituted in the algebraic method
by looking for $H$-inequivalent subalgebras 
of the infinitesimal counterpart~$\mathfrak h$ of~$H$~\cite{kuru2018a}. 
The pure semi-normalization of~$\mathcal L|_{\mathcal S}$ 
at least guarantees that the group classification~$\mathcal L|_{\mathcal S}$ up to $G^\sim$-equivalence 
coincides with that up to $\mathcal G^\sim$-equivalence.

If the class~$\mathcal L|_{\mathcal S}$ is not normalized, 
then some Lie-symmetry extensions within this class
are not related to subalgebras of its equivalence algebra~$\mathfrak g^\sim$.

\begin{definition}\label{def:RegularAndSingularLieSymExtensions}
We call the maximal Lie invariance algebra~$\mathfrak g_\theta$ 
of a system~$\mathcal L_\theta$ from the class~$\mathcal L|_{\mathcal S}$ 
\emph{regular} in this class if there exists a subalgebra~$\mathfrak s$ of~$\mathfrak g^\sim$ 
such that $\mathfrak g_\theta=\varpi_*\mathfrak s$, and \emph{singular} in~$\mathcal L|_{\mathcal S}$ otherwise.  
\end{definition}

If $\mathfrak g_\theta=\mathfrak g^\cap$, 
then the maximal Lie invariance algebra~$\mathfrak g_\theta$ is regular in~$\mathcal L|_{\mathcal S}$ 
since $\mathfrak g^\cap\subseteq\varpi_*\mathfrak g^\sim$.%
\footnote{%
More specifically, the kernel invariance algebra~$\mathfrak g^\cap$ is naturally embedded into~$\mathfrak g^\sim$ 
via the standard prolongation of its elements to $u_{(\EqOrd)}$ in view of the contact structure 
and the trivial prolongation to the arbitrary elements~$\theta$~\cite{card2011a}.
}
If $\mathfrak g_\theta\ne\mathfrak g^\cap$ 
and $\mathfrak g_\theta$ is a regular (resp.\ singular) maximal Lie invariance algebra in the class~$\mathcal L|_{\mathcal S}$, 
then we say that the pair constituted by the value of the arbitrary-element tuple 
and the algebra~$\mathfrak g_\theta$ presents a regular (resp.\ singular) 
Lie-symmetry extension of~$\mathfrak g^\cap$ in this class.  

It is obvious that the sets of regular and singular Lie-symmetry extensions are separately invariant 
with respect to the action of~$G^\sim$ but in general this is not the case for the action of~$\mathcal G^\sim$.
In other words, regular Lie-symmetry extensions are $G^\sim$-inequivalent to singular ones
but may be $\mathcal G^\sim$-equivalent to them, 
see Remark~\ref{rem:GenWaveClassRegularCaseEquivToSingularCases} 
as an example on this claim. 
This also means that the Lie-symmetry extension for a system~$\mathcal L_\theta$ with $\theta$ satisfying 
${\rm s}^{-1}(\theta)\ne{\rm s}^{-1}(\theta)\cap\mathcal G^{G^\sim}$ or, equivalently, 
${\rm t}^{-1}(\theta)\ne{\rm t}^{-1}(\theta)\cap\mathcal G^{G^\sim}$ 
(i.e., for a system being the source or the target of an admissible transformation 
that is not generated by an equivalence transformation in~$\mathcal L|_{\mathcal S}$)
may also be regular. 
This is definitely the case if the quotient of the set~${\rm s}^{-1}(\theta)$ 
with respect to ${\rm t}$-$G^\sim$-equivalence of admissible transformations is discrete, 
as it appears for  the regular Cases~\ref{case14d} and~\ref{case19d} 
of Table~\ref{tab:GenWaveEqsExtensions} below. 

Using Definition~\ref{def:RegularAndSingularLieSymExtensions}, 
we suggest the following procedure of group classification 
for a non-normalized class~$\mathcal L|_{\mathcal S}$ of differential equations 
within the framework of the algebraic method. 
\begin{enumerate}
\item
Describe the equivalence groupoid~$\mathcal G^\sim$ of the class~$\mathcal L|_{\mathcal S}$ 
up to $G^\sim$-equivalence, e.g., via constructing a generating set~$\mathcal B$ of admissible transformations. 
The further consideration simplifies if the set~$\mathcal B$ is minimal and self-consistent with respect to $G^\sim$-equivalence.
\item
Classify, modulo $\mathcal G^\sim$-equivalence, Lie symmetries 
of systems~$\mathcal L_\theta$ with $\theta$ satisfying the condition 
${\rm s}^{-1}(\theta)\ne{\rm s}^{-1}(\theta)\cap\mathcal G^{G^\sim}$ or, equivalently, 
${\rm t}^{-1}(\theta)\ne{\rm t}^{-1}(\theta)\cap\mathcal G^{G^\sim}$. 
This leads to the complete list of $\mathcal G^\sim$-inequivalent Lie-symmetry extensions 
within the class~$\mathcal L|_{\mathcal S}$ 
that are singular or regular but related to other Lie-symmetry extensions 
with elements from~$\mathcal G^\sim\setminus\mathcal G^{G^\sim}\!$. 
Here both the direct and the algebraic methods of group classification 
might be applicable.   
\item
Carry out the (complete) preliminary group classification of the class~$\mathcal L|_{\mathcal S}$. 
The optimized version of such classification includes 
the classification of candidates for appropriate subalgebras of the equivalence algebra~$\mathfrak g^\sim$ 
up to $G^\sim$-equivalence
and, whenever it is possible, the construction of systems from the class~$\mathcal L|_{\mathcal S}$ 
that admit the projections of the above candidates by~$\varpi_*$
as their Lie invariance algebras. 
For each obtained system, we select the candidate that is maximal by inclusion; 
such a candidate always exists. 
\item
Merge the lists obtained in steps~2 and~3 and exclude repetitions up to $\mathcal G^\sim$-equivalence,
which leads to the complete list of Lie-symmetry extensions 
within the class~$\mathcal L|_{\mathcal S}$ up to $\mathcal G^\sim$-equivalence.  
\item
Extend the part of the list from step~4 that is related to the cases of step~2  
by compositions of admissible transformations from the set~$\mathcal B$ modulo $G^\sim$-equivalence.
This gives the complete list of Lie-symmetry extensions 
within the class~$\mathcal L|_{\mathcal S}$ up to $G^\sim$-equivalence. 
All possible additional equivalence transformations between cases in this list 
are generated by elements of~$\mathcal B$ modulo $G^\sim$-equivalence.
\end{enumerate}

The order of steps or even single operations may vary 
depending on the class of differential equations to be studied.

\section{Preliminary study of admissible transformations}\label{sec:PreliminaryStudyOfAdmTrans}

To find the complete point equivalence group~$G^\sim$ of the class~\eqref{eq:GenWaveEqs}
(including both continuous and discrete equivalence transformations)
and the equivalence groupoid~$\mathcal G^\sim$ of this class, it is necessary to apply the direct method
of computing point transformations that relate systems of differential equations.
We will start our consideration with a preliminary study of admissible transformations of the superclass~$\mathcal W_{\rm gen}$,
which constitute the groupoid~$\mathcal G^\sim_{\rm gen}$ of~$\mathcal W_{\rm gen}$.
This will also give relevant information on the group~$G^\sim$ and the groupoid~$\mathcal G^\sim$.

Denote by $\mathcal L_\theta$ the equation in the class~$\mathcal W_{\rm gen}$
that corresponds to a fixed value of the arbitrary-element tuple $\theta=(f,g)$.
An admissible transformation of the class~$\mathcal W_{\rm gen}$ is a triple $(\theta,\Phi,\tilde\theta)$,
where $\theta=(f,g)$ and $\tilde\theta=(\tilde f,\tilde g)$
are respectively the source and target arbitrary-element tuples for~$\mathcal T$,~and
\begin{equation}\label{eq:GeneralEquivalenceTransformation}
\Phi\colon\quad \tilde t = T(t,x,u),\quad \tilde x = X(t,x,u),\quad \tilde u = U(t,x,u)
\end{equation}
with nonvanishing Jacobian $J:=\p(T,X,U)/\p(t,x,u)$
is a point transformation in the space with the coordinates $(t,x,u)$
that maps the equation~$\mathcal L_\theta$ to the equation~$\mathcal L_{\tilde\theta}$.
Therefore, we should directly seek for all point transformations
mapping a fixed equation~$\mathcal L_\theta$ of the form~\eqref{eq:GenWaveEqs}
to an equation~$\mathcal L_{\tilde\theta}$ of the same form,
\[
\tilde u_{\tilde t\tilde t} = \tilde f(\tilde x,\tilde u)\tilde u_{\tilde x\tilde x} + \tilde g(\tilde x,\tilde u).
\]

To carry out the transformation~$\Phi$ in practice,
it is necessary to find its prolongation to derivatives of $u$ up to order two.
For this we act by the total derivative operators $\DD_t$ and $\DD_x$, respectively,
on the expression $\tilde u(\tilde t,\tilde x)=U(t,x,u)$, assuming $\tilde t=T(t,x,u)$ and $\tilde x=X(t,x,u)$.
This gives
\begin{gather*}
\tilde u_{\tilde t}\DD_tT + \tilde u_{\tilde x}\DD_tX - \DD_tU = 0, \\
\tilde u_{\tilde t}\DD_xT + \tilde u_{\tilde x}\DD_xX - \DD_xU = 0, \\
\tilde u_{\tilde t\tilde t}(\DD_tT)^2 + 2\tilde u_{\tilde t\tilde x}(\DD_tX)(\DD_tT) + \tilde u_{\tilde x\tilde x}(\DD_tX)^2+\tilde u_{\tilde t}\DD_t^2T+\tilde u_{\tilde x}\DD_t^2X - \DD_t^2U = 0, \\
\tilde u_{\tilde t\tilde t}(\DD_xT)^2 + 2\tilde u_{\tilde t\tilde x}(\DD_xX)(\DD_xT) + \tilde u_{\tilde x\tilde x}(\DD_xX)^2+\tilde u_{\tilde t}\DD_x^2T+\tilde u_{\tilde x}\DD_x^2X - \DD_x^2U = 0,
\end{gather*}
cf.\ \cite{bihl2012b}.
Solving the last two equations for $u_{tt}$ and $u_{xx}$, respectively,
and substituting the derived expressions into~\eqref{eq:GenWaveEqs}, we obtain
\begin{align}\label{eq:TransformationOfGenWaveEqs}
\begin{split}
&\tilde u_{\tilde t\tilde t}(\DD_tT)^2+ 2\tilde u_{\tilde t\tilde x}(\DD_tT)(\DD_tX)+ \tilde u_{\tilde x\tilde x}(\DD_tX)^2
+\tilde u_{\tilde t}V^tT+\tilde u_{\tilde x}V^tX-V^tU\\
&{}= f\bigl(\tilde u_{\tilde t\tilde t}(\DD_xT)^2+2\tilde u_{\tilde t\tilde x}(\DD_xT)(\DD_xX) +\tilde u_{\tilde x\tilde x}(\DD_xX)^2
+\tilde u_{\tilde t}V^xT+\tilde u_{\tilde x}V^xX-V^xU\bigr)\\
&{}-g(\tilde u_{\tilde t} T_u+\tilde u_{\tilde x}X_u-U_u),
\end{split}
\end{align}
where we use the notation $V^t:=\p_{tt}+2u_t\p_{tu}+u_t^{\;2}\p_{uu}$ and $V^x:=\p_{xx}+2u_x\p_{xu}+u_x^{\ 2}\p_{uu}$.
The substitution $\tilde u_{\tilde t\tilde t}=\tilde f\tilde u_{\tilde x\tilde x} + \tilde g$
in view of~$\mathcal L_{\tilde\theta}$ into~\eqref{eq:TransformationOfGenWaveEqs}
wherever $\tilde u_{\tilde t\tilde t}$ occurs leads to
an identity with respect to~$\tilde u_{\tilde t\tilde x}$ and~$\tilde u_{\tilde x\tilde x}$.
In particular, we can collect the coefficients of $\tilde u_{\tilde t\tilde x}$ in \eqref{eq:TransformationOfGenWaveEqs},
which results~in
\begin{align}\label{eq:DetEqsForAdmTrans1}
(T_t+T_uu_t)(X_t+X_uu_t)=f(T_x+T_uu_x)(X_x+X_uu_x).
\end{align}
The equation~\eqref{eq:DetEqsForAdmTrans1} involves only original quantities (without tilde)
and is a polynomial in~$u_t$ and~$u_x$.
Therefore, we can split it with respect to~$u_t$ and~$u_x$ by collecting the coefficients of different powers of these derivatives.
As a result, we get
\begin{gather}
\label{eq:DetEqsForAdmTrans1ut2ux2}
T_uX_u=0,
\\ \label{eq:DetEqsForAdmTrans1ut1}
T_uX_t+T_tX_u=0,
\\ \label{eq:DetEqsForAdmTrans1ux1}
T_uX_x+T_xX_u=0,
\\ \label{eq:DetEqsForAdmTrans1ut0ux0}
T_tX_t = fT_xX_x.
\end{gather}
We multiply the equation~\eqref{eq:DetEqsForAdmTrans1ut1} by~$T_u$ (resp.\ $X_u$)
and, in view of the equation~\eqref{eq:DetEqsForAdmTrans1ut2ux2}, derive  that $T_uX_t=0$ (resp.\ $T_tX_u=0$).
We apply the same trick also to the equation~\eqref{eq:DetEqsForAdmTrans1ux1}
to have the equations $T_uX_x=0$ and (resp.\ $X_uT_x=0$).
The system $T_uX_t=0$, $T_uX_x=0$, $T_uX_u=0$ (resp.\ $X_uT_t=0$, $X_uT_x=0$, $X_uT_u=0$) implies that $T_u=0$ (resp.\ $X_u=0$)
since otherwise the Jacobian~$\mathrm J$ of the point transformation~\eqref{eq:GeneralEquivalenceTransformation} vanishes.
The condition
\[
T_u=X_u=0
\]
means that any admissible point transformation of the class~\eqref{eq:GenWaveEqs} is fiber-preserving.
In view of this condition,
the equations~\eqref{eq:DetEqsForAdmTrans1ut2ux2}--
\eqref{eq:DetEqsForAdmTrans1ux1} are identically satisfied,
and the splitting of~\eqref{eq:TransformationOfGenWaveEqs} with respect to $\tilde u_{\tilde x\tilde x}$
leads to the equations
\begin{gather}\label{eq:DetEqsForAdmTrans2}
\tilde f T_t^{\;2}+X_t^{\;2} = f(\tilde fT_x^{\;2}+X_x^{\;2}),
\\[1ex] \nonumber
\begin{split}
 &\tilde g T_t^2 + \tilde u_{\tilde t} T_{tt} + \tilde u_{\tilde x}X_{tt} -(U_{tt}+2U_{tu}u_t + U_{uu}u_t^2)\\&\qquad{}
= f(\tilde g T_x^2+\tilde u_{\tilde t}T_{xx} + \tilde u_{\tilde x}X_{xx}-(U_{xx}+2U_{xu}u_x+U_{uu}u_x^2))+ gU_u.
\end{split}
\end{gather}
Splitting the last equation with respect to~$\tilde u_{\tilde t}$ and~$\tilde u_{\tilde x}$ gives the equations
$U_{uu}=0$ and
\begin{gather}
\label{eq:DetEqsForAdmTrans4ut1}
T_{tt}-2\frac{U_{ut}}{U_u}T_t=f\left(T_{xx}-2\frac{U_{ux}}{U_u}T_x\right),
\\ \label{eq:DetEqsForAdmTrans4ux1}
X_{tt}-2\frac{U_{ut}}{U_u}X_t=f\left(X_{xx}-2\frac{U_{ux}}{U_u}X_x\right),
\\ \label{eq:DetEqsForAdmTrans4ut0ux0}
\tilde gT_t^2-U_{tt}+2\frac{U_{ut}}{U_u}U_t=f\left(\tilde gT_x^2-U_{xx}+2\frac{U_{ux}}{U_u}U_x\right)+gU_u.
\end{gather}
The equation~$U_{uu}=0$ 
implies the representation $U=U^1(t,x)u+U^0(t,x)$.
The additional condition to keep in mind is the nondegeneracy of~$\Phi$,
which in view of the conditions $T_u=X_u=0$ reduces to the inequality $U_u(T_tX_x-T_xX_t)\ne0$,
and hence $T_tX_x-T_xX_t\ne0$ and $U^1:=U_u\ne0$.

Note that the equations $T_u=X_u=U_{uu}=0$ for admissible point transformations within the class~$\mathcal W_{\rm gen}$ 
can also be derived using item~(c) of Theorem~4.4b in~\cite{king1998a}.

Rewriting the equation~\eqref{eq:DetEqsForAdmTrans2} as $\tilde f(T_t^{\;2}-fT_x^{\;2})+X_t^{\;2}-fX_x^{\;2}=0$
shows that both the expressions $T_t^{\;2}-fT_x^{\;2}$ and $X_t^{\;2}-fX_x^{\;2}$ are nonzero,
\[
T_t^{\;2}-fT_x^{\;2}\ne0, \quad X_t^{\;2}-fX_x^{\;2}\ne0.
\]
Indeed, otherwise $f>0$, $T_t=\ve_1f^{1/2}T_x\ne0$, where $\ve_1=\pm1$,
and hence the equation~\eqref{eq:DetEqsForAdmTrans1ut0ux0} would imply that $X_t=\ve_1f^{1/2}X_x$
but this contradicts the nondegeneracy of~$\Phi$.
Thus, $\tilde f_{\tilde u}=0$ if $f_u=0$.
Conversely, supposing $\tilde f_{\tilde u}=0$ and using the same argumentation for the inverse of~$\mathcal T$, 
we derive that $f_u=0$.
Therefore, $\tilde f_{\tilde u}=0$ if and only if $f_u=0$.

In view of the nonvanishing expression $T_t^{\;2}-fT_x^{\;2}$,
the equation~\eqref{eq:DetEqsForAdmTrans4ut0ux0} similarly implies that
$\tilde f_{\tilde u}=\tilde g_{\tilde u\tilde u}=0$ if and only if $f_u=g_{uu}=0$.

\begin{remark}\label{rem:OnInequivOfLinAndNonlinCasesOfGenWaveEqs}
Preserving the constraint $f_u=g_{uu}=0$ by admissible transformations 
of the class~$\mathcal W_{\rm gen}=\mathcal W\sqcup\mathcal W_{\rm lin}$
means that the equivalence groupoid~$\mathcal G^\sim_{\rm gen}$ of the class $\mathcal W_{\rm gen}$
is the disjoint union of the equivalence groupoids~$\mathcal G^\sim$ and~$\mathcal G^\sim_{\rm lin}$
of the subclasses~$\mathcal W$ and~$\mathcal W_{\rm lin}$,
$\mathcal G^\sim_{\rm gen}=\mathcal G^\sim\sqcup\mathcal G^\sim_{\rm lin}$.
In other words, equations from the class~$\mathcal W$
are not related to equations from the class~$\mathcal W_{\rm lin}$ by point transformations.
This justifies the exclusion of the class~$\mathcal W_{\rm lin}$ from the consideration,
which has been mentioned in the introduction.
\end{remark}

\section{Equivalence group and equivalence algebra}\label{sec:EquivGroup}

At this point, we continue the consideration
by computing the equivalence group of the class~\eqref{eq:GenWaveEqs}
as it is needed both for the exhaustive description of admissible transformations
and for the analysis of the determining equations for components of Lie-symmetry vector fields.
In the course of computing equivalence transformations,
the arbitrary elements~$f$ and~$g$ of the class should be varied.
We can therefore split
the equations~\eqref{eq:DetEqsForAdmTrans1ut0ux0}--\eqref{eq:DetEqsForAdmTrans4ut0ux0}
with respect to these arbitrary elements.
The equation~\eqref{eq:DetEqsForAdmTrans1ut0ux0} and the nondegeneracy constraint $T_tX_x-T_xX_t\ne0$ imply
that either $T_t=X_x=0$ and $T_xX_t\ne0$ or $T_x=X_t=0$ and $T_tX_x\ne0$.

For $T_t=X_x=0$, the equation~\eqref{eq:DetEqsForAdmTrans2} is simplified to $X_t^2=f\tilde fT_x^2$, or $T_x^2f=X_t^2/\tilde f$. 
Differentiating the last equation with respect to~$t$ 
and splitting the arising equation with respect to $\tilde f$ and its derivatives implies $X_t=0$, 
which contradicts the nondegeneracy condition.

Therefore, we necessarily have $X_t=T_x=0$ and thus $T=T(t)$, $X=X(x)$, where $T_tX_x\ne0$.
Then the equation~\eqref{eq:DetEqsForAdmTrans2} reduces to $\tilde fT_t^2=fX_x^2$
and the differentiation of this equation with respect to~$t$ yields
\begin{equation}\label{eq:DetEqsForAdmTrans2a}
 2T_tT_{tt}\tilde f + T_t^2U_t\tilde f_{\tilde u} = 0.
\end{equation}
Since the equation~\eqref{eq:DetEqsForAdmTrans2a} holds for all~$\tilde f$, 
we can split it with respect to $\tilde f$ and $\tilde f_{\tilde u}$ and derive $T_{tt}=0$ and $U_t=0$.
The equation~\eqref{eq:DetEqsForAdmTrans4ut1} is identically satisfied in view of the above equations.
The equation~\eqref{eq:DetEqsForAdmTrans4ux1} reduces to the equation $(U_u^2/X_x)_x=0$
and the equation~\eqref{eq:DetEqsForAdmTrans4ut0ux0} yields the transformation relation for~$g$.

Integrating the derived equations in view of the nondegeneracy condition $\mathrm J\ne0$, 
we prove the following theorem.

\begin{theorem}\label{thm:EquivalenceGroupIbragrimovClass}
The equivalence (pseudo)group~$G^\sim$ of the class~\eqref{eq:GenWaveEqs} consists of the transformations
\begin{align}\label{eq:EquivalenceGroupGenWaveEqs}
\begin{split}
 &\tilde t = c_1t+c_0, \quad \tilde x=\varphi(x), \quad \tilde u = c_2|\varphi_x|^{1/2}u+\psi(x), \\
 &\tilde f = \frac{\varphi_x^2}{c_1^2}f, \quad
  \tilde g = \frac{c_2}{c_1^2}|\varphi_x|^{1/2}g
   -\frac{1}{c_1^2}\left(c_2\frac{2\varphi_{xxx}\varphi_x-3\varphi_{xx}^{\,2}}{4|\varphi_x|^{3/2}}u
   +\psi_{xx}-\frac{\varphi_{xx}}{\varphi_x}\psi_x\right)f,
\end{split}
\end{align}
where $c_0$, $c_1$ and $c_2$ are arbitrary constants satisfying the condition $c_1c_2\ne0$,
and $\varphi$ and~$\psi$ run through the set of smooth functions of~$x$ with $\varphi_x\ne0$.
\end{theorem}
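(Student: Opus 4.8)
The plan is to continue the analysis of the determining equations begun in the preceding sections, which has already reduced the $(x,u)$-components of any equivalence transformation to the form $T=T(t)$, $X=X(x)$, $U=U^1(t,x)u+U^0(t,x)$ with $T_{tt}=0$, $U_t=0$ (hence $U^1=U^1(x)$, $U^0=U^0(x)$), $T_tX_x\ne0$ and $U^1\ne0$. First I would integrate $T_{tt}=0$ to get $\tilde t=c_1t+c_0$ with $c_1\ne0$, and set $\varphi:=X$, so that $\varphi_x\ne0$. Then I would exploit the remaining two nontrivial determining equations: equation~\eqref{eq:DetEqsForAdmTrans4ux1}, which after using $X_t=0$, $U_t=0$ collapses to $\left(\dfrac{(U^1)^2}{\varphi_x}\right)_{\!x}=0$, and equation~\eqref{eq:DetEqsForAdmTrans4ut0ux0}, which is the transformation relation determining~$\tilde g$.

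Next I would solve $\bigl((U^1)^2/\varphi_x\bigr)_x=0$. Since $U^1$ depends only on~$x$, this gives $(U^1)^2=c_2^{\,2}\varphi_x$ for a nonzero constant, but because $\varphi_x$ may be negative we must write $(U^1)^2=c_2^{\,2}|\varphi_x|$ with a sign adjustment, whence $U^1=c_2|\varphi_x|^{1/2}$ up to absorbing a sign into~$c_2$; here $c_2\ne0$ because $U^1\ne0$. Writing $\psi:=U^0$, this yields $\tilde u=c_2|\varphi_x|^{1/2}u+\psi(x)$, matching~\eqref{eq:EquivalenceGroupGenWaveEqs}. The relation $\tilde fT_t^2=fX_x^2$ (already derived above) then immediately gives $\tilde f=\varphi_x^2 f/c_1^2$. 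It remains to substitute $T=c_1t+c_0$, $X=\varphi(x)$, $U=c_2|\varphi_x|^{1/2}u+\psi(x)$ into~\eqref{eq:DetEqsForAdmTrans4ut0ux0}: the term $\tilde gT_t^2$ is $c_1^2\tilde g$, the terms $-U_{tt}+2(U_{ut}/U_u)U_t$ vanish since $U_t=0$, and on the right-hand side one computes $U_{xx}$, $U_{ux}=U^1_x$, and $U_x=U^1_xu+\psi_x$; collecting everything and solving for~$\tilde g$ produces the stated formula, where the coefficient of~$u$ inside the bracket is exactly the Schwarzian-type expression $(2\varphi_{xxx}\varphi_x-3\varphi_{xx}^{\,2})/4|\varphi_x|^{3/2}$ arising from $\bigl(|\varphi_x|^{1/2}\bigr)_{xx}$ and $\bigl(|\varphi_x|^{1/2}\bigr)_x\cdot 2\varphi_{xx}/\varphi_x$.

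Conversely, I would verify by direct substitution that every transformation of the form~\eqref{eq:EquivalenceGroupGenWaveEqs} indeed maps an arbitrary equation~\eqref{eq:GenWaveEqs} to another equation of the same form with the stated $(\tilde f,\tilde g)$; since all the arbitrary elements $f,g$ were varied in the derivation, this shows we have the full equivalence (pseudo)group rather than merely an admissible-transformation subfamily. Finally I would note that the discrete transformations (reflections $t\mapsto -t$, $x\mapsto -x$, $u\mapsto -u$) are all captured by the sign freedom in $c_1$, $c_2$ and in the orientation of~$\varphi$, so the list is exhaustive including discrete equivalences.

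The main obstacle I anticipate is purely computational bookkeeping in the last step: correctly expanding~\eqref{eq:DetEqsForAdmTrans4ut0ux0} with $U=c_2|\varphi_x|^{1/2}u+\psi(x)$ and recognizing that the $u$-dependent part of $-U_{xx}+2(U_{ux}/U_u)U_x$ assembles into the compact Schwarzian-like form displayed in~\eqref{eq:EquivalenceGroupGenWaveEqs}; handling the absolute values $|\varphi_x|$ consistently (and checking that the formula is sign-correct whether $\varphi_x>0$ or $\varphi_x<0$) is the only subtle point, everything else being routine integration of the already-split determining equations.
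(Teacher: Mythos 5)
Your integration steps (solving $\bigl((U^1)^2/\varphi_x\bigr)_x=0$, reading off $\tilde f$ from $\tilde fT_t^{\,2}=fX_x^{\,2}$, and assembling the Schwarzian-type coefficient of $u$ in $\tilde g$ from~\eqref{eq:DetEqsForAdmTrans4ut0ux0}) are correct and coincide with the paper's final computation, as does your converse verification. However, there is a genuine gap at the start: you take as "already reduced in the preceding sections" the facts $T_x=X_t=0$ (so $T=T(t)$, $X=X(x)$), $T_{tt}=0$ and $U_t=0$. The preliminary analysis of admissible transformations only yields $T_u=X_u=U_{uu}=0$, the system \eqref{eq:DetEqsForAdmTrans1ut0ux0}--\eqref{eq:DetEqsForAdmTrans4ut0ux0} and nondegeneracy; it cannot yield the stronger conditions, since they are false for general admissible transformations (the families \ref{T1}--\ref{T9} of Theorem~\ref{thm:ClassWAdmTrans} contain maps with $T$ depending on $x$, $T_{tt}\ne0$, $U_t\ne0$). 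These conditions hold only for equivalence transformations, and deriving them is precisely the substantive part of the proof: because $f$, $g$ (and hence $\tilde f$, $\tilde g$) vary over the class, one may split with respect to the arbitrary elements. Concretely, splitting \eqref{eq:DetEqsForAdmTrans1ut0ux0} with respect to the arbitrary $f$ gives $T_tX_t=T_xX_x=0$, so by nondegeneracy either $T_t=X_x=0$ or $T_x=X_t=0$; the first branch must then be excluded (differentiate $X_t^{\,2}=f\tilde fT_x^{\,2}$, coming from \eqref{eq:DetEqsForAdmTrans2}, with respect to $t$ and split with respect to $\tilde f$ and its derivatives to force $X_t=0$, a contradiction); finally, in the surviving branch, differentiating $\tilde fT_t^{\,2}=fX_x^{\,2}$ with respect to $t$ and splitting with respect to $\tilde f$ and $\tilde f_{\tilde u}$ (equation~\eqref{eq:DetEqsForAdmTrans2a}) gives $T_{tt}=0$ and $U_t=0$.

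Without this splitting-and-case-elimination step your argument only describes those equivalence transformations that happen to have the restricted form; it does not show that every element of $G^\sim$ has it, so the claimed exhaustiveness (including the statement about discrete equivalence transformations being only the three sign changes) is not yet justified. Your closing remark that "all the arbitrary elements were varied in the derivation" is exactly the missing ingredient: it must actually be carried out on \eqref{eq:DetEqsForAdmTrans1ut0ux0}--\eqref{eq:DetEqsForAdmTrans4ut0ux0}, not appealed to after the fact. Once that block is inserted, the remainder of your proposal goes through as written.
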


\begin{corollary}
The class~\eqref{eq:GenWaveEqs} admits exactly three discrete equivalence transformations
that are independent up to combining with each other and with continuous equivalence transformations of this class.
These are involutions alternating signs of variables,
$(t,x,u,f,g)\mapsto(-t,x,u,f,g)$, $(t,x,u,f,g)\mapsto(t,-x,u,f,g)$ and $(t,x,u,f,g)\mapsto(t,x,-u,f,-g)$.
\end{corollary}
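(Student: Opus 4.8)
The plan is to read off the discrete equivalence transformations from the explicit description of~$G^\sim$ in Theorem~\ref{thm:EquivalenceGroupIbragrimovClass}. The point is that $G^\sim$ decomposes as a semidirect-type product of a connected (continuous) part and a finite group of components, and the Corollary asserts that the component group is generated by the three displayed involutions. First I would identify which parameters in~\eqref{eq:EquivalenceGroupGenWaveEqs} can leave a connected component: $c_0$ is an arbitrary real shift (connected to~$0$), $\psi$ is an arbitrary smooth function (connected to~$0$ in an appropriate topology), and $\varphi$ runs through smooth functions with $\varphi_x\neq0$. The latter set has exactly two connected components, distinguished by $\operatorname{sgn}\varphi_x$; a representative of the non-identity component is $\varphi(x)=-x$, which is the reflection $(t,x,u,f,g)\mapsto(t,-x,u,f,g)$ (note $\varphi_x^2=1$, $\varphi_{xx}=\varphi_{xxx}=0$, so $\tilde f=f$, $\tilde g=g$ when $c_1=c_2=1$, $\psi=0$). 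Likewise $c_1$ ranges over $\mathbb R\setminus\{0\}$, contributing a sign $\operatorname{sgn}c_1$; the representative $c_1=-1$ (others trivial) gives the time reflection $(t,x,u,f,g)\mapsto(-t,x,u,f,g)$, and indeed $\tilde f=f$, $\tilde g=g$ since only $c_1^2$ and $\varphi_x$ enter those formulas. Finally $c_2$ ranges over $\mathbb R\setminus\{0\}$, contributing $\operatorname{sgn}c_2$; the representative $c_2=-1$ (others trivial) gives $(t,x,u,f,g)\mapsto(t,x,-u,f,-g)$, since $\tilde u=-u$ forces $\tilde g=-g$ in~\eqref{eq:EquivalenceGroupGenWaveEqs}.

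Next I would argue that these three are independent modulo continuous transformations: the three sign characters $\operatorname{sgn}c_1$, $\operatorname{sgn}\varphi_x$, $\operatorname{sgn}c_2$ are functionally independent homomorphisms from $G^\sim$ to $\{\pm1\}$, so the component group of~$G^\sim$ surjects onto $(\mathbb Z/2\mathbb Z)^3$, and conversely every element of~$G^\sim$ is obtained from one of the eight sign-choices by a transformation in the identity component (set $c_0=0$, $\psi=0$, $\varphi(x)=\pm x$, $c_1=\pm1$, $c_2=\pm1$ and then connect the remaining free parameters $c_0,\psi,|c_1|,|c_2|$ and the orientation-preserving diffeomorphism $|\varphi_x|^{\operatorname{sgn}\varphi_x}\varphi$ to the identity). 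Hence the component group is exactly $(\mathbb Z/2\mathbb Z)^3$, generated by the classes of the three listed involutions, and each listed map is visibly an involution (composing it with itself returns all parameters to their trivial values).

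The only mild subtlety — which I would address but not belabor — is making precise the topology/notion of ``connected'' on the pseudogroup $G^\sim$: since $G^\sim$ is infinite-dimensional (it contains the arbitrary functions $\varphi$, $\psi$), ``discrete'' here should be read as ``the set of connected components of the structure group of $G^\sim$,'' equivalently, two elements of~$G^\sim$ lie in the same component iff they have the same triple $(\operatorname{sgn}c_1,\operatorname{sgn}\varphi_x,\operatorname{sgn}c_2)$. With that convention the statement is immediate from the parametrization~\eqref{eq:EquivalenceGroupGenWaveEqs}; this bookkeeping of signs, rather than any computation, is the ``hard'' (really only fiddly) part, and it is entirely routine once Theorem~\ref{thm:EquivalenceGroupIbragrimovClass} is in hand.
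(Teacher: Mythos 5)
Your proposal is correct and follows essentially the same route as the paper, which states the corollary as an immediate consequence of Theorem~\ref{thm:EquivalenceGroupIbragrimovClass}: the only disconnected parameter ranges in~\eqref{eq:EquivalenceGroupGenWaveEqs} are $c_1\ne0$, $c_2\ne0$ and $\varphi_x\ne0$, whose sign choices yield exactly the three listed involutions modulo the identity component. The sign-character bookkeeping you spell out is precisely the intended justification, so nothing further is needed.
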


In contrast to~\cite{song2009a}, we have proved that 
there are no other independent discrete equivalence transformations for the class~\eqref{eq:GenWaveEqs}.

Theorem~\ref{thm:EquivalenceGroupIbragrimovClass} implies that any transformation~$\mathscr T$ from~$G^\sim$ can be represented as the composition
\[
  \mathscr T = \mathscr P^t(c_0)\circ\mathscr D^t(c_1)\circ\mathscr Z(\psi)\circ\mathscr D(\varphi)\circ\mathscr D^u(c_2)
\]
of elementary equivalence transformations, each of which belongs to a family of equivalence transformations
parameterized by a single constant or functional parameter,
\[
\begin{array}{lllllll}
\mathscr P^t(c_0)  \colon\ & \tilde t=t+c_0,\ & \tilde x=x,      \ \quad& \tilde u=u,                  \  & \tilde f=f,               \ & \tilde g=g,\\
\mathscr D^t(c_1)  \colon\ & \tilde t=c_1t, \ & \tilde x=x,      \ \quad& \tilde u=u,                  \  & \tilde f=c_1^{-2}f,       \ & \tilde g=c_1^{-2}g,\\
\mathscr D^u(c_2)  \colon\ & \tilde t=t,    \ & \tilde x=x,      \ \quad& \tilde u=c_2u,               \  & \tilde f=f,               \ & \tilde g=c_2g,\\
\mathscr D(\varphi)\colon\ & \tilde t=t,    \ & \tilde x=\varphi,\ \quad& \tilde u=|\varphi_x|^{1/2}u, \  & \tilde f=\varphi_x^{\;2}f,\ & \tilde g=|\varphi_x|^{1/2}g+\alpha^\varphi(x) uf,\\
\mathscr Z(\psi)   \colon\ & \tilde t=t,    \ & \tilde x=x,      \ \quad& \tilde u=u+\psi,             \  & \tilde f=f,               \ & \tilde g=g-\psi_{xx}f,
\end{array}
\]
where the parameters are described in Theorem~\ref{thm:EquivalenceGroupIbragrimovClass}, and
\[
  \alpha^\varphi(x):= \frac{2\varphi_{xxx}\varphi_x-3\varphi_{xx}^{\;2}}{4|\varphi_x|^{3/2}}.
\]
These transformations are shifts and scalings in $t$, arbitrary transformations in $x$,
scalings of $u$ and shifts of $u$ with arbitrary functions of $x$, respectively.

The equivalence algebra~$\mathfrak g^\sim$ of the class~\eqref{eq:GenWaveEqs}
can be easily derived as the set of vector fields that generate local one-parametric subgroups of the equivalence group~$G^\sim$.
It is spanned by the vector fields
\begin{gather}\label{eq:EquivalenceAlgebraGenWaveEqs}
\begin{split}&
\PP^t=\p_t,\quad
\DDD^t=t\p_t-2f\p_f-2g\p_g,\quad
\DDD^u=u\p_u+g\p_g,\\&
\DDD(\zeta)=\zeta\p_x+\tfrac12\zeta_xu\p_u+2\zeta_xf\p_f+\tfrac12(\zeta_xg - \zeta_{xxx}u f)\p_g,\quad
\ZZ(\chi)=\chi\p_u-\chi_{xx}f\p_g,
\end{split}
\end{gather}
where~$\zeta=\zeta(x)$ and~$\chi=\chi(x)$ run through the set of smooth functions of~$x$.
The nonvanishing commutation relations between these vector fields are exhausted by
\begin{gather*}
[\PP^t,\DDD^t]=\PP^t,\quad [\ZZ(\chi),\DDD^u]=\ZZ(\chi), \\
[\DDD(\zeta^1),\DDD(\zeta^2)]=\DDD(\zeta^1\zeta^2_x-\zeta^1_x\zeta^2), \quad [\DDD(\zeta),\ZZ(\chi)]=\ZZ(\zeta\chi_x-\tfrac12\zeta_x\chi).
\end{gather*}


\section{Determining equations for Lie symmetries}\label{sec:DetEqsForLieSymsIbragrimovClass}

Suppose that a vector field $Q =\tau(t,x,u)\p_t+\xi(t,x,u)\p_x+\eta(t,x,u)\p_u$ 
belongs to the maximal Lie invariance algebra~$\mathfrak g^{\max}$ of an equation $\mathcal L$: $L=0$ from the class~\eqref{eq:GenWaveEqs}, 
i.e., it is the generator of a one-parameter Lie-symmetry group of~$\mathcal L$.
The criterion for infinitesimal invariance of $\mathcal L$ with respect to $Q$~\cite{blum1989A,olve1986A,ovsi1982A} implies that $Q_{(2)}L|_{L=0}=0$,
where the notation $|_{L=0}$ means that the condition $Q_{(2)}L$ is required to hold only on solutions of $\mathcal L$ and
$Q_{(2)}$ is the second prolongation of $Q$,
$
 Q_{(2)} = Q + \eta^t\p_{u_t} + \eta^x\p_{u_x} + \eta^{tt}\p_{u_{tt}} + \eta^{tx}\p_{u_{tx}}+\eta^{xx}\p_{u_{xx}}.
$
The coefficients $\eta^t$, $\eta^x$, $\eta^{tt}$, $\eta^{xx}$ in $Q_{(2)}$ can be determined from the general prolongation formula for vector fields. In particular,
\begin{gather*}
\eta^{xx} = \DD_x^2(\eta-\tau u_t-\xi u_x) + \tau u_{txx}+\xi u_{xxx}, \\
\eta^{tt} = \DD_t^2(\eta-\tau u_t-\xi u_x) + \tau u_{ttt}+\xi u_{ttx},
\end{gather*}
where
$\DD_t$ and $\DD_x$ denote the total derivative operators with respect to $t$ and $x$, respectively,
which in the present case of two independent and one dependent variables are given by
\begin{gather*}
 \DD_t = \p_t + u_t\p_{u}+u_{tt}\p_{u_{t}} + u_{tx}\p_{u_{x}}+\cdots, \\
 \DD_x = \p_x + u_x\p_{u}+u_{tx}\p_{u_{t}} + u_{xx}\p_{u_{x}}+\cdots.
\end{gather*}
Applying the infinitesimal invariance condition to the class~\eqref{eq:GenWaveEqs} then yields
\begin{equation}\label{eq:InfinitesimalInvarianceCriterionGenWaveEqs}
 \eta^{tt}-(\xi f_x+\eta f_u)u_{xx}-f\eta^{xx}-\xi g_x - \eta g_u =0 \qquad \textup{for} \qquad u_{tt}=fu_{xx}+g.
\end{equation}
It follows from the equations $T_u=X_u=U_{uu}=0$ for admissible transformation within the class~\eqref{eq:GenWaveEqs} that
the  vector field $Q$ is projectable to the space of independent variables and affine in~$u$,
i.e., $\tau_u=\xi_u=\eta_{uu}=0$ or $\tau=\tau(t,x)$, $\xi=\xi(t,x)$ and $\eta=\eta^1(t,x)u+\eta^0(t,x)$.
Taking into account these restrictions and expanding the infinitesimal invariance condition~\eqref{eq:InfinitesimalInvarianceCriterionGenWaveEqs},
we obtain the equation
\begin{gather}\label{eq:InfinitesimalInvarianceCriterionGenWaveEqsExpanded}
\begin{split}
 &   \DD_t^2\eta - \tau_{tt}u_t - \xi_{tt}u_x - 2\tau_tu_{tt} - 2\xi_tu_{tx}\\
 &=f(\DD_x^2\eta - \tau_{xx}u_t - \xi_{xx}u_x - 2\tau_xu_{tx} - 2\xi_xu_{xx}) + (\xi f_x + \eta f_u)u_{xx} + \xi g_x + \eta g_u,
\end{split}
\end{gather}
where we have to substitute $u_{tt}=fu_{xx}+g$.
Collecting the coefficients of the derivatives~$u_{tx}$, $u_{xx}$, $u_t$ and $u_x$ in the above equation results in the system of determining equations
\begin{gather}\label{eq:DetEqForLieSymsOfGenWaveEqs1}
\xi_t = \tau_xf,
\\\label{eq:DetEqForLieSymsOfGenWaveEqs2}
\tau_{tt}-\tau_{xx}f = 2\eta_{tu},
\\\label{eq:DetEqForLieSymsOfGenWaveEqs3}
\xi_{tt}-\xi_{xx}f = -2\eta_{xu}f,
\\\label{eq:DetEqForLieSymsOfGenWaveEqs4}
\xi f_x+\eta f_u=2(\xi_x-\tau_t)f,
\\\label{eq:DetEqForLieSymsOfGenWaveEqs5}
\xi g_x + \eta g_u = (\eta_u-2\tau_t)g - \eta_{xx}f + \eta_{tt}.
\end{gather}
In order to derive the kernel algebra~$\mathfrak g^\cap$ of the class~\eqref{eq:GenWaveEqs},
 we further split the determining equations with respect to the arbitrary elements and their derivatives.
This immediately gives that
\[\label{eq:KernelGenWaveEqs}
  \mathfrak g^\cap =\langle\p_t\rangle.
\]
Consequently, the Lie symmetries admitted by each equation from the class~\eqref{eq:GenWaveEqs}
are exhausted by transformations of the form $(t,x,u)\mapsto(t+c_0,x,u)$, where $c_0$ is an arbitrary constant.


\section{Results of classifications}
\label{sec:GenWaveEqsResultsOfClassifications}

For convenience, we collect, in a single table, the Lie-symmetry classification cases derived below
and formulate the final result of group classification of the class~\eqref{eq:GenWaveEqs} as a theorem.

\begin{theorem}\label{thm:GenWaveEqsGroupClassification}
All  $G^\sim$-inequivalent (resp.\ point-inequivalent) cases of Lie-symmetry extensions 
of the kernel algebra~$\mathfrak g^\cap=\langle\p_t\rangle$ in the class~\eqref{eq:GenWaveEqs}
are exhausted by cases presented in Table~\ref{tab:GenWaveEqsExtensions}.
\end{theorem}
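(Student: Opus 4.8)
The plan is to carry out the classical Lie-symmetry classification of the class~\eqref{eq:GenWaveEqs} by solving the system of determining equations \eqref{eq:DetEqForLieSymsOfGenWaveEqs1}--\eqref{eq:DetEqForLieSymsOfGenWaveEqs5}, using the equivalence group~$G^\sim$ from Theorem~\ref{thm:EquivalenceGroupIbragrimovClass} to normalize the arbitrary-element tuple $\theta=(f,g)$ at each branching step. Since the kernel algebra is $\mathfrak g^\cap=\langle\p_t\rangle$ and every vector field in $\mathfrak g_\theta$ is of the form $Q=\tau(t,x)\p_t+\xi(t,x)\p_x+(\eta^1(t,x)u+\eta^0(t,x))\p_u$, the first step is to extract the $t$-dependence. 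Equations \eqref{eq:DetEqForLieSymsOfGenWaveEqs1}--\eqref{eq:DetEqForLieSymsOfGenWaveEqs3} are linear in $f$ with coefficients that are functions of $(t,x)$ only; I would differentiate them with respect to the jet variable on which $f$ genuinely depends (i.e.\ in $x$ and $u$) to separate the cases where $f$ is ``generic'' from the degenerate ones, thereby obtaining polynomial-in-$t$ expansions for $\tau$, $\xi$, $\eta^1$, $\eta^0$. This is the standard mechanism by which the time variable in wave/elliptic equations plays a parameter-like role, and it should quickly reduce the dependence on $t$ to low-degree polynomials.

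Next I would treat \eqref{eq:DetEqForLieSymsOfGenWaveEqs4} and \eqref{eq:DetEqForLieSymsOfGenWaveEqs5} as \emph{classifying equations}: for a fixed candidate vector field with the $t$-dependence already pinned down, these become constraints tying the residual functional parameters in $(\tau,\xi,\eta)$ to the form of $(f,g)$. The branching tree is organized by how many linearly independent (over the relevant function field) classifying conditions equation~\eqref{eq:DetEqForLieSymsOfGenWaveEqs4} imposes on $f$ --- this is precisely the furcate-splitting idea alluded to in Section~\ref{sec:GenWaveEqsTheory}. The principal split is on whether $f$ depends on $u$ or not (recall $\tilde f_{\tilde u}=0\iff f_u=0$ is already established), and then within each branch on the functional form of $f$ in $x$; at each node one uses $\mathscr D(\varphi)$, $\mathscr D^t(c_1)$, $\mathscr D^u(c_2)$, $\mathscr P^t(c_0)$ and $\mathscr Z(\psi)$ from Theorem~\ref{thm:EquivalenceGroupIbragrimovClass} to fix the surviving gauge freedom (for instance normalizing $f$ to a power, an exponential or a constant in $x$), after which \eqref{eq:DetEqForLieSymsOfGenWaveEqs5} determines the admissible $g$ together with the precise extension of $\mathfrak g_\theta$ beyond $\langle\p_t\rangle$. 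One must check at each leaf that the resulting algebra is genuinely larger than $\mathfrak g^\cap$ and record the pair $(\theta,\mathfrak g_\theta)$; cases coinciding under $G^\sim$ are discarded, which is exactly the role of the equivalence-group normalizations performed along the way.

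Finally, I would assemble all surviving inequivalent leaves into Table~\ref{tab:GenWaveEqsExtensions} and verify completeness: every branch of the splitting tree has been exhausted, no classifying condition was overlooked, and the listed cases are pairwise $G^\sim$-inequivalent. The statement that the same list also gives the \emph{point}-inequivalent (i.e.\ $\mathcal G^\sim$-inequivalent) cases is \emph{not} automatic here, because the class~$\mathcal W$ is non-normalized; it requires knowing the additional equivalence transformations coming from the generating set~$\mathcal B$ of the equivalence groupoid (constructed elsewhere in the paper), so at this point I would simply cross-reference that description and note which $G^\sim$-inequivalent cases get merged. The main obstacle is the sheer size and delicacy of the branching: one must make the right top-level partition of the set of possible $f$'s so that each subclass has manageable residual equivalence freedom, avoid missing sub-branches (the very defect of~\cite{song2009a}), and keep the normalizations consistent across branches so that no two final cases are secretly $G^\sim$-equivalent. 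In practice the most laborious piece is equation~\eqref{eq:DetEqForLieSymsOfGenWaveEqs5}, a linear but inhomogeneous classifying equation for $g$ whose general solution has a case-dependent homogeneous part, and correctly enumerating its solution families in each leaf is where the bulk of the careful work lies.
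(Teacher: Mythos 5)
Your route is the classical direct method with furcate splitting applied to the Lie-symmetry determining equations \eqref{eq:DetEqForLieSymsOfGenWaveEqs1}--\eqref{eq:DetEqForLieSymsOfGenWaveEqs5}, and this is genuinely different from what the paper does: there, Table~\ref{tab:GenWaveEqsExtensions} is assembled by first classifying admissible transformations up to $G^\sim$-equivalence (Theorem~\ref{thm:ClassWAdmTrans}, proved in Section~\ref{sec:GenWaveEqsEquivGroupoid}, where furcate splitting is applied to the determining equations for \emph{admissible transformations}, combined with Hydon-type algebraic arguments and a functor between groupoids), which simultaneously yields all singular extensions (Lemmas~\ref{lem:ClassW0SpecialLieSymExts} and~\ref{lem:ClassW1SpecialLieSymExts}), and then by classifying appropriate subalgebras of~$\mathfrak g^\sim$ and constructing the corresponding invariant equations (Sections~\ref{sec:ClassificationSubalgebrasGenWaveEqs}--\ref{sec:GenWaveEqsRegularLieSymmetryExtensions}), which yields the regular extensions; the theorem is the merge of the two lists, and the ``point-inequivalent'' half follows from the additional equivalence transformations generated by the set~$\mathcal B$. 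Deferring that last half to Theorem~\ref{thm:ClassWAdmTrans} is legitimate, since that theorem is proved independently of Theorem~\ref{thm:GenWaveEqsGroupClassification}.

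However, as a proof your proposal has concrete defects. The mechanism you rely on in the first step fails: splitting \eqref{eq:DetEqForLieSymsOfGenWaveEqs1}--\eqref{eq:DetEqForLieSymsOfGenWaveEqs3} is only possible with respect to~$u$ (not~$x$, on which $\tau$, $\xi$, $\eta$ themselves depend), hence only bites on the branch $f_u\ne0$, and even there it does \emph{not} force polynomial dependence on~$t$. The extensions in Cases~\ref{case8b}, \ref{case8c}, \ref{case14b}, \ref{case14c}, \ref{case19b}, \ref{case19c} involve ${\rm e}^{\pm2t}$, $\cos 2t$, $\sin 2t$; in the branch $f_u=0$ (gauged to $f=\ve$) Cases~\ref{case5b}--\ref{case7}, \ref{case18b}, \ref{case18c} involve ${\rm e}^{\pm t}$, $\cos t$, $\sin t$; and in Case~\ref{case20} the pair $(\tau,\xi)$ runs through the infinite-dimensional solution set of $\tau_t=\xi_x$, $\xi_t=\ve\tau_x$. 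The ``time behaves like a parameter'' heuristic is specific to evolution equations; for wave and elliptic equations $t$ and~$x$ enter on an equal footing, so the $t$-dependence has to be resolved branch by branch by integrating equations whose form depends on $(f,g)$ (e.g.\ a third-order ODE for~$\tau$ in the $f=\ve u^{-4}$ branch, with polynomial, exponential or trigonometric solutions according to the sign of a constant).

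More fundamentally, the proposal leaves unexecuted precisely the content of the theorem: the exhaustive enumeration of leaves in the semilinear branch $f=\ve$, where $g$ is an arbitrary function of the two arguments $(x,u)$ and where the delicate Cases~\ref{case5a}--\ref{case7}, \ref{case18a}--\ref{case18c} and~\ref{case20} live, together with the maximality check of each listed algebra and the verification that no sub-branch and no value of the constant parameters has been lost (the very defect of~\cite{song2009a} and of the related classifications in~\cite{lahn2005a,lahn2006a}). The sentence ``every branch of the splitting tree has been exhausted'' is the statement to be proved, not an argument for it. So while the strategy is a recognized alternative in principle, as written it contains a false intermediate claim and omits the entire case analysis that constitutes the proof; it would need either the full direct-method execution or the structural input the paper supplies through its description of~$\mathcal G^\sim$.
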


\begin{table}[ptb]\footnotesize\setcounter{tbn}{0}
\renewcommand{\arraystretch}{1.4}\belowcaptionskip=.7ex\abovecaptionskip=.0ex
{\centering
\caption{$G^\sim$-inequivalent Lie-symmetry extensions of~$\mathfrak g^\cap=\langle\p_t\rangle$ for the class~\eqref{eq:GenWaveEqs}\label{tab:GenWaveEqsExtensions}}
\begin{tabular}{|c|l|l|l|}
\hline
N\hfil &\hfil $f$ &\hfil $g$ &\hfil Basis of extension \\
\hline
\cn{case1}    & $\hat f(\omega)|u|^p$ & $\hat g(\omega)|u|^pu$   & $-pt\p_t + 2\delta\p_x+2u\p_u$ \\
\cn{case2}    & $\hat f(u){\rm e}^x$  & $\hat g(u){\rm e}^x$     & $t\p_t-2\p_x$ \\
\cn{case3}    & $\hat f(x){\rm e}^u$  & $\hat g(x){\rm e}^u$     & $t\p_t - 2\p_u$ \\
\cn{case4}    & $\hat f(u) $          & $\hat g(u)$              & $\p_x$ \\
\hline
\ca{case5a}  & $\ve$                 & $\hat g(u)$              & $\p_x$, $x\p_t+\ve t\p_x$ \\
\cb{case5b}  & $1$                   & $\hat g(u){\rm e}^{-2x}$ & $\RR({\rm e}^{x+t})$, $\RR({\rm e}^{x-t})$ \\
\cb{case5c}  & $-1$                  & $\hat g(u){\rm e}^{-2x}$ & $\RR({\rm e}^x\cos t)$, $\RR({\rm e}^x\sin t)$ \\
\ca{case6a}  & $\ve$                 & $\hat g(u)x^{-2}$        & $t\p_t+x\p_x$, $(t^2+\ve x^2)\p_t+2tx\p_x$ \\
\cb{case6b}  & $1$                   & $\hat g(u)\cos^{-2}x$    & $\RR(\cos t\,\cos x)$, $\RR(\sin t\,\cos x)$ \\
\cb{case6c}  & $1$                   & $-\hat g(u)\cosh^{-2}x$  & $\RR({\rm e}^t\cosh x)$, $\RR({\rm e}^{-t}\cosh x)$ \\
\cb{case6d}  & $1$                   & $\hat g(u)\sinh^{-2}x$   & $\RR({\rm e}^t\sinh x)$, $\RR({\rm e}^{-t}\sinh x)$ \\
\cb{case6e}  & $-1$                  & $\hat g(u)\cos^{-2}x$    & $\RR({\rm e}^t\cos x)$, $\RR({\rm e}^{-t}\cos x)$ \\
\cb{case6f}  & $-1$                  & $\hat g(u)\sinh^{-2}x$   & $\RR(\cos t\,\sinh x)$, $\RR(\sin t\,\sinh x)$ \\
\cn{case7}   & $-1$                  & $\hat g(u)\cosh^{-2}x$   & $\RR(\cos t\,\cosh x)$, $\RR(\sin t\,\cosh x)$ \\
\ca{case8a}  & $\ve u^{-4}$          & $\mu(x)u^{-3}$           & $2t\p_t+u\p_u$, $t^2\p_t+tu\p_u$ \\
\cb{case8b}  & $\ve u^{-4}$          & $\mu(x)u^{-3}+u$         & ${\rm e}^{2t}(\p_t+u\p_u)$, ${\rm e}^{-2t}(\p_t-u\p_u)$ \\
\cb{case8c}  & $\ve u^{-4}$          & $\mu(x)u^{-3}-u$         & $\cos(2t)\p_t-\sin(2t)u\p_u$, $\sin(2t)\p_t+\cos(2t)u\p_u$ \\
\cn{case9}   & $\ve {\rm e}^x|u|^p$  & $\nu {\rm e}^x|u|^pu$    & $p\p_x-u\p_u$, $t\p_t-2\p_x$ \\
\cn{case10}  & $\ve x^2{\rm e}^u$    & $\nu {\rm e}^u$          & $x\p_x$, $t\p_t-2\p_u$ \\
\cn{case11}  & $\hat f(u)$           & $0$                      & $\p_x$, $t\p_t+x\p_x$ \\
\cn{case12}  & $\ve {\rm e}^u$       & $\ve' {\rm e}^{qu}$      & $\p_x$, $qt\p_t+(q-1)x\p_x-2\p_u$ \\
\cn{case13}  & $\ve |u|^p$           & $\ve'|u|^q$              & $\p_x$, $(1-q)t\p_t+(1+p-q)x\p_x+2u\p_u$ \\
\hline
\ca{case14a} & $\ve u^{-4}$          & $\ve' u^{-3}$            & $2t\p_t+u\p_u$, $t^2\p_t+tu\p_u$, $\p_x$ \\
\cb{case14b} & $\ve u^{-4}$          & $\ve' u^{-3}+u$          & ${\rm e}^{2t}(\p_t+u\p_u)$, ${\rm e}^{-2t}(\p_t-u\p_u)$, $\p_x$ \\
\cb{case14c} & $\ve u^{-4}$          & $\ve' u^{-3}-u$          & $\cos(2t)\p_t-\sin(2t)u\p_u$, $\sin(2t)\p_t+\cos(2t)u\p_u$, $\p_x$ \\
\cb{case14d} & $\ve u^4$             & $\ve' u$                 & $\p_x$, $2x\p_x+u\p_u$, $x^2\p_x+xu\p_u$ \\
\ca{case15a} & $\ve u^{-4}$          & $\nu x^{-2}u^{-3}$       & $2t\p_t+u\p_u$, $t^2\p_t+tu\p_u$, $2x\p_x-u\p_u$ \\
\cb{case15b} & $\ve u^{-4}$          & $\nu x^{-2}u^{-3}+u$     & ${\rm e}^{2t}(\p_t+u\p_u)$, ${\rm e}^{-2t}(\p_t-u\p_u)$, $2x\p_x-u\p_u$ \\
\cb{case15c} & $\ve u^{-4}$          & $\nu x^{-2}u^{-3}-u$     & $\cos(2t)\p_t-\sin(2t)u\p_u$, $\sin(2t)\p_t+\cos(2t)u\p_u$, $2x\p_x-u\p_u$ \\
\cn{case16}  & $\ve |u|^p$           & $0$                      & $\p_x$, $t\p_t+x\p_x$, $px\p_x+2u\p_u$ \\
\cn{case17}  & $\ve {\rm e}^u$       & $0$                      & $\p_x$, $t\p_t+x\p_x$, $x\p_x+2\p_u$ \\
\ca{case18a} & $\ve$                 & $\ve'|u|^q$              & $\p_x$, $t\p_x+\ve x\p_t$, $(q-1)t\p_t+(q-1)x\p_x-2u\p_u$ \\
\cb{case18b} & $1$                   & $\ve'|u|^q{\rm e}^{-2x}$ & $\RR({\rm e}^{x+t})$, $\RR({\rm e}^{x-t})$, $(q-1)\p_x+2u\p_u$\\
\cb{case18c} & $-1$                  & $\ve'|u|^q{\rm e}^{-2x}$ & $\RR({\rm e}^x\cos t)$, $\RR({\rm e}^x\sin t)$, $(q-1)\p_x+2u\p_u$\\
\hline
\ca{case19a} & $\ve u^{-4}$          & $0$                      & $2t\p_t+u\p_u$, $t^2\p_t+tu\p_u$, $\p_x$, $2x\p_x-u\p_u$ \\
\cb{case19b} & $\ve u^{-4}$          & $u$                      & ${\rm e}^{2t}(\p_t+u\p_u)$, ${\rm e}^{-2t}(\p_t-u\p_u)$, $\p_x$, $2x\p_x-u\p_u$ \\
\cb{case19c} & $\ve u^{-4}$          & $-u$                     & $\cos(2t)\p_t-\sin(2t)u\p_u$, $\sin(2t)\p_t+\cos(2t)u\p_u$, $\p_x$, $2x\p_x-u\p_u$ \\
\cb{case19d} & $\ve u^4$             & $0$                      & $\p_x$, $t\p_t+x\p_x$, $2x\p_x+u\p_u$, $x^2\p_x+xu\p_u$ \\ 
\hline
\cn{case20}  & $\ve$                 & $\ve' {\rm e}^u$         & $\tau\p_t+\xi\p_x-2\tau_t\p_u$ \\
\hline
\end{tabular}
\\[2ex]}
Here $\ve,\ve'=\pm1\bmod G^\sim$, $\delta\in\{0,1\}\bmod G^\sim$, 
$p$, $q$ and~$\nu$ are arbitrary constants with $p\ne0$ and $\nu\ne0$.
Additionally, $p\ne\pm4$ in Case~\ref{case16}, 
and $q\ne0,1$ in Cases~\ref{case18a}--\ref{case18c}. 
In~Case~1, $\omega:=x-\delta\ln|u|$.
$\RR(\Phi):=\Phi_x\p_t+\Phi_t\p_x$. 
The tuple $(\tau,\xi)$ of smooth functions depending on $(t,x)$ 
runs through the solution set of the system $\tau_t=\xi_x$, $\xi_t=\ve\tau_x$. 
Each of the functions $\hat f$, $\hat g$ and $\mu$ is an arbitrary smooth function of a single argument 
that satisfied the conditions needed for the maximality of the corresponding invariance algebra, 
see the paragraph after Theorem~\ref{thm:GenWaveEqsGroupClassification}.
\end{table}

In each case of Table~\ref{tab:GenWaveEqsExtensions} we present
only vector fields which extend the basis $(\p_t)$ of~$\mathfrak g^\cap$
into a basis of the corresponding Lie invariance algebra.
The spans of~$\mathfrak g^\cap$ and the vector fields given in cases of Table~\ref{tab:GenWaveEqsExtensions}
that are parameterized by functions~$\hat f$, $\hat g$ or~$\mu$
are the maximal Lie invariance algebras of the corresponding equations
for the general values of these parameter functions,
but for certain their values additional extensions are possible,
which are equivalent to other cases of Table~\ref{tab:GenWaveEqsExtensions}. 
Thus, $\hat f\ne\const$ in Case~\ref{case4} since otherwise up to $G^\sim$-equivalence we obtain Case~\ref{case5a}.
There are also constraints for constant parameters 
that are imposed by the condition of inequivalence of the corresponding extensions 
or the condition of their maximality.
                                                                                  
Depending on the dimension of Lie-symmetry extension (one, two, three, four or infinity), 
we split the cases of Table~\ref{tab:GenWaveEqsExtensions} into groups separated by horizontal lines.
Note that all Lie-symmetry extensions of maximal and submaximal dimensions (infinity and four) 
for equations from the class~\eqref{eq:GenWaveEqs} 
are not associated with subalgebras of the equivalence algebra~$\mathfrak g^\sim$, 
i.e., they are singular.

The usage of two-level numeration for classification cases listed in Table~\ref{tab:GenWaveEqsExtensions}
is justified by the presence of additional equivalences among them. 
Namely, numbers with the same Arabic numerals and different Roman letters
correspond to cases that are $G^\sim$-inequivalent
but equivalent with respect to additional equivalence transformations. 
To construct all additional equivalence transformations among $G^\sim$-inequivalent classification cases 
and thus to solve the group classification problem for the class~\eqref{eq:GenWaveEqs} 
up to $\mathcal G^\sim$-equivalence, 
we need to classify admissible transformations of this class up to $G^\sim$-equivalence.
This classification is presented in the following theorem, 
which is proved in Section~\ref{sec:GenWaveEqsEquivGroupoid}.

\begin{theorem}\label{thm:ClassWAdmTrans}
A generating (up to $G^\sim$-equivalence) set~$\mathcal B$ of admissible transformations for the class~$\mathcal W$, 
which is minimal and self-consistent with respect to $G^\sim$-equivalence, 
is the union of the following families of admissible transformations, where $\ve,\ve',\ve''=\pm1$:

\begin{enumerate}
\renewcommand{\theenumi}{{\rm T\arabic{enumi}}}
\renewcommand{\labelenumii}{{\rm\alph{enumii}.}}

\item\label{T1} 
$f_x=g_x=0$, \ $f_u\ne0$ or $f=1$, \ $\tilde f=1/f$, \ $\tilde g=-g/f$, \ $\Phi{:}$ \ $\tilde t=x$, \ $\tilde x=t$, \ $\tilde u=u$;

\item\label{T2} 
$f=\ve u^{-4}$, \ $g=\mu(x)u^{-3}+\sigma u$, \ $\sigma\in\{-1,0,1\}$, \ 
$\tilde f=\ve\tilde u^{-4}$, \ $\tilde g=\mu(\tilde x)\tilde u^{-3}$, \ 
$\mu$ runs through the set of smooth functions of~$x$ with $\mu_x\ne0$, 
  \begin{enumerate}
  \item\label{T2a}
  $\Phi{:}$ \ $\tilde t=t^{-1}$, \ $\tilde x=x$, \ $\tilde u=t^{-1}u$\quad if\quad $\sigma=0$;
  \item\label{T2b} 
  $\Phi{:}$ \ $\tilde t=\frac12{\rm e}^{2t}$, \ $\tilde x=x$, \ $\tilde u={\rm e}^t u$\quad if\quad $\sigma=1$;
  \item\label{T2c} 
  $\Phi{:}$ \ $\tilde t=\tan t$, \ $\tilde x=x$, \ $\tilde u=u\cos t$\quad if\quad $\sigma=-1$;
  \end{enumerate}

\item\label{T3} 
$f=1$, \ $g={\rm e}^{-2x}g^2(u)$, \ $\tilde f=1$, \ $\tilde g=g^2(\tilde u)$,\ \
$\Phi{:}$ \ $\tilde t={\rm e}^{-x}\sinh t$, \ $\tilde x={\rm e}^{-x}\cosh t$, \ $\tilde u=u$;

\item\label{T4} 
$f=1$, \ $g=g^1(x)g^2(u)$, \ 
$\tilde f=1$, \ $\tilde g=\tilde x^{-2}g^2(\tilde u)$,\vspace{-1ex}
  \begin{enumerate}
  \item\label{T4a} 
  $g^1(x)=x^{-2}$, \ $\Phi{:}$ \ $\tilde t=\dfrac t{x^2-t^2}$, \ $\tilde x=\dfrac x{x^2-t^2}$, \ $\tilde u=u$;
  \item\label{T4b} 
  $g^1(x)=\cos^{-2}x$, 
  $\Phi{:}$ \ $\tilde t=\dfrac{\cos t}{\sin t+\sin x}$, \ $\tilde x=\dfrac{\cos x}{\sin t+\sin x}$, \ $\tilde u=u$;\vspace{.5ex}
  \item\label{T4c} 
  $g^1(x)=-\cosh^{-2}x$, 
  $\Phi{:}$ \ $\tilde t={\rm e}^t\sinh x$, \ $\tilde x={\rm e}^t\cosh x$, \ $\tilde u=u$;\vspace{1ex}
  \item\label{T4d} 
  $g^1(x)=\sinh^{-2}x$, 
  $\Phi{:}$ \ $\tilde t={\rm e}^t\cosh x$, \ $\tilde x={\rm e}^t\sinh x$, \ $\tilde u=u$;
  \end{enumerate}

\item\label{T5} 
$f=-1$, \ $g={\rm e}^{-2x}g^2(u)$, \ $\tilde f=-1$, \ $\tilde g=g^2(\tilde u)$,\ \
$\Phi{:}$ \ $\tilde t={\rm e}^{-x}\sin t$, \ $\tilde x={\rm e}^{-x}\cos t$, \ $\tilde u=u$;

\item\label{T6} 
$f=-1$, \ $g=g^1(x)g^2(u)$, \ 
$\tilde f=-1$, \ $\tilde g=\tilde x^{-2}g^2(\tilde u)$,\vspace{-1ex}
  \begin{enumerate}
  \item\label{T6a} 
  $g^1(x)=x^{-2}$, \ $\Phi{:}$ \ $\tilde t=\dfrac t{x^2+t^2}$, \ $\tilde x=\dfrac x{x^2+t^2}$, \ $\tilde u=u$;\vspace{.5ex}
  \item\label{T6b} 
  $g^1(x)=\cos^{-2}x$, 
  $\Phi{:}$ \ $\tilde t={\rm e}^t\sin x$, \ $\tilde x={\rm e}^t\cos x$, \ $\tilde u=u$;\vspace{.5ex}
  \item\label{T6c} 
  $g^1(x)=\sinh^{-2}x$, \ $\Phi{:}$ \ $\tilde t=\dfrac{\sin t}{\cos t+\cosh x}$, \ $\tilde x=\dfrac{\sinh x}{\cos t+\cosh x}$, \ $\tilde u=u$;
  \end{enumerate}

\item\label{T7} 
$f=-1$, \ $g=g^2(u)\cosh^{-2}x$, \ $\tilde f=-1$, \ $\tilde g=g^2(\tilde u)\cosh^{-2}\tilde x$,\\[1.5ex]
$\Phi{:}$ \ 
$\tilde t=\arctan\dfrac{\sin\gamma\,\sinh x+\cos\gamma\sin t}{\cos t}$, \ 
$\tilde x=\mathop{\rm arctanh}\dfrac{\cos\gamma\,\sinh x-\sin\gamma\sin t}{\cosh x}$, \ $\tilde u=u$,\\[.5ex] $\gamma\in(0,2\pi)$;

\item\label{T8} 
  \begin{enumerate}
  \item\label{T8a} 
  $f=\tilde f=1$, \ $g_x=0$, \ $\tilde g=g$,\\ 
  $\Phi{:}$ \ $\tilde t=t\cosh\gamma+x\sinh\gamma$, \ $\tilde x=t\sinh\gamma+x\cosh\gamma$, $\tilde u=u$, \ $\gamma\in\mathbb R_{\ne0}$;
  \item\label{T8b} 
  $f=\tilde f=-1$, \ $g_x=0$, \ $\tilde g=g$, \ 
  $\Phi{:}$ \ $\tilde t=t\cos\gamma-x\sin\gamma$, \ $\tilde x=t\sin\gamma+x\cos\gamma$, \ $\tilde u=u$,\\ $\gamma\in(0,2\pi)$;
  \end{enumerate}

\item\label{T9} 
$f=\ve$, \ $g=\ve'{\rm e}^u$, \ $\tilde f=\ve$, \ $\tilde g=\ve'{\rm e}^{\tilde u}$, \ 
$\Phi{:}$ \ $\tilde t=T(t,x)$, \ $\tilde x=X(t,x)$, \ $\tilde u=u+\ln|T_t^{\,2}-\ve T_x^{\,2}|$, 
where $\ve,\ve'=\pm1$, $\ve'=1$ if $\ve=1$,
$(T,X)$ runs through a complete set of representatives of solution cosets of the system $T_t=X_x$, $X_t=\ve T_x$ with $(T_{tt},T_x)\ne(0,0)$
with respect to the action of the group constituted by the transformations of the form 
$\hat t=c_1t+c_0$, $\hat x=c_1x+c_2$, $\hat T=\tilde c_1T+\tilde c_0$, $\hat X=\tilde c_1X+\tilde c_2$, 
where $c_0$, $c_1$, $c_2$, $\tilde c_0$, $\tilde c_1$ and~$\tilde c_2$ are arbitrary constants with $c_1\tilde c_1\ne0$.  

\end{enumerate}
\end{theorem}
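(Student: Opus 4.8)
The plan is to integrate the system of determining equations for admissible transformations of~$\mathcal W$ obtained in Section~\ref{sec:PreliminaryStudyOfAdmTrans}, performing the integration only up to $G^\sim$-equivalence of admissible transformations in the sense of Section~\ref{sec:GenWaveEqsTheory}. One starts from the reduced form $T=T(t,x)$, $X=X(t,x)$, $U=U^1(t,x)u+U^0(t,x)$ with $U^1\ne0$ and $T_tX_x-T_xX_t\ne0$, together with the relation~\eqref{eq:DetEqsForAdmTrans1ut0ux0}, the algebraic relation~\eqref{eq:DetEqsForAdmTrans2} (in which both $T_t^{\,2}-fT_x^{\,2}$ and $X_t^{\,2}-fX_x^{\,2}$ are nonzero), and the second-order equations~\eqref{eq:DetEqsForAdmTrans4ut1}--\eqref{eq:DetEqsForAdmTrans4ut0ux0}, where $U_{ut}/U_u=(\ln|U^1|)_t$ and $U_{ux}/U_u=(\ln|U^1|)_x$. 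The first branching is forced by~\eqref{eq:DetEqsForAdmTrans1ut0ux0}: its nondegenerate solutions split into the \emph{reducible} type, where the pair $(T,X)$ decouples as $\bigl(T(t),X(x)\bigr)$ or as $\bigl(T(x),X(t)\bigr)$, and the \emph{generic} type, where neither $T_x$ nor $X_t$ vanishes. Up to $G^\sim$-equivalence, the reducible solutions are exhausted by the equivalence transformations of Theorem~\ref{thm:EquivalenceGroupIbragrimovClass} together with the variable swap $\tilde t=x$, $\tilde x=t$, which survives precisely in the $x$-independent case and gives family~\ref{T1}.

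For the generic type one fixes the source tuple $\theta=(f,g)$, uses $G^\sim$-equivalence to gauge $f$ toward a canonical form, and reads~\eqref{eq:DetEqsForAdmTrans4ut1}--\eqref{eq:DetEqsForAdmTrans4ut0ux0} as constraints imposed on $(f,g)$ by the unknowns $(T,X,U^1,U^0)$. The method of furcate splitting then organizes the analysis by the number of functionally independent constraints on $(f,g)$ that a nontrivial generic admissible transformation forces. When no further constraint is forced, so that $f=\pm1$ with $g=g(u)$ arbitrary, the generic admissible transformations are the Lorentz boost and rotation families~\ref{T8a}--\ref{T8b}. One or two constraints pin $g$ down to the separated form $g=g^1(x)g^2(u)$ with $f=\pm1$ and $g^1$ running through the finite list $x^{-2}$, $\cos^{-2}x$, $\pm\cosh^{-2}x$, $\sinh^{-2}x$, ${\rm e}^{-2x}$, each admissible $g^1$ being matched by the explicit wave-type or Laplace-type transformation of families~\ref{T3}--\ref{T7}; the one-parameter family~\ref{T7} with parameter~$\gamma$ appears because for $f=-1$, $g=g^2(u)\cosh^{-2}x$ the associated linear system for $(T,X)$ admits a one-parameter family of solutions not removable by the residual freedom. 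A larger number of independent constraints forces one of the two algebraically rigid subclasses, $f=\ve u^{-4}$, $g=\mu(x)u^{-3}+\sigma u$, or $f=\ve$, $g=\ve'{\rm e}^u$.

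To handle the two rigid subclasses one uses the algebraic techniques recalled in Section~\ref{sec:GenWaveEqsTheory}. For $f=\ve u^{-4}$ one substitutes $U=U^1u+U^0$ into~\eqref{eq:DetEqsForAdmTrans4ut0ux0} and splits with respect to powers of~$u$; this forces $U^0=0$ and a projective action in~$t$, giving the representatives~\ref{T2a}--\ref{T2c} according to $\sigma\in\{0,1,-1\}$, with $\mu$ transported as a function of~$x$. For $f=\ve$, $g=\ve'{\rm e}^u$ the coefficient $U^1$ is no longer constant, its equation integrates to $\tilde u=u+\ln|T_t^{\,2}-\ve T_x^{\,2}|$, and $(T,X)$ is governed by the linear system $T_t=X_x$, $X_t=\ve T_x$ --- the same one occurring in the Lie-symmetry Case~\ref{case20}; factoring out the residual freedom, which is exactly the group of maps $\hat t=c_1t+c_0$, $\hat x=c_1x+c_2$, $\hat T=\tilde c_1T+\tilde c_0$, $\hat X=\tilde c_1X+\tilde c_2$, yields the coset description of family~\ref{T9}. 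A functor between~$\mathcal G^\sim$ and the equivalence groupoid of an auxiliary class --- which need not be related to~$\mathcal W$ by point transformations --- can moreover be exploited to obtain the explicit transformations in families~\ref{T3}--\ref{T7} more economically.

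Finally, to assemble~$\mathcal B$ one discards from the case-by-case lists every coset lying in~$\mathcal G^{G^\sim}$, chooses in each remaining coset the canonical representative displayed in~\ref{T1}--\ref{T9}, and then checks that every admissible transformation of~$\mathcal W$ is, up to $G^\sim$-equivalence, a $\star$-composition of elements of $\mathcal B\cup\hat{\mathcal B}\cup\mathcal G^{G^\sim}$, that~$\mathcal B$ is minimal, and that it is self-consistent with respect to $G^\sim$-equivalence, the last point via the necessary condition of Section~\ref{sec:GenWaveEqsTheory} on $\bigl({\rm s}(\mathcal B)\times{\rm t}(\mathcal B)\bigr)\cap({\rm s}\times{\rm t})\bigl(\mathcal G^{G^\sim}\setminus\mathcal G^{\rm f}\bigr)$. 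The main obstacle will be the exhaustive furcate-splitting bookkeeping in the case $f=\pm1$, $g=g^1(x)g^2(u)$: proving that no functional form of~$g^1$ outside the listed ones supports a nontrivial generic admissible transformation, and that all the associated point transformations --- including the $\gamma$-family~\ref{T7} and the boost and rotation families~\ref{T8} --- have been found. A secondary difficulty is the combinatorial verification of minimality and self-consistency, since the source and target conditions of several families overlap (for instance~\ref{T1} and~\ref{T8a} both live on $f=1$, $g_x=0$), so one must confirm that the chosen representatives are genuinely $G^\sim$-inequivalent and cannot be merged or further shortened.
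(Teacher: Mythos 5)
Your case split at the very first step contains a genuine logical gap, and it is exactly where family~\ref{T2} should come from. You assert that, up to $G^\sim$-equivalence, the ``reducible'' solutions of~\eqref{eq:DetEqsForAdmTrans1ut0ux0} (those with $(T,X)$ decoupling as $(T(t),X(x))$ or $(T(x),X(t))$) are exhausted by the equivalence transformations of Theorem~\ref{thm:EquivalenceGroupIbragrimovClass} and the variable swap~\ref{T1}. This is false: every transformation of family~\ref{T2} is of the reducible type $T=T(t)$, $X=x$ with $T$ non-affine in~$t$ ($T=t^{-1}$, $\tfrac12{\rm e}^{2t}$, $\tan t$), and since all elements of~$G^\sim$ have $t$-component $c_1t+c_0$, no composition with $\varpi_*G^\sim$ makes these affine, so they lie neither in~$\mathcal G^{G^\sim}$ nor in the \ref{T1}-cosets. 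You do list \ref{T2a}--\ref{T2c} later, but you derive them inside the ``generic'' branch (neither $T_x$ nor $X_t$ vanishing) as the outcome of furcate splitting ``forcing the rigid subclass $f=\ve u^{-4}$''. That route cannot work: in the generic branch the left-hand side of~\eqref{eq:DetEqsForAdmTrans1ut0ux0} and the factor $T_xX_x$ are $u$-independent, so splitting with respect to~$u$ gives $f_uT_xX_x=0$; with $T_x\ne0$, $X_t\ne0$ and nondegeneracy this forces $f_u=0$. Hence $f=\ve u^{-4}$ can never appear in the generic branch, and as written your plan has no consistent place where family~\ref{T2} is actually obtained. The paper avoids this by first partitioning $\mathcal W=\mathcal W_0\sqcup\mathcal W_1$ according to $f_u=0$ or $f_u\ne0$ (Proposition~\ref{pro:GenWaveEqsGroupoidPartition}) and devoting Lemma~\ref{lem:ClassW1AdmTrans} to the reducible case with $T_x=X_t=0$ and $T_{tt}\ne0$: there one must derive $f=\alpha(x)(u+\beta(x))^{-4}$, gauge to $f=\ve u^{-4}$, obtain $g=\mu(x)u^{-3}+\sigma u$ as in~\eqref{eq:GenWaveEqsSpecial}, and solve $(\omega^{-1})_{tt}=\sigma\omega^{-1}-\tilde\sigma\omega^3$ with $T_t=\omega^2$ to get the three canonical representatives. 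This computation is the core of the $\mathcal W_1$ half of the theorem and is missing from your proposal; your one-line ``splits with respect to powers of $u$, forcing $U^0=0$ and a projective action in $t$'' presupposes the subclass rather than deriving it.

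A secondary weakness is the completeness argument for families \ref{T3}--\ref{T8}: after furcate splitting pins $g^1$ to the finite list, one still has to prove that, within each of the four rigid subclasses $(f,g^1)\in\{(1,\nu),(-1,\nu),(\ve,\nu x^{-2}),(-1,\nu\cosh^{-2}x)\}$, the displayed transformations (including the boosts/rotations \ref{T8a}--\ref{T8b} and the $\gamma$-family \ref{T7}) exhaust all admissible transformations modulo $G^\sim$. You gesture at this (``not removable by the residual freedom'') but give no mechanism; the paper settles it by extending Hydon's algebraic method, using that the transformational part must induce an automorphism of the corresponding maximal Lie invariance algebra (${\rm p}(1,1)$, ${\rm e}(2)$, ${\rm sl}(2,\mathbb R)$, ${\rm o}(3)$) and working through inner versus outer automorphisms. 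Your intended use of a functor to transfer the result from the subclass~\eqref{eq:GenWaveEqsSingularSubclass00} to the separated-$g$ subclass~\eqref{eq:GenWaveEqsSingularSubclass01} does match the paper, and your treatment of~\ref{T9} is essentially the paper's; but without the two items above the proof does not close.
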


Throughout the rest of the paper, we use the notation T$N$ 
for the admissible transformation given in item~$N$ of Theorem~\ref{thm:ClassWAdmTrans},
where $N$ is the corresponding (one- or two-level) label. 

\begin{remark}
The transformational part~$\Phi$ of admissible transformation~\ref{T4b} can be represented~as 
\begin{enumerate}
\item[\ref{T4b}:] 
$\tilde t=\cot\dfrac{x+t}2+\tan\dfrac{x-t}2$, \ $\tilde x=\cot\dfrac{x+t}2-\tan\dfrac{x-t}2$, \ $\tilde u=u$.
\end{enumerate}
The transformational parts~$\Phi$ of admissible transformations~\ref{T4c} and~\ref{T4d} can be replaced by alternative ones, 
which are analogous to that of~\ref{T4b},
\begin{enumerate}
\item[\ref{T4c}:] 
$\tilde t=\coth\dfrac{x+t}2-\tanh\dfrac{x-t}2$, \ $\tilde x=\coth\dfrac{x+t}2+\tanh\dfrac{x-t}2$, \ $\tilde u=u$;
\item[\ref{T4d}:] 
$\tilde t=\tanh\dfrac{x+t}2-\tanh\dfrac{x-t}2$, \ $\tilde x=\tanh\dfrac{x+t}2+\tanh\dfrac{x-t}2$, \ $\tilde u=u$.
\end{enumerate}
There also exist similar alternatives for transformational parts of other admissible transformations in the class~$\mathcal W$. 
The counterparts of modified admissible transformations \ref{T4b}--\ref{T4d} and of admissible transformation \ref{T3} 
for linear equations from the class~$\mathcal W_{\rm lin}$ were presented in Notes~1 and~2 of~\cite{zhda1993a}.
\end{remark}

As a result, we obtain the following independent additional equivalence transformations 
among classification cases given in Table~\ref{tab:GenWaveEqsExtensions}.
(Below we do not indicate the corresponding parameters if they are not changed.)

\begin{enumerate}
\item[\ref{T1}:] (a) \ 
\ref{case4}$_{\hat f,\hat g}$ $\to$ \ref{case4}$_{1/\hat f,-\hat g/\hat f}$, \ 
\ref{case5a}$_{\hat g}$ $\to$ \ref{case5a}$_{-\hat g}$ if $\ve=1$, \
\ref{case11}$_{\hat f}$ $\to$ \ref{case11}$_{1/\hat f}$, \
\ref{case12}$_{q,\ve,\ve'}\circ(u\to-u)$ $\to$ \ref{case12}$_{1-q,\ve,-\ve\ve'}$,\\[1ex] 
\ref{case13}$_{p,q,\ve,\ve'}$ $\to$ \ref{case13}$_{-p,q-p,\ve,-\ve\ve'}$, \ 
\ref{case14d}$_{\ve,\ve'}$ $\to$ \ref{case14a}$_{\ve,-\ve\ve'}$, \ 
\ref{case16}$_{p,\ve}$ $\to$ \ref{case16}$_{-p,\ve}$, \ 
\ref{case18a}$_{q,\ve,\ve'}$ $\to$ \ref{case18a}$_{q,\ve,-\ve\ve'}$, \\[1ex]
\ref{case19d} $\to$ \ref{case19a}, \ 
\ref{case20}$_{\ve,\ve'}$ $\to$ \ref{case20}$_{\ve,-\ve\ve'}$. 
\item[\ref{T2}:] 
(b) \ 
\ref{case8b}  $\to$ \ref{case8a}, \ 
\ref{case14b} $\to$ \ref{case14a}, \ 
\ref{case15b} $\to$ \ref{case15a}, \ 
\ref{case19b} $\to$ \ref{case19a};
\\
(c) \ 
\ref{case8c}  $\to$ \ref{case8a}, \ 
\ref{case14c} $\to$ \ref{case14a}, \ 
\ref{case15c} $\to$ \ref{case15a}, \ 
\ref{case19c} $\to$ \ref{case19a}.
\item[\ref{T3}:] 
\ref{case5b}  $\to$ \ref{case5a}$_{\ve=1},$ \ 
\ref{case18b}  $\to$ \ref{case18a}$_{\ve=1}.$ 
\quad \ref{T4}: \ 
(b) \ \ref{case6b} $\to$ \ref{case6a}$_{\ve=1}$, \ 
(c) \ \ref{case6c} $\to$ \ref{case6a}$_{\ve=1}$, \ 
(d) \ \ref{case6d} $\to$ \ref{case6a}$_{\ve=1}$. 
\item[\ref{T5}:] 
\ref{case5c} $\to$ \ref{case5a}$_{\ve=-1},$ \ 
\ref{case18c} $\to$ \ref{case18a}$_{\ve=-1}.$ 
\quad \ref{T6}: \ 
(b) \ \ref{case6e} $\to$ \ref{case6a}$_{\ve=-1}$, \ 
(c) \ \ref{case6f} $\to$ \ref{case6a}$_{\ve=-1}$.
\end{enumerate}

\begin{remark}\label{rem:GenWaveClassRegularCaseEquivToSingularCases}
In Table~\ref{tab:GenWaveEqsExtensions},
only Cases \ref{case1}--\ref{case4}, \ref{case9}--\ref{case13}, \ref{case14d}, \ref{case16}, \ref{case17} and \ref{case19d} 
present regular Lie-symmetry extensions in the class~$\mathcal W$.
Therefore, the regular Cases~\ref{case14d} and~\ref{case19d} are $G^\sim$-inequivalent but $\mathcal G^\sim$-equivalent 
to the singular Cases~\ref{case14a} and~\ref{case19a}, respectively. 
\end{remark}

Consider the subclass~$\mathcal W_{\rm c}$ of the class~$\mathcal W$ 
singled out by the additional constraints $f_x=g_x=0$ for the arbitrary-element tuple $\theta=(f,g)$,
i.e., the class of equations of the general form
\begin{equation}\label{eq:GenWaveClassConstCoeffSubclass}
u_{tt}=f(u)u_{xx}+g(u),
\end{equation}
where $(f_{u},g_{uu})\ne(0,0)$.
Cases~\ref{case4}, \ref{case5a}, \ref{case11}, \ref{case12}, \ref{case13}, 
\ref{case14a}--\ref{case14d}, \ref{case16}, \ref{case17}, \ref{case18a}, 
\ref{case19a}--\ref{case19d} and \ref{case20}
of Table~\ref{tab:GenWaveEqsExtensions} 
are related to the subclass~$\mathcal W_{\rm c}$.
The kernel Lie invariance algebra~$\mathfrak g_{\rm c}=\langle\p_t,\p_x\rangle$ 
of equations from the subclass~$\mathcal W_{\rm c}$ 
is given by Case~\ref{case4}, which is the general case within this subclass. 
It is obvious from Theorem~\ref{thm:GenWaveEqsGroupClassification} 
and the above list of additional equivalence transformations 
that a complete list of~$\mathcal G^\sim_{\rm c}$-inequivalent Lie-symmetry extensions 
within the subclass~$\mathcal W_{\rm c}$, where $\mathcal G^\sim_{\rm c}$ is its equivalence groupoid,
is exhausted by Cases~\ref{case5a}, \ref{case11}, \ref{case12}, \ref{case13}, 
\ref{case14a}, \ref{case16}, \ref{case17}, \ref{case18a}, \ref{case19a} and \ref{case20}
of Table~\ref{tab:GenWaveEqsExtensions}, where additionally 
$q>1/2$ in Case~\ref{case12}, 
$p>0$ in Cases~\ref{case13} and~\ref{case16}, 
and $\ve'=1$ in Cases~\ref{case18a} and~\ref{case20} with $\ve=1$.
The group classification of the subclass~$\mathcal W_{\rm c}$ up to equivalence 
generated by its equivalence group $G^\sim_{\rm c}$ is more delicate. 
The group $G^\sim_{\rm c}$ is generated by transformations of the form~\eqref{eq:EquivalenceGroupGenWaveEqs}
with $\varphi_{xx}=\psi_x=0$, which constitute the intersection $G^\sim_{\rm c}\cap G^\sim$, 
and one more (discrete) equivalence transformation 
$\tilde t=x$, $\tilde x=t$, $\tilde u=u$, $\tilde f=1/f$, $\tilde g=-f/g$ of~$\mathcal W_{\rm c}$, 
which generates the family \ref{T1} of admissible transformations within~$\mathcal W$. 
This is why some $G^\sim$-inequivalent Lie-symmetry extensions can be $G^\sim_{\rm c}$-equivalent, 
which occurs for Cases~\ref{case14a} and~\ref{case14d} as well as for Cases~\ref{case19a} and~\ref{case19d}. 
The converse situation is not possible since the subgroupoid of~$\mathcal G^\sim_{\rm c}$ 
generated by~$G^\sim$ is contained in the subgroupoid generated by~$G^\sim_{\rm c}$.
This results in the following assertion. 

\begin{corollary}
A complete list of~$G^\sim_{\rm c}$-inequivalent Lie-symmetry extensions 
within the subclass~$\mathcal W_{\rm c}$ is exhausted by 
Cases~\ref{case5a}, \ref{case11}, \ref{case12}, \ref{case13}, 
\ref{case14a}--\ref{case14c}, \ref{case16}, \ref{case17}, \ref{case18a}, \ref{case19a}--\ref{case19c} and \ref{case20}
of Table~\ref{tab:GenWaveEqsExtensions}, where additionally 
$q>1/2$ in Case~\ref{case12}, 
$p>0$ in Cases~\ref{case13} and~\ref{case16}, 
and $\ve'=1$ in Cases~\ref{case18a} and~\ref{case20} with $\ve=1$.
\end{corollary}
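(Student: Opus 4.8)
The plan is to read the statement off from Theorem~\ref{thm:GenWaveEqsGroupClassification}, the explicit form of the equivalence group~$G^\sim_{\rm c}$ recorded above, and the list of additional equivalence transformations produced by Theorem~\ref{thm:ClassWAdmTrans}, proceeding in three steps. First I would single out the cases of Table~\ref{tab:GenWaveEqsExtensions} whose arbitrary-element tuple satisfies $f_x=g_x=0$, namely Cases~\ref{case4}, \ref{case5a}, \ref{case11}--\ref{case13}, \ref{case14a}--\ref{case14d}, \ref{case16}, \ref{case17}, \ref{case18a} and \ref{case19a}--\ref{case20}, and discard Case~\ref{case4}, which realizes the kernel algebra $\mathfrak g_{\rm c}=\langle\p_t,\p_x\rangle$ of~$\mathcal W_{\rm c}$ rather than a proper extension of it. By Theorem~\ref{thm:GenWaveEqsGroupClassification}, the remaining cases, with the constant-parameter constraints imposed in Table~\ref{tab:GenWaveEqsExtensions} and in the paragraph after it, form a complete list of $G^\sim$-inequivalent Lie-symmetry extensions of~$\mathfrak g_{\rm c}$ inside~$\mathcal W_{\rm c}$.

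Next I would pass from $G^\sim$-equivalence to $G^\sim_{\rm c}$-equivalence. Among the additional equivalence transformations listed after Theorem~\ref{thm:ClassWAdmTrans}, the only ones relating cases that lie in~$\mathcal W_{\rm c}$ are those coming from the involution~\ref{T1} (which is $\tilde t=x$, $\tilde x=t$, $\tilde u=u$ and belongs to~$G^\sim_{\rm c}$) and those coming from the families~\ref{T2b} and~\ref{T2c}, whose transformational parts act on~$t$ by $\tfrac12{\rm e}^{2t}$ and $\tan t$ and hence are \emph{not} projections of elements of~$G^\sim_{\rm c}$ — the group generated by~\ref{T1} together with the transformations~\eqref{eq:EquivalenceGroupGenWaveEqs} having $\varphi_{xx}=\psi_x=0$, all of whose projections to $(t,x,u)$ act affinely on the independent variables (possibly interchanging them). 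Applying the \ref{T1}-equivalences and the scalings and shifts of $G^\sim_{\rm c}\cap G^\sim$ to the list of the first step identifies Case~\ref{case14d} with Case~\ref{case14a} and Case~\ref{case19d} with Case~\ref{case19a}, and normalizes the remaining constants exactly as in the statement ($q>1/2$ in Case~\ref{case12}, $p>0$ in Cases~\ref{case13} and~\ref{case16}, and $\ve'=1$ when $\ve=1$ in Cases~\ref{case18a} and~\ref{case20}), these being precisely the normalizations allowed by the \ref{T1}-induced identifications recorded above the corollary. This produces exactly the list claimed in the corollary.

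It remains to prove that this list is irredundant. Since $\mathcal G^{G^\sim_{\rm c}}\subseteq\mathcal G^\sim_{\rm c}$, any $G^\sim_{\rm c}$-equivalence between two listed cases is in particular a $\mathcal G^\sim_{\rm c}$-equivalence; by the complete $\mathcal G^\sim_{\rm c}$-classification recorded above, all the listed cases except those among Cases~\ref{case14a}, \ref{case14b}, \ref{case14c} and among Cases~\ref{case19a}, \ref{case19b}, \ref{case19c} are already pairwise $\mathcal G^\sim_{\rm c}$-inequivalent, so it suffices to separate, under~$G^\sim_{\rm c}$, the three cases in each of these two blocks. For this I would compute the $G^\sim_{\rm c}$-orbit of each equation directly from the two generators of~$G^\sim_{\rm c}$: the subgroup $G^\sim_{\rm c}\cap G^\sim$ only rescales and shifts, so it preserves the presence and the sign of the linear-in-$u$ summand of~$g$ that tells Cases~\ref{case14b} and~\ref{case14c} apart and apart from Case~\ref{case14a}, while the involution~\ref{T1} carries Case~\ref{case14a} to Case~\ref{case14d} and carries Cases~\ref{case14b} and~\ref{case14c} to equations with $f=\ve u^4$ whose~$g$ acquires a $u^5$ summand and which are therefore neither Case~\ref{case14a} nor Case~\ref{case14d}; iterating the two generators does not enlarge these orbits, so the orbits through Cases~\ref{case14a}/\ref{case14d}, \ref{case14b} and~\ref{case14c} are pairwise disjoint, and the same computation settles Cases~\ref{case19a}, \ref{case19b}, \ref{case19c}. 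I expect this last step — showing that Cases~\ref{case14a}--\ref{case14c} (and Cases~\ref{case19a}--\ref{case19c}), which \emph{are} related by admissible transformations of~$\mathcal W$, nonetheless remain $G^\sim_{\rm c}$-inequivalent — to be the only genuinely delicate point, the rest being bookkeeping with classifications already in hand.
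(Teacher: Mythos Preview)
Your proposal is correct and follows essentially the same route as the paper: identify the cases of Table~\ref{tab:GenWaveEqsExtensions} lying in~$\mathcal W_{\rm c}$, use that $G^\sim_{\rm c}$ is generated by $G^\sim_{\rm c}\cap G^\sim$ together with the single extra involution~\ref{T1}, and then read off which $G^\sim$-inequivalent cases merge under~\ref{T1} (only \ref{case14a}$\sim$\ref{case14d} and \ref{case19a}$\sim$\ref{case19d}) and which parameter ranges get normalized. Your irredundancy step is spelled out in somewhat more detail than the paper's---you pass through the $\mathcal G^\sim_{\rm c}$-classification and then verify by direct orbit computation that \ref{case14a}, \ref{case14b}, \ref{case14c} (and likewise \ref{case19a}--\ref{case19c}) remain $G^\sim_{\rm c}$-inequivalent---whereas the paper dispatches this in one sentence by noting that the action of~$G^\sim$ on~$\mathcal W_{\rm c}$ is contained in that of~$G^\sim_{\rm c}$, so no $G^\sim$-class can split, and the only merges are those forced by the \ref{T1} list.
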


\section{Equivalence groupoid and singular Lie-symmetry extensions}\label{sec:GenWaveEqsEquivGroupoid}

The equivalence groupoid~$\mathcal G^\sim$ of the class~$\mathcal W$
contains admissible transformations that are not generated by elements of~$G^\sim$,
i.e., this class is not normalized.
Nevertheless, we can describe the groupoid~$\mathcal G^\sim$,
classifying admissible transformations within the class~$\mathcal W$
up to the $G^\sim$-equivalence;
see \cite{popo2010a} for posing the general problem on classifying admissible transformations. 
More precisely, proving Theorem~\ref{thm:ClassWAdmTrans}, 
we construct the generating (up to $G^\sim$-equivalence) subset~$\mathcal B$ of~$\mathcal G^\sim$
with the simultaneous classification of singular Lie-symmetry extensions within the class~$\mathcal W$.

Since the class~$\mathcal W$ is a subclass of~$\mathcal W_{\rm gen}$,
the transformational part~$\Phi$ of any admissible transformation
$\mathcal T=(\theta,\Phi,\tilde\theta)$ in the class~$\mathcal W$ takes the form
\[\tilde t=T(t,x),\quad \tilde x=X(t,x),\quad \tilde u=U=U^1(t,x)u+U^0(t,x)\]
with $T_tX_x-T_xX_t\ne0$ and $U^1\ne0$,
and additionally the system~\eqref{eq:DetEqsForAdmTrans1ut0ux0}--\eqref{eq:DetEqsForAdmTrans4ut0ux0}
is satisfied.
Here $\theta=(f,g)$ and $\tilde\theta=(\tilde f,\tilde g)$
are respectively the source and target arbitrary-element tuples for~$\mathcal T$, $\mathcal L_\theta,\mathcal L_{\tilde\theta}\in\mathcal W$.

By $\mathcal W_0$ and~$\mathcal W_1$ we respectively denote
the subclasses of~$\mathcal W$ singled out by the constraints $f_u=0$ and $f_u\ne0$.
The partition $\mathcal W=\mathcal W_0\sqcup\mathcal W_1$ of the class~$\mathcal W$
induces the partition of the equivalence groupoid~$\mathcal G^\sim$ of this class
since $\tilde f_{\tilde u}=0$ if and only if $f_u=0$,
cf.\ the end of Section~\ref{sec:PreliminaryStudyOfAdmTrans}.
In other words, equations in the subclass~$\mathcal W_0$ are not related
by point transformations to equations in the subclass~$\mathcal W_1$.
This claim can be nicely reformulated in terms of equivalence groupoids.

\begin{proposition}\label{pro:GenWaveEqsGroupoidPartition}
The equivalence groupoid~$\mathcal G^\sim$ of the class~$\mathcal W$
is the disjoint union of the equivalence groupoids~$\mathcal G^\sim_0$ and~$\mathcal G^\sim_1$
of the subclasses $\mathcal W_0$ and~$\mathcal W_1$,
$\mathcal G^\sim=\mathcal G^\sim_0\sqcup\mathcal G^\sim_1$.
\end{proposition}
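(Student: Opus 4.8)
The plan is to derive the claim directly from the partition technique described in Section~\ref{sec:GenWaveEqsTheory}, using as the single nontrivial input the stability of the constraint $f_u=0$ under admissible transformations, which was already established at the end of Section~\ref{sec:PreliminaryStudyOfAdmTrans} (there it is shown that $\tilde f_{\tilde u}=0$ if and only if $f_u=0$ for any admissible transformation of~$\mathcal W_{\rm gen}$).

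First I would note that the subclasses $\mathcal W_0$ and~$\mathcal W_1$ are singled out from~$\mathcal W$ by the mutually exclusive and jointly exhaustive constraints $f_u=0$ and $f_u\ne0$ on the arbitrary-element tuple $\theta=(f,g)$, so that $\mathcal W=\mathcal W_0\sqcup\mathcal W_1$ is genuinely a partition of the class. Since every point transformation relating two equations from~$\mathcal W_0$ (resp.\ from~$\mathcal W_1$) in particular relates two equations from~$\mathcal W$, the equivalence groupoids $\mathcal G^\sim_0$ and~$\mathcal G^\sim_1$ are subgroupoids of~$\mathcal G^\sim$ with object sets $\mathcal W_0$ and~$\mathcal W_1$, respectively. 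Their disjointness is then immediate: for $\mathcal T\in\mathcal G^\sim_0$ both ${\rm s}(\mathcal T)$ and ${\rm t}(\mathcal T)$ lie in~$\mathcal W_0$, for $\mathcal T\in\mathcal G^\sim_1$ both lie in~$\mathcal W_1$, and $\mathcal W_0\cap\mathcal W_1=\varnothing$.

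It remains to check that $\mathcal G^\sim$ is exhausted by these two subgroupoids. Here I would take an arbitrary admissible transformation $\mathcal T=(\theta,\Phi,\tilde\theta)\in\mathcal G^\sim$ with $\theta=(f,g)$ and $\tilde\theta=(\tilde f,\tilde g)$. As $\mathcal W$ is a subclass of~$\mathcal W_{\rm gen}$, the triple~$\mathcal T$ is also an admissible transformation of~$\mathcal W_{\rm gen}$, and therefore $f_u=0$ holds if and only if $\tilde f_{\tilde u}=0$ by the result quoted above. Consequently either $\theta,\tilde\theta\in\mathcal W_0$, whence $\mathcal T\in\mathcal G^\sim_0$, or $\theta,\tilde\theta\in\mathcal W_1$, whence $\mathcal T\in\mathcal G^\sim_1$. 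Combining this with the two preceding observations gives $\mathcal G^\sim=\mathcal G^\sim_0\sqcup\mathcal G^\sim_1$, as asserted.

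There is no genuine obstacle in this argument: the only substantive ingredient, the preservation of $f_u=0$ by admissible transformations, is already in hand, and everything else is a routine unwinding of the definitions of the equivalence groupoid and of a partition of a class into subclasses that are not related by point transformations. The slightly delicate point to state carefully is merely that $\mathcal G^\sim_0$ and $\mathcal G^\sim_1$ are embedded in $\mathcal G^\sim$ as \emph{full} subcategories over $\mathcal W_0$ and $\mathcal W_1$, which is exactly what the preservation of the constraint guarantees.
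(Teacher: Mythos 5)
Your argument is correct and coincides with the paper's own justification: the proposition is deduced precisely from the invariance of the constraint $f_u=0$ under admissible transformations established at the end of Section~\ref{sec:PreliminaryStudyOfAdmTrans}, combined with the routine observation that $\mathcal W=\mathcal W_0\sqcup\mathcal W_1$ and that the subgroupoids sit fully over these subclasses. Nothing is missing.
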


We describe the equivalence groupoids~$\mathcal G^\sim_0$ and~$\mathcal G^\sim_1$ separately.

\begin{lemma}\label{lem:ClassW0AdmTrans}
The usual equivalence group of the subclass~$\mathcal W_0$ coincides with~$G^\sim$.
A generating (up to $G^\sim$-equivalence) set~$\mathcal B_0$ of admissible transformations for the class~$\mathcal W_0$, 
which is minimal and self-consistent with respect to $G^\sim$-equivalence, 
is the union of the restriction of the family~\ref{T1} to~$\mathcal W_0$ and the families~\ref{T3}--\ref{T9}.
\end{lemma}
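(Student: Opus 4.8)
The plan is to specialize the determining equations for admissible transformations of~$\mathcal W_{\rm gen}$ obtained in Section~\ref{sec:PreliminaryStudyOfAdmTrans} to the subclass~$\mathcal W_0$, where $f=f(x)$, and then analyze the possibilities exhaustively. First I would recall that for any admissible transformation of~$\mathcal W_0$ we have $T=T(t,x)$, $X=X(t,x)$, $U=U^1(t,x)u+U^0(t,x)$ with $T_tX_x-T_xX_t\ne0$, $U^1\ne0$, and that the system \eqref{eq:DetEqsForAdmTrans1ut0ux0}--\eqref{eq:DetEqsForAdmTrans4ut0ux0} holds. Since now $\tilde f_{\tilde u}=0$, the key equations \eqref{eq:DetEqsForAdmTrans2} and \eqref{eq:DetEqsForAdmTrans4ut1}, \eqref{eq:DetEqsForAdmTrans4ux1} become a linear overdetermined system in $T$ and $X$ once $U$ is taken into account; in particular \eqref{eq:DetEqsForAdmTrans1ut0ux0} reads $T_tX_t=fT_xX_x$ and \eqref{eq:DetEqsForAdmTrans2} reads $\tilde fT_t^2+X_t^2=f(\tilde fT_x^2+X_x^2)$. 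Combining these (as in the derivation preceding Theorem~\ref{thm:EquivalenceGroupIbragrimovClass}) gives the characteristic-type relations $X_t=\tilde f^{1/2}T_t$ and, after using $\tilde fT_t^2=fX_x^2$ together with \eqref{eq:DetEqsForAdmTrans1ut0ux0}, also $X_x$ proportional to $T_x$; this is where the two subcases $f_x=0$ versus $f_x\ne0$ in~$\mathcal W_0$ naturally appear, since $f$ and~$\tilde f$ may then be normalized by~$G^\sim$ to $\pm1$ or to a nonconstant function of~$x$.

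Next I would split the analysis according to the value of the sign of~$f$ and whether $f$ is constant after using $G^\sim$-equivalence. Using a transformation from $G^\sim$ of the form~\eqref{eq:EquivalenceGroupGenWaveEqs} with suitable~$\varphi$ we can bring $f$ to one of the canonical forms: $f=1$, $f=-1$, $f=\pm1\cdot$(nonconstant), etc., as already dictated by Table~\ref{tab:GenWaveEqsExtensions}. In each resulting case the system for $(T,X)$ becomes the classical Cauchy--Riemann-type system ($X_t=T_x$, $T_t=X_x$ for $f=1$; the rotation-type system $X_t=-T_x$, $T_t=X_x$ for $f=-1$; or, for nonconstant $f$, a system whose integrability forces severe restrictions), and the residual equations \eqref{eq:DetEqsForAdmTrans4ux1} and \eqref{eq:DetEqsForAdmTrans4ut0ux0} determine $U^1$, $U^0$ and the transformed arbitrary element $\tilde g$ in terms of $(T,X,g)$. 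Reading off the resulting families and factoring out the action of $G^\sim$ via the $\mathrm s$-, $\mathrm t$- and conjugation actions on~$\mathcal G^\sim_0$, I would match the surviving cosets with the families~\ref{T3}--\ref{T9} and the restriction of~\ref{T1} to~$\mathcal W_0$. The statement that $G^\sim$ is the usual equivalence group of~$\mathcal W_0$ follows by additionally varying~$g$ in these determining equations and splitting: the extra freedom collapses exactly to the transformations \eqref{eq:EquivalenceGroupGenWaveEqs}, so $\mathcal G^\sim_0$ acquires no new equivalence-group elements beyond those of~$\mathcal W$.

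Finally I would verify the three structural claims: (i) that $\mathcal B_0$ generates $\mathcal G^\sim_0$ up to $G^\sim$-equivalence, by checking that every admissible transformation found in the case analysis is $G^\sim$-equivalent to a composition of members of $\mathcal B_0\cup\hat{\mathcal B}_0\cup\mathcal G^{G^\sim}$; (ii) minimality, by exhibiting for each listed family a source equation whose Lie-symmetry type (read from Table~\ref{tab:GenWaveEqsExtensions}) is not produced by any other family, so that no element of $\mathcal B_0$ is redundant modulo $G^\sim$; and (iii) self-consistency with respect to $G^\sim$-equivalence, by checking the necessary condition that there is no element of $\mathcal G^{G^\sim}\setminus\mathcal G^{\rm f}$ with distinct source and target lying in ${\rm s}(\mathcal B_0)\times{\rm t}(\mathcal B_0)$ — concretely, that the canonical source forms chosen for \ref{T3}--\ref{T9} are pairwise $G^\sim$-inequivalent and $G^\sim$-rigid except through the very arrows listed. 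I expect the main obstacle to be the bookkeeping in step (iii) together with the nonconstant-$f$ subcase of $\mathcal W_0$: showing that a nonconstant $f(x)$ admits essentially no admissible transformations beyond those inherited from~$G^\sim$ requires carefully exploiting the integrability conditions of the characteristic system (differentiating $\tilde fT_t^2=fX_x^2$ and splitting, much as in the proof of Theorem~\ref{thm:EquivalenceGroupIbragrimovClass}), and isolating the finitely many special profiles of $f$ that do admit extra arrows is the delicate part.
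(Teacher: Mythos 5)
There is a genuine gap: your sketch stops exactly where the real work begins. After the reduction to fiber-preserving maps with $U=U^1u+U^0$, the crux of the paper's argument is the analysis of the single remaining equation~\eqref{eq:DetEqsForAdmTrans4ut0ux0} for the pair $(g,\tilde g)$, given that $(T,X)$ only satisfies a wave/Laplace-type system and hence forms an infinite-dimensional family. The paper handles this by furcate splitting: differentiating the reduced equation~\eqref{eq:DetEqsForAdmTrans4ut0ux0Reduced1} yields a pair of first-order equations in~$\tilde g$, and the dichotomy between the nondegenerate and degenerate coefficient matrix forces either $g=g^0(x){\rm e}^u+g^1(x)$ (leading, after gauging, to the subclass~\eqref{eq:GenWaveEqsSingularSubclass00}) or the multiplicative separation $g=g^1(x)g^2(u)$ (subclass~\eqref{eq:GenWaveEqsSingularSubclass01}). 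One must then solve the functional equation $g^1(x)/(X_x^{\;2}-\ve X_t^{\;2})=\tilde g^1(X)$ jointly with $X_{tt}=\ve X_{xx}$ — via characteristic variables and separation of variables — to obtain the finite list~$\mathcal C$ of admissible profiles~$g^1$, and finally classify the arrows among the resulting singular subclasses by the algebraic (Hydon-type) method, using that their maximal Lie invariance algebras realize the mutually non-isomorphic algebras ${\rm p}(1,1)$, ${\rm e}(2)$, ${\rm sl}(2,\mathbb R)$ and ${\rm o}(3)$; the degenerate branch is then settled by the groupoid isomorphism with the exponential branch rather than by a new computation. Your plan to ``read off the resulting families and match them with \ref{T3}--\ref{T9}'' contains no mechanism for any of these steps, and without them neither the completeness of~$\mathcal B_0$ nor its minimality and self-consistency can be established; the interlocking classification of singular Lie-symmetry extensions (Lemma~\ref{lem:ClassW0SpecialLieSymExts}) is an input you would also need but do not produce.

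Two concrete missteps reinforce this. First, your anticipated ``nonconstant-$f$ subcase'' of~$\mathcal W_0$ is a phantom: in~$\mathcal W_0$ one has $f=f(x)$, and the relations $fX_x^{\;2}=\tilde fT_t^{\;2}$, $X_t^{\;2}=f\tilde fT_x^{\;2}$ show $f$ and $\tilde f$ have the same sign, so both are gauged to $\ve=\pm1$ by transformations $\mathscr D(\varphi)$, $\mathscr D(\tilde\varphi)$ from~$G^\sim$; no residual nonconstant canonical form survives, and the delicacy you expect there does not exist (the delicacy is entirely in the $g$-analysis). Second, your intermediate relations are crossed: the consequence of \eqref{eq:DetEqsForAdmTrans1ut0ux0} and \eqref{eq:DetEqsForAdmTrans2} is that $X_x$ is proportional to $T_t$ and $X_t$ to $T_x$ (after the gauge, $X_x=T_t$, $X_t=\ve T_x$), not ``$X_t=\tilde f^{1/2}T_t$'' and ``$X_x$ proportional to $T_x$'' as you write, although your later Cauchy--Riemann statement is the correct one. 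Finally, the claim that the usual equivalence group of~$\mathcal W_0$ equals~$G^\sim$ cannot be obtained by the same splitting as for~$\mathcal W$, since within~$\mathcal W_0$ one may no longer split with respect to~$f_u$ and~$\tilde f_{\tilde u}$ (they vanish identically); it follows instead from the case analysis itself, because for generic~$g$ (i.e.\ $g^1\notin\mathcal C$ and outside the exponential family) every admissible transformation is shown to be generated by~$G^\sim$ — a conclusion your outline presupposes rather than proves.
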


\begin{lemma}\label{lem:ClassW0SpecialLieSymExts}
A complete list of $G^\sim$-inequivalent singular Lie-symmetry extensions for equations from the class~$\mathcal W_0$, 
which are not related to appropriated subalgebras of~$\mathfrak g^\sim$, is exhausted by 
Cases~\ref{case5a}--\ref{case7}, \ref{case18a}--\ref{case18c} and~\ref{case20} 
within the subclass of equations with arbitrary elements of the form $f=\ve$ and $g=g^1(x)g^2(u)$. 
\end{lemma}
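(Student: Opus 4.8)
The plan is to localise singular extensions to the separated subclass via the description of the equivalence groupoid, then to compute the relevant maximal Lie invariance algebras there, and finally to check which of them are singular.

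\emph{Localisation of singular extensions.} On~$\mathcal W_0$ one has $f=f(x)$, so the equivalence transformation~$\mathscr D(\varphi)$ with $\varphi_x=|f|^{-1/2}$, composed with a scaling~$\mathscr D^t$, normalises~$f$ to $\ve:=\operatorname{sgn}f\in\{-1,1\}$; hence it suffices to treat equations $u_{tt}=\ve u_{xx}+g(x,u)$ with $g_{uu}\ne0$. Since $\varpi_*$ is a homomorphism of Lie algebras of vector fields, the preimage of~$\mathfrak g_\theta$ in~$\mathfrak g^\sim$ is a subalgebra projecting onto~$\mathfrak g_\theta$ whenever $\mathfrak g_\theta\subseteq\varpi_*\mathfrak g^\sim$, so $\mathfrak g_\theta$ is singular exactly when $\mathfrak g_\theta\not\subseteq\varpi_*\mathfrak g^\sim$. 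As recorded after Definition~\ref{def:RegularAndSingularLieSymExtensions}, a system carrying a singular extension is then a source of an admissible transformation of~$\mathcal W_0$ not generated by an equivalence transformation — concretely, the one-parameter group of any $Q\in\mathfrak g_\theta\setminus\varpi_*\mathfrak g^\sim$ furnishes such transformations $(\theta,\,\cdot\,,\theta)$ in~$\mathcal G^\sim_0\setminus\mathcal G^{G^\sim}$. By Lemma~\ref{lem:ClassW0AdmTrans} and the self-consistency and minimality of~$\mathcal B_0$ with respect to $G^\sim$-equivalence, every such source is $G^\sim$-equivalent to the source or the target of one of the families~\ref{T1}$|_{\mathcal W_0}$ and~\ref{T3}--\ref{T9}, each of which has $f=\pm1$ and $g=g^1(x)g^2(u)$. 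This already proves the first claim of the lemma: a singular Lie-symmetry extension in~$\mathcal W_0$ is carried only by equations $G^\sim$-equivalent to $u_{tt}=\ve u_{xx}+g^1(x)g^2(u)$ with $g^2_{uu}\ne0$.

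\emph{Lie symmetries of the separated equations.} For $f=\ve$, equations~\eqref{eq:DetEqForLieSymsOfGenWaveEqs1} and~\eqref{eq:DetEqForLieSymsOfGenWaveEqs4} reduce to $\xi_t=\ve\tau_x$ and $\xi_x=\tau_t$, so $(\tau,\xi)$ solves the wave-type system and, by~\eqref{eq:DetEqForLieSymsOfGenWaveEqs2}--\eqref{eq:DetEqForLieSymsOfGenWaveEqs3}, $\eta_u$ equals a constant~$c$; writing $\eta=cu+\eta^0(t,x)$, equation~\eqref{eq:DetEqForLieSymsOfGenWaveEqs5} turns into the classifying equation
\[
\xi g^1_x g^2+(cu+\eta^0)\,g^1 g^2_u=(c-2\tau_t)\,g^1 g^2-\ve\,\eta^0_{xx}+\eta^0_{tt}.
\]
I would then run through the forms of~$g^1$ occurring as source/target parameters of~$\mathcal B_0$ — up to the residual equivalence group $g^1=1$ and $g^1=x^{-2}$, together with the $G^\sim$-inequivalent but additionally equivalent alternatives $\mathrm e^{-2x}$, $\cos^{-2}x$, $\pm\cosh^{-2}x$, $\sinh^{-2}x$ — and split the classifying equation in~$u$ using $g^2_{uu}\ne0$. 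The affine case of~$g^2$ is excluded by nonlinearity; a general~$g^2$ gives the extensions of Cases~\ref{case5a}--\ref{case7}, a power $g^2=|u|^q$ with $q\ne0,1$ those of Cases~\ref{case18a}--\ref{case18c}, and $g^2=\mathrm e^u$ that of Case~\ref{case20} (the pairs $g^1=\mathrm e^{\pm2x}$, $g^2=\mathrm e^u$ reducing to Case~\ref{case20} already under~$G^\sim$); the remaining $g^1$-forms paired with these~$g^2$ are images of Cases~\ref{case5a}, \ref{case6a}, \ref{case18a}, \ref{case20} under the admissible transformations~\ref{T3}--\ref{T7}. By the localisation step there are no further source/target parameters to examine.

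\emph{Singularity and completeness.} Each algebra obtained above contains a vector field whose $\p_t$-coefficient depends on~$x$: $x\p_t+\ve t\p_x$ in Case~\ref{case5a}, $(t^2+\ve x^2)\p_t+2tx\p_x$ in Case~\ref{case6a}, $t\p_x+\ve x\p_t$ in Case~\ref{case18a}, and the general field $\tau\p_t+\xi\p_x-2\tau_t\p_u$ in Case~\ref{case20}. By~\eqref{eq:EquivalenceAlgebraGenWaveEqs}, the $\p_t$-component of every element of~$\varpi_*\mathfrak g^\sim$ is affine in~$t$ and independent of~$x,u$, so none of these vector fields belongs to~$\varpi_*\mathfrak g^\sim$; hence all the listed extensions are singular, and the same propagates to the letter-variants by $G^\sim$-conjugation. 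Together with the localisation step this shows the list is exhaustive, and its entries are pairwise $G^\sim$-inequivalent by Theorem~\ref{thm:GenWaveEqsGroupClassification}. The hard part is the middle step — pushing the splitting of the classifying equation far enough to certify that no special~$g^2$ beyond the general, power and exponential ones survives, and keeping track of which $G^\sim$-inequivalent forms of~$g^1$ are merely additional-equivalence images of $g^1\in\{1,x^{-2}\}$.
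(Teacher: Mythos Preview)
Your route is valid but structurally different from the paper's. The paper proves Lemmas~\ref{lem:ClassW0AdmTrans} and~\ref{lem:ClassW0SpecialLieSymExts} in tandem: it applies furcate splitting directly to the determining equations~\eqref{eq:DetEqsForAdmTrans4ut0ux0Reduced1}--\eqref{eq:DetEqsForAdmTrans4ut0ux0Reduced1b} for admissible transformations, deriving in one branch the Liouville-type form $g=\ve'{\rm e}^u+g^1(x)$ with $g^1$ in an explicit finite list~$\mathcal C$, in the other branch the separated form $g=g^1(x)g^2(u)$, and then exploits a functor between the two resulting subgroupoids so that the classification need only be done once. The Lie algebras of the reduced equations (isomorphism types ${\rm p}(1,1)$, ${\rm e}(2)$, ${\rm sl}(2,\mathbb R)$, ${\rm o}(3)$) feed back into a Hydon-type algebraic classification of vertex groups to finish building~$\mathcal B_0$. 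You instead take Lemma~\ref{lem:ClassW0AdmTrans} as input, localise singular extensions to sources and targets of~$\mathcal B_0$ via the generating-set decomposition $\mathcal T=\mathcal T_0\star\mathcal T_1\star\dots\star\mathcal T_{n+1}$, and run the Lie-symmetry computation directly on the finitely many $g^1$-forms. This is cleaner and decouples the two lemmas, at the cost of not producing Lemma~\ref{lem:ClassW0AdmTrans} along the way; your localisation step and your singularity check via the $x$-dependence of the $\p_t$-coefficient are both correct.

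There is one genuine slip. You list $\pm\cosh^{-2}x$ among the ``additionally equivalent alternatives'' to $\{1,x^{-2}\}$ and then assert that ``the remaining $g^1$-forms \dots\ are images of Cases~\ref{case5a}, \ref{case6a}, \ref{case18a}, \ref{case20} under the admissible transformations~\ref{T3}--\ref{T7}''. For $f=-1$ and $g^1=\cosh^{-2}x$ (Case~\ref{case7}) this is false over~$\mathbb R$: the family~\ref{T7} is a one-parameter group of \emph{symmetries} of that equation, not a reduction to $g^1=x^{-2}$, and the maximal algebra ${\rm o}(3)$ is not isomorphic to the ${\rm sl}(2,\mathbb R)$ of Case~\ref{case6a}. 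The paper circumvents this by noting that Case~\ref{case7} reduces to Case~\ref{case6a} over~$\mathbb C$, which is enough to transfer the absence of further extensions for special~$g^2$ since the classifying equation is analytic in the relevant data; you should either invoke that complex-field argument explicitly or run the splitting for $(f,g^1)=(-1,\cosh^{-2}x)$ on its own.
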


\begin{proof}
We will simultaneously prove Lemmas~\ref{lem:ClassW0AdmTrans} and~\ref{lem:ClassW0SpecialLieSymExts}.
Let $\mathcal T\in \mathcal G^\sim_0$,
i.e., $f_u=0$, $g_{uu}\ne0$, $\tilde f_{\tilde u}=0$ and $\tilde g_{\tilde u\tilde u}\ne0$.
We express $X_t^{\;2}$ from the equation~\eqref{eq:DetEqsForAdmTrans2}, substitute this expression into the squared equation~\eqref{eq:DetEqsForAdmTrans1ut0ux0}.
After factorizing the resulting equation, we obtain the equation 
\[(T_t^{\;2}-fT_x^{\;2})(fX_x^{\;2}-\tilde fT_t^{\;2})=0,\]
which implies in view of $T_t^{\;2}-fT_x^{\;2}\ne0$ that $fX_x^{\;2}=\tilde fT_t^{\;2}$.
Then the equation~\eqref{eq:DetEqsForAdmTrans2} yields $X_t^{\;2} = f\tilde fT_x^{\;2}$.
This means that $f$ and~$\tilde f$ have the same sign and up to $G^\sim$-equivalence we can assume that $f=\tilde f=\ve$, where $\ve=\pm1$,
i.e., $X_x^{\;2}=T_t^{\;2}$ and $X_t^{\;2} = T_x^{\;2}$.
More precisely, the gauging of~$f$ and~$\tilde f$ can be realized via transformations
$\mathscr D(\varphi)$ and $\mathscr D(\tilde\varphi)$ of~$x$ and~$\tilde x$, respectively.
Taking into account the equation~\eqref{eq:DetEqsForAdmTrans1ut0ux0} and alternating the sign of~$t$
(this transformation belongs to the kernel group of the class~\eqref{eq:GenWaveEqs}),
we can set 
\[X_x=T_t,\quad X_t=\ve T_x.\]
Since these equations give $T_{tt}=\ve T_{xx}$ and $X_{tt}=\ve X_{xx}$,
the pair of the equations~\eqref{eq:DetEqsForAdmTrans4ut1} and~\eqref{eq:DetEqsForAdmTrans4ux1} reduce to
the system of linear homogeneous algebraic equations 
\[T_tU_{ut}-\ve T_xU_{ux}=0,\quad X_tU_{ut}-\ve X_xU_{ux}=0\]
with respect to~$U_{ut}$ and~$U_{ux}$.
The determinant of the associated matrix is nonzero, $\ve(T_tX_x-T_xX_t)\ne0$.
Hence $U_{ut}=U_{ux}=0$, i.e., $U_u$ is a nonzero constant and using a transformation $\mathscr D^u(c_2)$ we can set $U_u=1$.
In view of the above conditions, the equation~\eqref{eq:DetEqsForAdmTrans4ut0ux0} takes the form
\begin{equation}\label{eq:DetEqsForAdmTrans4ut0ux0Reduced1}
(X_x^{\;2}-\ve X_t^{\;2})\tilde g=g+U^0_{tt}-\ve U^0_{xx}.
\end{equation}
Sequentially acting on the equation~\eqref{eq:DetEqsForAdmTrans4ut0ux0Reduced1} by the operators $(X_x^{\;2}-\ve X_t^{\;2})^{-1}\p_t$ and~$\p_{\tilde t}$,
we obtain two differential consequences of~\eqref{eq:DetEqsForAdmTrans4ut0ux0Reduced1},
\begin{gather}\label{eq:DetEqsForAdmTrans4ut0ux0Reduced1a}
X_t\tilde g_{\tilde x}+U^0_t\tilde g_{\tilde u}+\frac{(X_x^{\;2}-\ve X_t^{\;2})_t}{X_x^{\;2}-\ve X_t^{\;2}}\tilde g
=\frac{(U^0_{tt}-\ve U^0_{xx})_t}{X_x^{\;2}-\ve X_t^{\;2}},
\\ \label{eq:DetEqsForAdmTrans4ut0ux0Reduced1b}
(X_t)_{\tilde t}\tilde g_{\tilde x}+(U^0_t)_{\tilde t}\tilde g_{\tilde u}
+\left(\frac{(X_x^{\;2}-\ve X_t^{\;2})_t}{X_x^{\;2}-\ve X_t^{\;2}}\right)_{\tilde t}\tilde g
=\left(\frac{(U^0_{tt}-\ve U^0_{xx})_t}{X_x^{\;2}-\ve X_t^{\;2}}\right)_{\tilde t}.
\end{gather}
Studying the consistency of the equations~\eqref{eq:DetEqsForAdmTrans4ut0ux0Reduced1a} and~\eqref{eq:DetEqsForAdmTrans4ut0ux0Reduced1b}
as first-order quasilinear partial differential equations with respect to~$\tilde g$,
we consider different cases depending on whether 
the matrix of coefficients of the derivatives~$\tilde g_{\tilde x}$ and~$\tilde g_{\tilde u}$
in the system of these equations is degenerate or nondegenerate.%
\footnote{\label{fnt:MethodOfFurcateSplittingForGenerating SetsOfAdmTrans}%
This procedure and the previous partition of~$\mathcal W$ into~$\mathcal W_0$ and~$\mathcal W_1$
fits well into the framework of the method of furcate splitting  
\cite{niki2001a,opan2020b,vane2015d,vane2012a}. 
The further consideration is the first construction of a generating set of admissible transformations 
using this method. 
The required computations were carried out in 2011 without involving algebraic techniques 
and gave the first application of furcate splitting to finding admissible transformations 
but they were too cumbersome, and the obtained results were not published. 
Later, the method of furcate splitting was used to describe 
the equivalence groupoid of the class of general Burgers--Korteweg--de Vries equations with space-dependent coefficients
via classifying maximal conditional equivalence groups of this class~\cite{opan2017a}, see also~\cite{opan2019a}.
Therefore, the method of furcate splitting can be extended to admissible transformations 
in various ways depending on which terms the corresponding equivalence groupoid can be described in.}

\medskip\par\noindent{\bf 1.}
Suppose first that this matrix is nondegenerate,
$X_t(U^0_t)_{\tilde t}-U^0_t(X_t)_{\tilde t}\ne0$.
We solve 
the system~\eqref{eq:DetEqsForAdmTrans4ut0ux0Reduced1a}--\eqref{eq:DetEqsForAdmTrans4ut0ux0Reduced1b}
as a system of linear algebraic equations with respect to~$\tilde g_{\tilde x}$ and~$\tilde g_{\tilde u}$, 
\begin{gather}\label{eq:DetEqsForAdmTrans4ut0ux0Reduced1Solved}
\tilde g_{\tilde x}=\alpha^1\tilde g+\alpha^0, \quad
\tilde g_{\tilde u}=\beta^1\tilde g+\beta^0. 
\end{gather}
Here the coefficients $\alpha^0$, $\alpha^1$, $\beta^0$ and~$\beta^1$ are functions of~$(\tilde t,\tilde x)$
whose explicit expressions in terms of~$X$ and~$U^0$ are not essential for the further consideration. 
Differentiating the equations~\eqref{eq:DetEqsForAdmTrans4ut0ux0Reduced1Solved} with respect to~$\tilde t$,
we derive the consequences $\alpha^1_{\tilde t}\tilde g+\alpha^0_{\tilde t}=0$, $\beta^1_{\tilde t}\tilde g+\beta^0_{\tilde t}=0$, 
which implies in view of $g_u\ne0$ that $\alpha^1_{\tilde t}=\alpha^0_{\tilde t}=\beta^1_{\tilde t}=\beta^0_{\tilde t}=0$. 
The cross-differentiation of the equations~\eqref{eq:DetEqsForAdmTrans4ut0ux0Reduced1Solved}  
with respect to~$\tilde x$ and~$\tilde u$ leads to the compatibility conditions for these equations, 
which are $\beta^1_{\tilde x}=0$ and $\beta^0_{\tilde x}=\alpha^1\beta^0-\alpha^0\beta^1$. 
Therefore, $\beta^1$ is a constant. 
Since $\tilde g_{\tilde u\tilde u}\ne0$, 
the second equation in~\eqref{eq:DetEqsForAdmTrans4ut0ux0Reduced1Solved} implies that $\beta^1\ne0$. 
Using the equivalence transformation $\mathscr D^u(1/\beta^1)$, we can gauge $\beta^1$ to~1. 
Then the second equation in~\eqref{eq:DetEqsForAdmTrans4ut0ux0Reduced1Solved} integrates to 
$\tilde g=\tilde g^0(\tilde x){\rm e}^{\tilde u}+\tilde g^1(\tilde x)$ with $\tilde g^1=-\beta^0$. 
The equation~\eqref{eq:DetEqsForAdmTrans4ut0ux0Reduced1} implies that 
the function~$g$ is of similar form in the initial variables, $g=g^0(x){\rm e}^u+g^1(x)$, and 
$(X_x^{\;2}-\ve X_t^{\;2}){\rm e}^{U^0}\tilde g^0=g^0$, 
$(X_x^{\;2}-\ve X_t^{\;2})\tilde g^1=g^1+U^0_{tt}-\ve U^0_{xx}$.
Using equivalence transformations of the form $\mathscr Z(\psi)$ in both the old and new variables, 
we set $g^0=\ve'$ and $\tilde g^0=\tilde\ve'$ with $\ve',\tilde\ve'=\pm1$.
Then $(X_x^{\;2}-\ve X_t^{\;2}){\rm e}^{U^0}\tilde\ve'=\ve'$, i.e., 
\[U^0=-\ln|X_x^{\;2}-\ve X_t^{\;2}|\] 
and thus $U^0_{tt}-\ve U^0_{xx}=0$ since $X_{tt}-\ve X_{xx}=0$.
The equation~\eqref{eq:DetEqsForAdmTrans4ut0ux0Reduced1} takes the form 
\begin{equation}\label{eq:DetEqsForAdmTrans4ut0ux0Reduced1c}
\frac{g^1(x)}{X_x^{\;2}-\ve X_t^{\;2}}=\tilde g^1(X).
\end{equation}
In other words, in this case it suffices to classify admissible transformations within 
the subclass~$\mathcal W_{00}$ of equations of the form 
\begin{gather}\label{eq:GenWaveEqsSingularSubclass00}
u_{tt}=\ve u_{xx}+\ve'{\rm e}^u+g^1(x)
\end{gather}
up to the subgroup~$G^\sim_{00}$ of~$G^\sim$ singled out by the constraints $\varphi_x=\pm c_1$, $c_2=|c_1|^{-1/2}$ and $\psi=0$.
This reduces to deriving possible $G^\sim_{00}$-inequivalent expressions
for $X=X(t,x)$, $g^1=g^1(x)$ and $\tilde g^1=\tilde g^1(\tilde x)$
satisfying the joint system of the equation~\eqref{eq:DetEqsForAdmTrans4ut0ux0Reduced1c} and the equation $X_{tt}=\ve X_{xx}$.
We change the independent variables in this system, $y=x+\iota t$ and $z=x-\iota t$,
where $\iota=1$ or $\iota=i$ if $\ve=1$ or $\ve=-1$, respectively, and $i$ is the imaginary unit, $i^2=-1$. 
Hence $\iota^2=\ve$.
In the variables~$y$ and~$z$ the equation $X_{tt}=\ve X_{xx}$ takes the form $X_{yz}=0$
and its general solution is represented as $X=Y(y)+Z(z)$, where $Z(z)$ coincides with the complex conjugate of~$Y(y)$ if $\ve=-1$.
Then the equation~\eqref{eq:DetEqsForAdmTrans4ut0ux0Reduced1c} can be rewritten as 
\begin{equation}\label{eq:DetEqsForAdmTrans4ut0ux0Reduced1c2}
\frac1{4Y_yZ_z}\,g^1(x)=\tilde g^1(X), 
\quad\mbox{assuming}\quad x=\frac{y+z}2, \quad X=Y+Z.
\end{equation}
Excluding the parameter function~$\tilde g^1$ via acting by the operator $Y_y\p_z-Z_z\p_y$
on the equation~\eqref{eq:DetEqsForAdmTrans4ut0ux0Reduced1c2}, 
we reduce this equation, after the expansion and algebraic transformations, to 
\[
2g^1(\p_y+\p_z)\big(Y_y^{-1}-Z_z^{-1}\big)=-g^1_x\big(Y_y^{-1}-Z_z^{-1}\big).
\]
The last equation integrates to $Y_y^{-1}-Z_z^{-1}=\iota h^0h^1$, 
where $h^0$ is a (real-valued) smooth function of~$t$,
$h^1:=|g^1|^{-1/2}\ne0$ and thus $h^1$ is a (real-valued) smooth function of~$x$. 
We act on the integration result by the operator $\p_t^{\,2}-\ve\p_x^{\,2}=-4\ve\p_y\p_z$ to get 
$h^0_{tt}h^1=\ve h^0h^1_{xx}$. 

If $h^0=0$, then $Y_y^{-1}=Z_z^{-1}=\const\in\mathbb R$ and thus $Y_y=Z_z=\const\in\mathbb R$, 
which implies $X_{xx}=0$.  
Therefore, $T_x=T_{tt}=0$ as well. 
This means that the admissible transformation~$\mathcal T$ is induced by an element of~$G^\sim$. 

The case $h^1=g^1=0$ corresponds to the Liouville equation. 
The sign of~$\ve'$ is alternated by~the corresponding admissible transformation from the family~\ref{T1} if $\ve=1$
and cannot be alternated in view of the equation 
$(X_x^{\;2}-\ve X_t^{\;2}){\rm e}^{U^0}\tilde\ve'=\ve'$ if $\ve=-1$. 
The equivalence group $G^\sim$ induces 
the subgroup~$H$ of the complete point symmetry group of the Liouville equation for each fixed value of~$(\ve,\ve')$, 
which is constituted by the transformations of the form $\tilde t=c_1t+c_0$, $\tilde x=c_1x+c_2$,  
where $c_0$, $c_1$ and~$c_2$ are arbitrary constants with $c_1\ne0$. 
For the minimality of the set of admissible transformations to be constructed, 
we should take a single representative in each coset of $G^\sim$-equivalent elements 
of the corresponding vertex group. 
This gives the family~\ref{T9} of admissible transformations.

Further we assume that $h^0h^1\ne0$ and thus $g^1\ne0$ as well. 
The separation of variables in the equation $h^0_{tt}h^1=\ve h^0h^1_{xx}$ implies that 
$h^0_{tt}/h^0=\ve h^1_{xx}/h^1$ is a constant, 
which can be assumed, modulo scalings from~$G^\sim$ preserving the constraint $f=\ve$, 
to take values from the set $\{-1,0,1\}$. 
Up to shifts of~$x$ and alternating the sign of~$x$, we have 
\[
g^1\in\mathcal C:=\big\{\nu,\,\nu x^{-2},\,\nu\cos^{-2}x,\,-\nu\cosh^{-2}x,\,\nu\sinh^{-2}x,\,\ve'' {\rm e}^{-2x}\mid\nu\in\mathbb R,\nu\ne0,\ve''=\pm1\big\}.
\]
Using the same arguments for the inverse of the admissible transformation~$\mathcal T$, 
we obtain that the function~$\tilde g^1$ also belongs to the set~$\mathcal C$ 
(up to replacing the argument~$x$ by $\tilde x$).

We first present a complete set of $G^\sim$-inequivalent 
(independent up to inversion and composing with each other and with equivalence transformations) 
non-identity admissible transformations for~$g^1$ running through the set~$\mathcal C$ 
and then explain the derivation of this set. 
It is exhausted by the family~\ref{T8}$|_{\mathcal W_{00}}$ and the following families:
  
\renewcommand{\theenumi}{{\rm T\arabic{enumi}$'$}}
\renewcommand{\labelenumii}{{\rm\alph{enumii}.}}
\begin{enumerate}\setcounter{enumi}{2}

\item\label{T3'} 
$f=1$, \ $g={\rm e}^u+\ve''{\rm e}^{-2x}$, \ $\tilde f=1$, \ $\tilde g={\rm e}^u+\ve''$,\\[.5ex]
$\Phi{:}$ \ $\tilde t={\rm e}^{-x}\sinh t$, \ $\tilde x={\rm e}^{-x}\cosh t$, \ $\tilde u=u+2x$;

\item\label{T4'} 
$f=1$, \ $g=\ve'{\rm e}^u+g^1(x)$, \ 
$\tilde f=1$, \ $\tilde g=\tilde\ve'{\rm e}^u+\nu\tilde x^{-2}$, \ $\nu\in\mathbb R_{\ne0}$, \vspace{-1ex}
  \begin{enumerate}
  \item\label{T4'a} 
  $g^1(x)=\nu x^{-2}$, \ $\tilde\ve'=\ve'$, \ $\Phi{:}$ \ $\tilde t=\dfrac t{x^2-t^2}$, \ $\tilde x=\dfrac x{x^2-t^2}$, \ $\tilde u=u+2\ln|x^2-t^2|$;
  \item\label{T4'b} 
  $g^1(x)=\nu\cos^{-2}x$, \ $\tilde\ve'=\ve'$,\\[.5ex] 
  $\Phi{:}$ \ $\tilde t=\dfrac{\cos t}{\sin t+\sin x}$, \ $\tilde x=\dfrac{\cos x}{\sin t+\sin x}$, \ $\tilde u=u+2\ln|\sin t+\sin x|$;\vspace{.5ex}
  \item\label{T4'c} 
  $g^1(x)=-\nu\cosh^{-2}x$, \ $\tilde\ve'=-\ve'$, \ $\Phi{:}$ \ $\tilde t={\rm e}^t\sinh x$, \ $\tilde x={\rm e}^t\cosh x$, \ $\tilde u=u-2t$;\vspace{1ex}
  \item\label{T4'd} 
  $g^1(x)=\nu\sinh^{-2}x$, \ $\tilde\ve'=\ve'$, \ $\Phi{:}$ \ $\tilde t={\rm e}^t\cosh x$, \ $\tilde x={\rm e}^t\sinh x$, \ $\tilde u=u-2t$;
  \end{enumerate}

\item\label{T5'} 
$f=-1$, \ $g=\ve'{\rm e}^u+\ve''{\rm e}^{-2x}$, \ $\tilde f=-1$, \ $\tilde g=\ve'{\rm e}^u+\ve''$,\\
$\Phi{:}$ \ $\tilde t={\rm e}^{-x}\sin t$, \ $\tilde x={\rm e}^{-x}\cos t$, \ $\tilde u=u+2x$;

\item\label{T6'} 
$f=-1$, \ $g=\ve'{\rm e}^u+g^1(x)$, \ 
$\tilde f=-1$, \ $\tilde g=\ve'{\rm e}^u+\nu\tilde x^{-2}$, \ $\nu\in\mathbb R_{\ne0}$, \vspace{-1ex}
  \begin{enumerate}
  \item\label{T6'a} 
  $g^1(x)=\nu x^{-2}$, \ $\Phi{:}$ \ $\tilde t=\dfrac t{x^2+t^2}$, \ $\tilde x=\dfrac x{x^2+t^2}$, \ $\tilde u=u+2\ln|x^2+t^2|$;\vspace{.5ex}
  \item\label{T6'b} 
  $g^1(x)=\nu\cos^{-2}x$, \ $\Phi{:}$ \ $\tilde t={\rm e}^t\sin x$, \ $\tilde x={\rm e}^t\cos x$, \ $\tilde u=u-2t$;\vspace{.5ex}
  \item\label{T6'c} 
  $g^1(x)=\nu\sinh^{-2}x$,\\[.5ex] 
  $\Phi{:}$ \ $\tilde t=\dfrac{\sin t}{\cos t+\cosh x}$, \ $\tilde x=\dfrac{\sinh x}{\cos t+\cosh x}$, \ $\tilde u=u+2\ln|\cos t+\cosh x|$;
  \end{enumerate}

\item\label{T7'} 
$f=-1$, \ $g=\ve'{\rm e}^u+\nu\cosh^{-2}x$, \ $\tilde f=-1$, \ $\tilde g=\ve'{\rm e}^{\tilde u}+\nu\cosh^{-2}\tilde x$, \ 
$\nu\in\mathbb R_{\ne0}$,\\[1.5ex] 
$\Phi{:}$ \ 
$\tilde t=\arctan\dfrac{\sin\gamma\,\sinh x+\cos\gamma\sin t}{\cos t}$, \ 
$\tilde x=\mathop{\rm arctanh}\dfrac{\cos\gamma\,\sinh x-\sin\gamma\sin t}{\cosh x}$,\\[1.5ex]  
$\tilde u=u+\ln\big|\cosh^2x-(\cos\gamma\,\sinh x-\sin\gamma\sin t)^2\big|$, \ $\gamma\in(0,2\pi)$.

\end{enumerate}

The direct way of checking which elements of the set~$\mathcal C$ 
are related via admissible transformations is to fix an element~$g^1$ in $\mathcal C$, 
thus defining $h^1:=|g^1|^{-1/2}\ne0$, 
to solve the equation $h^0_{tt}=\lambda h^0$ with $\lambda:=\ve h^1_{xx}/h^1=\const$, 
to find~$Y$ and~$Z$ by separating variables~$y$ and~$z$ in the equation $Y_y^{-1}-Z_z^{-1}=\iota h^0h^1$
and further integrating, 
and finally to determine $\tilde g^1$ from~\eqref{eq:DetEqsForAdmTrans4ut0ux0Reduced1c}. 

We follow an optimized strategy. 
In the above way, we find the mappings 
$\nu {\rm e}^{-2x}\mapsto\nu$ by~\ref{T3'} if $f=1$ and by~\ref{T5'} if $f=-1$,  
$\nu\cos^{-2}x\mapsto\nu x^{-2}$ by~\ref{T4'b} if $f=1$ and by~\ref{T6'b} if $f=-1$,  
$-\nu\cosh^{-2}x\mapsto\nu x^{-2}$ by~\ref{T4'c} if $f=1$,  
$\nu\sinh^{-2}x\mapsto\nu x^{-2}$ by~\ref{T4'd} if $f=1$ and by~\ref{T6'c} if $f=-1$. 
The sign of~$\ve'$ is alternated only in~\ref{T4'c}. 
For $f=-1$, the value $g^1=-\nu\cosh^{-2}x$ is mapped to the value $g^1=\nu x^{-2}$ 
by an admissible point transformation only over the complex field. 

The maximal Lie invariance algebras of the equations of the form~\eqref{eq:GenWaveEqsSingularSubclass00} 
with values of $(f,g^1)$ that have not been reduced to other ones are
\begin{gather*}
(f,g^1)=(1,\nu)            \colon\ \ \mathfrak g_\theta=\langle\p_t,\, \p_x,\, x\p_t+t\p_x\rangle,\\[.5ex]
(f,g^1)=(-1,\nu)           \colon\ \ \mathfrak g_\theta=\langle\p_t,\, \p_x,\, x\p_t-t\p_x\rangle,\\[.5ex]
(f,g^1)=(\ve,\nu x^{-2})   \colon\ \ \mathfrak g_\theta=\langle\p_t,\, t\p_t+x\p_x-2\p_u,\, (t^2+\ve x^2)\p_t+2tx\p_x-4t\p_u\rangle,\\[.5ex]
(f,g^1)=(-1,\nu\cosh^{-2}x)\colon\ \ \mathfrak g_\theta=\langle\p_t,\, \RR'(\cos t\,\cosh x),\, \RR'(\sin t\,\cosh x)\rangle,
\end{gather*}
where $\RR'(\Phi):=\Phi_x\p_t+\Phi_t\p_x-2\Phi_{tx}\p_u$.
These invariance algebras are given in Cases~\ref{case5a}$_{\ve=1}$ and~\ref{case5a}$_{\ve=-1}$ of Table~\ref{tab:GenWaveEqsExtensions} 
and are associated with Cases~\ref{case6a} and~\ref{case7} of the same table, respectively. 
They are realizations of 
the Poincar\'e algebra ${\rm p}(1,1)$, 
the Euclidian algebra ${\rm e}(2)$, 
the real special linear algebra ${\rm sl}(2,\mathbb R)$ 
and the orthogonal algebra ${\rm o}(3)$,  
which are not isomorphic to each other. 
At the same time, systems of differential equations are related by point transformations 
only if their maximal Lie invariance algebras are isomorphic.

Therefore, we need to classify admissible transformations within the four subclasses 
of equations of the form~\eqref{eq:GenWaveEqsSingularSubclass00}, 
where for each of these subclasses the tuple $(f,g^1)$ is of a fixed form 
in $\{(1,\nu),\,(-1,\nu),\,(\ve,\nu x^{-2})\,(-1,\nu\cosh^{-2})\}$, 
and $\nu$ runs through~$\mathbb R_{\ne0}$. 
For this purpose, we apply for the first time an extension of the algebraic method 
to finding admissible transformations. 
This method was suggested by Hydon in \cite{hydo1998a,hydo2000b,hydo2000A} 
for computing discrete symmetries 
and extended to equivalence transformations in~\cite{bihl2015a}.

We in detail consider only the first subclass. 
Let~$\mathcal L_\theta$ and~$\mathcal L_{\tilde\theta}$ be two fixed equations
of the form~\eqref{eq:GenWaveEqsSingularSubclass00} with
$f=\tilde f=1$, $\tilde g^1=\nu$, $g^1=\tilde\nu$ and some $\ve',\tilde\ve'=\pm1$. 
These equations have the same maximal Lie invariance algebra, 
$\mathfrak g_\theta=\mathfrak g_{\tilde\theta}=\langle\p_t,\p_x,t\p_x+x\p_t\rangle$, 
which is given in Case~\ref{case5a}$_{f=1}$ of Table~\ref{tab:GenWaveEqsExtensions}
and is a realization of the Poincar\'e algebra ${\rm p}(1,1)$. 
Therefore, the pushforward of vector fields by~$\Phi$ induces an automorphism of~$\mathfrak g_\theta$ 
associated with an automorphism of ${\rm p}(1,1)$. 
Recall that the transformation~$\Phi$ is completely defined by its $t$- and $x$-components. 
Inner automorphisms of ${\rm p}(1,1)$ correspond to continuous point transformations 
generated by vector fields from~$\mathfrak g_\theta$.
Such transformations are symmetries of~$\mathcal L_\theta$, 
i.e., they do not change the parameters~$\nu$ and~$\ve'$. 
Up to shifts of~$t$ and~$x$, which are induced by elements of~$G^\sim$, 
we obtain the family~\ref{T8a}$|_{\mathcal W_{00}}$ of admissible transformations. 
There are only two outer automorphisms of ${\rm p}(1,1)$ 
that are independent up to composing to each other and to inner automorphisms.%
\footnote{%
See~\cite{fish2013a,popo2003a} for necessary facts on automorphisms of low-dimensional Lie algebras. 
}
The corresponding transformations are 
the alternation of the sign of~$t$, 
which is a discrete symmetry of~$\mathcal L_\theta$ induced by $\mathscr D^t(-1)$,
and the permutation of~$t$ and~$x$, which belongs to the family~\ref{T1}. 
There is no point transformation that satisfies the restriction for~$\Phi$ 
and induces the identity automorphism of~$\mathfrak g_\theta$. 

The other three subclasses are considered in a similar way. 
Each of the algebras ${\rm e}(2)$ and ${\rm sl}(2,\mathbb R)$ possesses a single independent outer automorphism, 
which is here related, e.g., to alternating the sign of~$t$. 
The algebra ${\rm o}(3)$ admits no outer automorphism 
but alternating the sign of~$t$ generates the identity automorphism of the corresponding algebra~$\mathfrak g_\theta$. 
Factoring out shift and scaling symmetries of related equations, which are induced by elements of~$G^\sim$, 
we construct the families~\ref{T8b}$|_{\mathcal W_{00}}$, \ref{T4'a} and~\ref{T7'}, respectively. 
The last family consists of the non-identity transformations generated by 
the Lie-symmetry vector field~$\RR'(\cos t\,\cosh x)$ of the equation~\eqref{eq:GenWaveEqsSingularSubclass00}
with $(f,g^1)=(-1,\nu\cosh^{-2})$.

It is obvious that $G^\sim_{00}$-inequivalent singular cases of Lie-symmetry extensions within the subclass~$\mathcal W_{00}$
are exhausted by those with $g^1\in\mathcal C$.

\medskip\par\noindent{\bf 2.}
Now we suppose that the matrix of coefficients of the derivatives~$\tilde g_{\tilde x}$ and~$\tilde g_{\tilde u}$
in the system~\eqref{eq:DetEqsForAdmTrans4ut0ux0Reduced1a}, \eqref{eq:DetEqsForAdmTrans4ut0ux0Reduced1b}
is degenerate,
\begin{equation}\label{eq:DetEqsForAdmTrans4ut0ux0Reduced1DegeneracyCondition}
X_t(U^0_t)_{\tilde t}-U^0_t(X_t)_{\tilde t}=0.
\end{equation}
If $X_t=0$ then $T_x=X_{xx}=T_{tt}=0$ and
the equation~\eqref{eq:DetEqsForAdmTrans4ut0ux0Reduced1} implies in view of the condition $\tilde g_{\tilde u\tilde u}\ne0$
that $U^0_t=0$, i.e., the admissible transformation~$\mathcal T$ is generated by an element of~$G^\sim$.
This is why in what follows we assume that $X_t\ne0$.
Representing the equation~\eqref{eq:DetEqsForAdmTrans4ut0ux0Reduced1DegeneracyCondition} in the form
$(U^0_t/X_t)_{\tilde t}=0$, we integrate it by~$\tilde t$,
which yields $U^0_t=V^0(X)X_t$ for some smooth function~$V^0=V^0(\tilde x)$.
Then we integrate by~$t$, obtaining $U^0=V^1(X)+V^2(x)$,
where $V^1$ is an antiderivative of~$V^0$, $V^1_{\tilde x}(\tilde x)=V^0(\tilde x)$,
and $V^2=V^2(x)$ is a smooth function of~$x$.
Therefore, up to $G^\sim$-equivalence
(namely, up to composing the transformation~$\mathcal T$ with transformations from the subgroup $\{\mathscr Z(\psi)\}$),
we can set $U^0=0$ and thus $U=u$. 
We then rewrite the equation~\eqref{eq:DetEqsForAdmTrans4ut0ux0Reduced1a} as
\[
\frac{\tilde g_{\tilde x}}{\tilde g}=-\frac1{X_t}\frac{(X_x^{\;2}-\ve X_t^{\;2})_t}{X_x^{\;2}-\ve X_t^{\;2}}.
\]
The left- and right-hand sides of the last equations do not depend on~$\tilde t$ and~$\tilde u$, respectively,
and hence they are equal to a function of only~$\tilde x$.
Solving the equation with respect to~$\tilde g$ gives the representation of~$\tilde g$ as the product of functions of different arguments,
$\tilde g=\tilde g^1(\tilde x)\tilde g^2(\tilde u)$.
Since \mbox{$\tilde u=u$} and $g=(X_x^{\;2}-\ve X_t^{\;2})\tilde g$, 
the function~$g$ admits the similar representation $g=g^1(x)g^2(u)$, where $g^2(u)=\tilde g^2(u)$. 
As a result, we again obtain the equation~\eqref{eq:DetEqsForAdmTrans4ut0ux0Reduced1c}. 
Therefore, equations of the form 
\looseness=-1
\begin{gather}\label{eq:GenWaveEqsSingularSubclass01} 
u_{tt}=\ve u_{xx}+g^1(x)g^2(u),\quad\mbox{where}\quad g^2_ug^2_{uuu}\ne(g^2_{uu})^2,
\end{gather}
with coinciding values of the parameter function~$g^2$
are related by a point transformation if and only if
the equations of the form~\eqref{eq:GenWaveEqsSingularSubclass00} 
with the same values of the parameters~$\ve$ and~$g^1$ and some values of~$\ve'$ are related by a point transformation. 
The inequality $g^2_ug^2_{uuu}\ne(g^2_{uu})^2$, 
which is equivalent to the linear independence of $g^2_u$, $g^2$ and~1, 
is imposed for excluding the intersection 
of the subclass~\eqref{eq:GenWaveEqsSingularSubclass01} 
and the subclass of equations of the form $u_{tt}=\ve u_{xx}+g^0(x){\rm e}^u+g^1(x)$ with $g^0\ne0$, 
which are reduced by equivalence transformations to equations 
from the subclass~\eqref{eq:GenWaveEqsSingularSubclass00}.
To properly translate the classification of admissible transformations 
within the subclass~\eqref{eq:GenWaveEqsSingularSubclass00} 
to those within the subclass~\eqref{eq:GenWaveEqsSingularSubclass01}, 
we take into account the condition $g^1\ne0$ for the subclass~\eqref{eq:GenWaveEqsSingularSubclass01} 
and replace the $u$-components of all transformational parts by $\tilde u=u$. 
Since the coefficients $\nu$ and $\tilde\nu$ coincide, they can just be absorbed by~$g^2$.
In total, this gives the restrictions of the families~\ref{T1}, \ref{T3}--\ref{T7} and~\ref{T8} 
to the subclass~\eqref{eq:GenWaveEqsSingularSubclass01}.
Here the families \ref{T3}--\ref{T7} respectively correspond to the families \ref{T3'}--\ref{T7'}.

\medskip\par

To complete the classification of admissible transformations in the subclass~$\mathcal W_0$, 
we map each equation from the subclass~\eqref{eq:GenWaveEqsSingularSubclass00} with $g^1\in\mathcal C$
by the equivalence transformation $\mathscr Z(\ln\hat g^1)$
to the equation $u_{tt}=\ve u_{xx}+\hat g^1(x)g^2(u)$ with $g^2(u)=\ve'{\rm e}^u+\hat\nu$.
Here $\hat g^1$ is of the same form as $g^1$ but with fixed $\nu=1$, $\hat\nu=2\ve\ve''$, 
$\ve''=-1$ for $\hat g^1=\cosh^{-2}x$ and $\ve''=1$ otherwise. 
As a result, the families \ref{T3'}--\ref{T7'} are mapped into the families \ref{T3}--\ref{T7}. 
The completion of the latter families allows us to neglect 
the auxiliary inequality $g^2_ug^2_{uuu}\ne(g^2_{uu})^2$ of the subclass~\eqref{eq:GenWaveEqsSingularSubclass01}
for these families.

The obtained set~$\mathcal B_0$ of admissible transformations of the subclass~$\mathcal W_0$ 
is a generating set for~$\mathcal G^\sim_0$ up to $G^\sim$-equivalence by construction. 
No element of $\mathcal G^{G^\sim}_0\!\!$ relates different values of~$\theta$ from
${\rm s}(\mathcal B_0)\cup{\rm t}(\mathcal B_0)$. 
No element of~$\mathcal B_0$ can be represented 
as the composition of a finite number of other elements of~$\mathcal B_0$ or their inverses.  
Therefore, the generating set~$\mathcal B_0$ is minimal and self-consistent 
with respect to $G^\sim$-equivalence for~$\mathcal G^\sim_0$.

Up to $G^\sim$-equivalence, singular Lie-symmetry extensions in the subclass~\eqref{eq:GenWaveEqsSingularSubclass01}
are possible only for $g^1\in\mathcal C$, which gives Cases~\ref{case5a}--\ref{case5c}, \ref{case6a}--\ref{case6f} and~\ref{case7}.  
Since Case~\ref{case7} reduces to Case~\ref{case6a} over the complex field, 
for further extensions it suffices to check only equations with $g^1=1$ (Case~\ref{case5a}) and with $g^1=x^{-2}$ (Case~\ref{case6a}). 
For $g^1=1$, we obtain only Case~\ref{case18a} with two $\mathcal G^\sim$-equivalent Cases~\ref{case18b} and~\ref{case18c}. 
There are no further Lie-symmetry extensions for $g^1=x^{-2}$.
\end{proof}

\begin{remark}\label{rem:GenWaveEqsLinCase}
A generating set of admissible point transformations 
within the class~$\mathcal W_{\rm lin}$ of linear equations of the form~\eqref{eq:GenWaveEqs} 
and the group classification of this class can be easily derived 
from the computation of their counterparts for the class~$\mathcal W_0$. 
For this purpose, we need to consider the essential subgroupoid~$\mathcal G^{\sim\rm ess}_{\rm lin}$ 
of the equivalence groupoid~$\mathcal G^\sim_{\rm lin}$ of~$\mathcal W_{\rm lin}$ 
and essential Lie invariance algebras of equations from~$\mathcal W_{\rm lin}$, 
respectively factoring out transformations and Lie-symmetry vector fields 
related to the linear superposition of solutions, 
cf.\ \cite[Section~2]{popo2008a} and~\cite{kuru2018a}.
Elements of~$\mathcal W_{\rm lin}$ take the form $u_{tt}=f(x)u_{xx}+g^1(x)u+g^0(x)$. 
Their kernel point symmetry group is generated by the transformations 
$\pi_*\mathscr P^t(c_0)$ and $\pi_*\mathscr D^u(c_2)$, 
and their kernel invariance algebra is $\mathfrak g^\cap_{\rm lin}=\langle\p_t,u\p_u\rangle$.
The equivalence group of~$\mathcal W_{\rm lin}$ coincides with~$G^\sim$. 
Using equivalence transformations, we can gauge the parameter functions~$f$ and~$g^0$ 
to $\ve\in\{-1,1\}$ and~$0$, respectively.%
\footnote{%
Another possible maximal gauge for the arbitrary-element tuple~$\theta$ 
within the class~$\mathcal W_{\rm lin}$ is $g^0=g^1=0$, i.e., $g=0$, 
which leads to the subclass of equations of the form $u_{tt}=f(x)u_{xx}$. 
Such equations with $f>0$ are linear wave equations for an inhomogeneous medium, 
whose Lie symmetries were studied in~\cite{blum1987a}; 
see also \cite[Section~4.2(3)]{blum1989A} for a more arranged presentation of these results.   
Some Lie-symmetry extensions from the list given in \cite[Section~4.2(3)]{blum1989A}
cannot be represented in an explicit form, 
including both the corresponding values of the arbitrary element~$f$ 
and basis elements of the associated maximal Lie invariance algebras. 
It would be instructive to reduce this list using the equivalence with respect to 
the equivalence group of the subclass~$\mathcal W_{\rm lin''}$ of equations from~$\mathcal W_{\rm lin}$ with $g^0=g^1=0$.
Note that this group is of more complicated structure 
than the equivalence group of the subclass~$\mathcal W_{\rm lin'}$ of~$\mathcal W_{\rm lin}$ 
singled out by the gauge $f=\ve$, $g^0=0$. 
Additional equivalence transformations for the subclass~$\mathcal W_{\rm lin''}$ need a separate study. 
This is why the gauge $f=\ve$, $g^0=0$ is preferable.
One can try to translate the results of Remark~\ref{rem:GenWaveEqsLinCase} for the subclass~$\mathcal W_{\rm lin'}$ 
to similar results for the subclass~$\mathcal W_{\rm lin''}$ using a mapping between these subclasses 
that is induced by a family of equivalence transformations of the class~$\mathcal W_{\rm lin}$.
} 
As a result, we map the class~$\mathcal W_{\rm lin}$ onto its subclass~$\mathcal W_{\rm lin'}$ 
of linear wave and elliptic equations with $x$-dependent potentials, 
which are of the form 
\[u_{tt}=\ve u_{xx}+g^1(x)u,\] 
cf.~\cite{zhda1993a}. 
Each equation~$\mathcal L_\theta$ from~$\mathcal W_{\rm lin'}$ admits 
the (pseudo)group~$G_\theta^{\rm lin}$ of point symmetries 
associated with the linear superposition of solutions, 
$G^{\rm lin}_\theta=\{\Phi\colon\tilde t=t,\,\tilde x=x,\,\tilde u=u+h(t,x)\mid h\in\mathcal L_\theta\}$, 
where the notation ``$h\in\mathcal L_\theta$'' means 
that the function~$h$ runs through the solution set of~$\mathcal L_\theta$. 
The corresponding Lie algebra is 
$\mathfrak g^{\rm lin}_\theta=\langle h(t,x)\p_u\mid h\in\mathcal L_\theta\rangle$. 
Let $\mathcal G^{\sim\rm lin}_{\rm lin'}$ be the subgroupoid 
of the equivalence groupoid~$\mathcal G^\sim_{\rm lin'}$ of the class~$\mathcal W_{\rm lin'}$ 
that is constituted by the admissible transformations related the linear superposition of solutions, 
i.e., $\mathcal G^{\sim\rm lin}_{\rm lin'}$ is the union of~$G^{\rm lin}_\theta$ 
as subgroups of vertex groups in~$\mathcal G^\sim_{\rm lin'}$ 
for all~$\theta$ with $\mathcal L_\theta\in\mathcal W_{\rm lin'}$.
The essential equivalence groupoid~$\mathcal G^{\sim\rm ess}_{\rm lin'}$ of~$\mathcal W_{\rm lin'}$, 
which is the complement of $\mathcal G^{\sim\rm lin}_{\rm lin'}$ in~$\mathcal G^\sim_{\rm lin'}$, 
is naturally isomorphic to the equivalence groupoid of the class of equations 
of the form~\eqref{eq:GenWaveEqsSingularSubclass00} with a fixed value of~$\ve'$. 
Therefore, a generating set for~$\mathcal G^{\sim\rm ess}_{\rm lin'}$ 
and, thus, for the essential equivalence groupoid~$\mathcal G^{\sim\rm ess}_{\rm lin}$ of~$\mathcal W_{\rm lin}$, 
which is defined similarly to~$\mathcal G^{\sim\rm ess}_{\rm lin'}$,
consists of the counterparts of the families \ref{T3}--\ref{T9} for linear equations. 
That is, one should substitute $g^2=u$, $\tilde g^2=\tilde u$ into \ref{T3}--\ref{T7} and 
$g=\ve''u$, $\tilde g=\ve''\tilde u$ into~\ref{T8}
and replace $g$, $\tilde g$ and the $u$-component of~$\Phi$ in~\ref{T9}  
by $g=\ve''u$, $\tilde g=\ve''\tilde u$ and $\tilde u=u$. 
A complete list of $G^\sim$-inequivalent essential Lie-symmetry extensions in the class~$\mathcal W_{\rm lin}$
(i.e., extensions of $\mathfrak g^\cap_{\rm lin}\dotplus\mathfrak g^{\rm lin}_\theta$)
are exhausted by Cases \ref{case5a}--\ref{case7} of Table~\ref{tab:GenWaveEqsExtensions} with $\hat g=\ve''u$ 
and the counterpart of Cases~\ref{case20}, where $g=0$ and 
the extension is spanned by $\tau\p_t+\xi\p_x$ with the same condition on $(\tau,\xi)$. 
Additional equivalence transformations between the classification cases 
are exhausted by the counterparts of those for Cases \ref{case5a}--\ref{case6f}.%
\footnote{%
Therefore, a complete list of $\mathcal G^\sim_{\rm lin}$-inequivalent Lie-symmetry extensions 
within the class~$\mathcal W_{\rm lin}$ are exhausted by the equations from the class~$\mathcal W_{\rm lin'}$ with 
$g^1=0$ (the (1+1)-dimensional wave equation for $\ve=1$ and the two-dimensional Laplace equation for $\ve=-1$), 
$(g^1,\ve)=(-1,1)$ (the (1+1)-dimensional Klein--Gordon equation), 
$(g^1,\ve)=(\ve',-1)$ (the two-dimensional Helmholz equation) 
and $g^1=\nu x^{-2}$ 
(the (1+1)-dimensional wave equation with the potential $\nu x^{-2}$ for $\ve=1$ 
and the two-dimensional Laplace equation with the potential $\nu x^{-2}$ for $\ve=-1$).
}\looseness=-1
\end{remark}

It now remains to study the equivalence groupoid~$\mathcal G^\sim_1$ of the subclass~$\mathcal W_1$,
which is singled out from the class~\eqref{eq:GenWaveEqs} by the constraint $f_u\ne0$.
By~$\mathcal W_{11}$ and~$\mathcal W_{12}$ we respectively denote the subclasses of~$\mathcal W_1$
that is associated with the additional auxiliary condition $f_x=g_x=0 \bmod G^\sim$
and that consists of the equations $G^\sim$-equivalent to equations of the form
\begin{equation}\label{eq:GenWaveEqsSpecial}
u_{tt}=\ve u^{-4}u_{xx}+\mu(x)u^{-3}+\sigma u,
\end{equation}
where $\mu$ runs through the set of smooth functions of~$x$, $\ve$ and $\sigma$ are constants,
$\ve\ne0$ and hence $\ve=\pm1\bmod G^\sim$ and $\sigma\in\{-1,0,1\}\bmod G^\sim$.

\begin{lemma}\label{lem:ClassW1AdmTrans}
The usual equivalence group of the subclass~$\mathcal W_1$ coincides with~$G^\sim$.
Any admissible transformation in $\mathcal W_1\setminus(\mathcal W_{11}\cup\mathcal W_{12})$ is generated by a transformation from~$G^\sim$.
A generating (up to $G^\sim$-equivalence) set~$\mathcal B_1$ of admissible transformations for the class~$\mathcal W_1$, 
which is minimal and self-consistent with respect to $G^\sim$-equivalence, 
is the union of the restriction of the family~\ref{T1} to~$\mathcal W_{11}$ and the family~\ref{T2}, 
which acts within~$\mathcal W_{12}$.
\end{lemma}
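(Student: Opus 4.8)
The plan is to run the direct method on the determining equations \eqref{eq:DetEqsForAdmTrans1ut0ux0}--\eqref{eq:DetEqsForAdmTrans4ut0ux0} under the standing constraints of~$\mathcal W_1$, namely $f_u\ne0$ and, by the end of Section~\ref{sec:PreliminaryStudyOfAdmTrans}, $\tilde f_{\tilde u}\ne0$, keeping in mind that the transformational part is already known to be of the form $\tilde t=T(t,x)$, $\tilde x=X(t,x)$, $\tilde u=U^1(t,x)u+U^0(t,x)$ with $J:=T_tX_x-T_xX_t\ne0$ and $U^1\ne0$. The first move is to differentiate \eqref{eq:DetEqsForAdmTrans1ut0ux0}, i.e.\ $T_tX_t=fT_xX_x$, with respect to~$u$: since $T$ and~$X$ are $u$-free and $f_u\ne0$, this yields $T_xX_x=0$ and hence $T_tX_t=0$. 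Consequently, either $T=T(t)$, $X=X(x)$ (the direct case) or $T=T(x)$, $X=X(t)$ (the swapped case), the subcase $T_x=X_x=0$ being excluded by $J\ne0$.

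In the direct case I would integrate \eqref{eq:DetEqsForAdmTrans4ut1} and \eqref{eq:DetEqsForAdmTrans4ux1}, which with $T_x=X_t=0$ and $f\ne0$ collapse to $T_{tt}/T_t=2(U^1_t/U^1)$ and $X_{xx}/X_x=2(U^1_x/U^1)$; these force $U^1=A(t)B(x)$ together with $T_t=c_1A^2$, $X_x=c_2B^2$ for some nonzero constants $c_1,c_2$. Substituting into \eqref{eq:DetEqsForAdmTrans2} gives $\tilde f=(c_2^2/c_1^2)(B^4/A^4)f$; differentiating this in~$t$ and invoking $\tilde f_{\tilde u}\ne0$ separates two possibilities. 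If $A_t=0$ (equivalently $T$ affine in~$t$), one deduces $U^0_t=0$, so $\Phi$ has precisely the shape of a transformation from~$G^\sim$ listed in Theorem~\ref{thm:EquivalenceGroupIbragrimovClass}, and $\mathcal T\in\mathcal G^{G^\sim}$. If $A_t\ne0$, substituting $u=(\tilde u-U^0)/(AB)$ rewrites the differentiated relation as $\tilde f_{\tilde u}\,(\tilde u+W)=-4\tilde f$ with $W=-U^0+(A/A_t)U^0_t$; requiring $W_t=0$ forces $W$ to depend on~$x$ only and the exponent $-4$ to survive, so $\tilde f=\tilde C(\tilde x)\,(\tilde u-\tilde E(\tilde x))^{-4}$ and, modulo~$G^\sim$, $\tilde f=\ve\tilde u^{-4}$; applying the same argument to~$\mathcal T^{-1}$ gives $f=\ve u^{-4}$ modulo~$G^\sim$. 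Re-inserting $f=\ve u^{-4}$, $\tilde f=\ve\tilde u^{-4}$ into \eqref{eq:DetEqsForAdmTrans2} and splitting the resulting polynomial identity in~$u$ then forces $U^0=0$ and~$B$ constant, whence $\tilde x=x$, $\tilde u=A(t)u$ and $T_t=c_1A^2$ modulo~$G^\sim$. Finally, feeding everything into \eqref{eq:DetEqsForAdmTrans4ut0ux0} and splitting with respect to~$t$ confines $g$ (and, symmetrically, $\tilde g$) to the form $\mu(x)u^{-3}+\sigma u$ modulo~$G^\sim$ — that is, places the source and target in~$\mathcal W_{12}$ — and leaves an ordinary differential equation for~$A$ whose solutions, after normalizing $\sigma\in\{-1,0,1\}$ and factoring out the $G^\sim$-freedom left in $T$ and~$u$, are exactly the three subfamilies \ref{T2a}--\ref{T2c}.

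In the swapped case one writes $\Phi=\iota\circ\Phi'$, where $\iota\colon(\tilde t,\tilde x,\tilde u)=(x,t,u)$ is the permutation of the independent variables and $\Phi'$ is of direct type. Since $\iota$ maps an equation of~$\mathcal W$ into~$\mathcal W$ exactly when $f_x=g_x=0$, and since the residual direct-type analysis of~$\Phi'$ reproduces the dichotomy above, the swapped case contributes only the restriction of the family~\ref{T1} to~$\mathcal W_{11}$ (together with the $\mathcal W_{12}$-branch already accounted for). Putting both cases together shows that every admissible transformation of~$\mathcal W_1$ whose source lies outside $\mathcal W_{11}\cup\mathcal W_{12}$ is generated by an element of~$G^\sim$; in particular, testing equivalence transformations of~$\mathcal W_1$ against such generic sources shows that the usual equivalence group of~$\mathcal W_1$ cannot be larger than~$G^\sim$, and it contains~$G^\sim$ since $G^\sim$ preserves the constraint $f_u\ne0$. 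Thus $\mathcal B_1$ as stated is a generating set of~$\mathcal G^\sim_1$ up to $G^\sim$-equivalence, and its minimality and self-consistency follow exactly as at the end of the proof of Lemma~\ref{lem:ClassW0AdmTrans}: no element of $\mathcal G^{G^\sim}_1$ relates two distinct arbitrary-element tuples occurring among the sources or targets in~$\mathcal B_1$, and no element of~$\mathcal B_1$ is a finite composition of the others and their inverses.

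The main obstacle is the branch $A_t\ne0$ of the direct case: extracting the exponent $-4$ from the differentiated form of \eqref{eq:DetEqsForAdmTrans2}, carefully tracking the $G^\sim$-gauge used to normalize $f$, $\tilde f$, $B$ and~$U^0$, and then grinding equation \eqref{eq:DetEqsForAdmTrans4ut0ux0} down to the constraint on~$g$ and the ordinary differential equation for~$A$ whose three solution types match $\sigma\in\{-1,0,1\}$ and the transformations \ref{T2a}--\ref{T2c}. Everything else — the preliminary splittings, the verification that the $A_t=0$ branch lands in~$\mathcal G^{G^\sim}$, the reduction of the swapped case, and the bookkeeping for minimality — is routine.
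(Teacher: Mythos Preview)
Your proposal is correct and follows essentially the same route as the paper: differentiate \eqref{eq:DetEqsForAdmTrans1ut0ux0} in~$u$ to force $T_xX_x=T_tX_t=0$, split into the two block-diagonal cases, and in the direct case use the factorization $U^1=A(t)B(x)$, $T_t\propto A^2$, $X_x\propto B^2$ together with the $t$-derivative of~\eqref{eq:DetEqsForAdmTrans2} to isolate the branch $T_{tt}\ne0$ and extract $f=\ve u^{-4}$ modulo~$G^\sim$, then read off the constraint on~$g$ and the ODE for~$A$ whose three normal forms give~\ref{T2a}--\ref{T2c}.

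The only noteworthy difference is your treatment of the swapped case via the factorization $\Phi=\iota\circ\Phi'$. The paper instead handles this case directly: it gauges to $T=x$, $X=t$, obtains $\tilde f f=1$ and $U_u=\const$, differentiates $\tilde f=1/f$ to get $U^0_{tx}=0$, and reads off $f_x=g_x=0$ and the reduction to~\ref{T1}. Your factorization is a legitimate shortcut, but it requires the bookkeeping you sketch (that the intermediate target~$\theta'$ must satisfy $f'_{x'}=g'_{x'}=0$ for $\iota$ to land back in~$\mathcal W$, and that the two branches of the direct-type analysis of~$\Phi'$ then place~$\theta$ in~$\mathcal W_{11}$ or in $\mathcal W_{11}\cap\mathcal W_{12}$ respectively). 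Both arguments are short; the paper's direct computation is marginally cleaner. One phrasing quibble: in your $A_t\ne0$ branch, $W_t=0$ is a \emph{consequence} of the consistency of $\tilde f_{\tilde u}(\tilde u+W)=-4\tilde f$ (since $W$ is $u$-free while the rest is a function of $(\tilde x,\tilde u)$), not an external requirement; you have this backwards in wording but right in substance.
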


\begin{proof}
For $\mathcal T\in \mathcal G^\sim_1$,
the equation~\eqref{eq:DetEqsForAdmTrans1ut0ux0} immediately implies that $T_tX_t=T_xX_x=0$
for admissible transformations within the subclass~$\mathcal W_1$.

Supposing $T_x\ne0$, we obtain that $X_x=0$, $X_t\ne0$ and hence $T_t=0$.
Up to $G^\sim$-equivalence of admissible transformations we can assume that $T=x$ and $X=t$
since under the above restrictions, the $(t,x)$-part of~$\mathcal T$ is represented as the composition 
of the $(t,x)$-parts of $\mathscr D(T)$, a transformation permuting~$t$ and~$x$ and $\mathscr D(X)$.
For $T=x$ and $X=t$, the equations~\eqref{eq:DetEqsForAdmTrans2}, \eqref{eq:DetEqsForAdmTrans4ut1} and~\eqref{eq:DetEqsForAdmTrans4ux1}
reduce to the simple equations $\tilde ff=1$, and $U_{ux}=U_{ut}=0$, i.e.\ $U_u=\const$. 
By a scaling of~$u$, which belongs to~$G^\sim$, the constant~$U_u$ can be set equal to~1.
Differentiating the equation $\tilde f=1/f$ with respect to~$t$ and then, assuming $(\tilde t,\tilde x,\tilde u)$ as basic variables, with respect to~$\tilde t$,
we derive the equation $U^0_{tx}=0$.
Therefore, $U^0=\psi(x)+\tilde\psi(t)$ and the reduced transformation~$\mathcal T$ with $T=x$, $X=t$ and~$U_u=1$
can be represented as the composition of the transformations $\mathscr Z(\psi)$, $t\leftrightarrow x$ and $\mathscr Z(\tilde\psi)$,
where $t\leftrightarrow x$ denotes the transformation which only permutes~$t$ and~$x$: $\tilde t=x$, $\tilde x=t$ and~$\tilde u=u$.
This means that up to $G^\sim$-equivalence of admissible transformations the transformation~$\mathcal T$ coincides with $t\leftrightarrow x$.
The corresponding transformation components for the arbitrary elements~$f$ and~$g$
follow from the equations~\eqref{eq:DetEqsForAdmTrans2} and~\eqref{eq:DetEqsForAdmTrans4ut0ux0}.
They read $\tilde f=1/f$ and~$\tilde g=-g/f$.
Since the left-hand (resp.\ right-hand) sides of these equalities do not depend on $\tilde t=x$ (resp.\ $\tilde x=t$),
the arbitrary elements of equations from the class~\eqref{eq:GenWaveEqs} that are connected by the transformation $t\leftrightarrow x$ satisfy
the additional auxiliary constraints $f_x=g_x=0$ and $\tilde f_{\tilde x}=\tilde g_{\tilde x}=0$.
In other words, admissible transformations
of the case under consideration are generated by transformations from the equivalence group~$G^\sim$ of the entire class~\eqref{eq:GenWaveEqs}
and the equivalence transformation $t\leftrightarrow x$
of the subclass~$\mathcal U$ which is singled out from the class~\eqref{eq:GenWaveEqs}
by the additional auxiliary constraints $f_x=g_x=0$ and $f_u\ne0$.
In particular, the equivalence group of the subclass~$\mathcal U$ consists of
the transformations of the form~\eqref{eq:EquivalenceGroupGenWaveEqs} with $\varphi_{xx}=\psi_x=0$
and the compositions of these transformations with $t\leftrightarrow x$.

Now we consider the case $T_x=0$ for which $T_t\ne0$, $X_t=0$ and $X_x\ne0$.
Then the equations~\eqref{eq:DetEqsForAdmTrans2}--
\eqref{eq:DetEqsForAdmTrans4ux1} reduce to
$\tilde fT_t^{\;2} = fX_x^{\;2}$, $(U_u^{\;2}/T_t)_t=0$, $(U_u^{\;2}/X_x)_x=0$.
From the first equation we can conclude that $\tilde f_{\tilde u}\ne0$ if and only if $f_u\ne0$.
Solving the other two equations with respect to $U_u$, equating the expressions obtained and
separating variables in this equality, we derive that $U^1:=U_u=\varkappa\sqrt{|T_tX_x|}$, where $\varkappa$ is a nonzero constant.
Differentiating the equation $\tilde fT_t^{\;2} = fX_x^{\;2}$ with respect to~$t$ results in the consequence
\begin{equation}\label{eq:DetEqsForAdmTrans5fune0}
\frac{T_{tt}}{T_t}\left((\tilde u-U^0)\tilde f_{\tilde u}+4\tilde f\right)+2U^0_t\tilde f_{\tilde u}=0.
\end{equation}

If $T_{tt}=0$, then $U^1/\sqrt{|X_x|}=\const$, the equation~\eqref{eq:DetEqsForAdmTrans5fune0} implies $U^0_t=0$,
and thus the transformation~$\mathcal T$ belongs to the action groupoid of the equivalence group~$G^\sim$.

Further we assume that  $T_{tt}\ne0$.
By fixing a value of~$t$, we derive from the equation~\eqref{eq:DetEqsForAdmTrans5fune0}
that the arbitrary element~$\tilde f$
is a solution of an ordinary differential equation of the general form
$
(\tilde u+\tilde \beta(\tilde x))\tilde f_{\tilde u}+4\tilde f = 0,
$
where the variable~$\tilde x$ plays the role of a parameter
and $\tilde \beta$ is a smooth function of~$\tilde x$.
This implies that
$\tilde f = \tilde\alpha(\tilde x)(\tilde u+\tilde\beta(\tilde x))^{-4}$
for some smooth function $\tilde\alpha=\tilde\alpha(\tilde x)$.
Combining the equation $\tilde fT_t^{\;2} = fX_x^{\;2}$ with the expression for $\tilde f$ yields
\[
 f=\frac{T_t^2}{X_x^2}\frac{\tilde\alpha (X)}{(U^1u+U^0+\tilde\beta (X))^4}
 =\frac{\alpha(x)}{(u+\beta(x))^4},
\]
where $\alpha(x):=(\varkappa X_x)^{-4}\tilde\alpha(X)$ and $\beta(x) := (\tilde\beta(X)+U^0)/U^1$.
Furthermore, upon using transformations from the equivalence group~$G^\sim$,
we can set $\tilde\beta=\beta=0$, which consequently implies that $U^0=0$.
By means of equivalence transformations, we can also set $\alpha,\tilde\alpha\in\{-1,1\}$
and as the multiplier relating $\alpha$ and $\tilde\alpha$ is strictly positive, we have that $\tilde\alpha=\alpha=:\ve\in\{-1,1\}$.
Then $X_x$ is a constant and we can set $X=x$ and $\varkappa=1$ using a scaling and a translation of~$x$ and a scaling of~$u$, which belong to~$G^\sim$.
Therefore, $U=\omega u$, where $\omega:=\sqrt{|T_t|}$ and hence $\omega_t\ne0$.
After taking into account all the conditions derived, we reduce the equation~\eqref{eq:DetEqsForAdmTrans4ut0ux0} to the form
$\omega^3\tilde g+\omega(\omega^{-1})_{tt}u=g$.
Differentiating the last equation with respect to~$t$ and dividing the result by $\omega^2\omega_t$,
we obtain $\tilde u\tilde g_{\tilde u}+3\tilde g=4\tilde\sigma\tilde u$,
where $\tilde\sigma:=-\big(\omega(\omega^{-1})_{tt}\big)_t/(4\omega^3\omega_t)$ is a constant.
The general solution of the equation for~$\tilde g$ is $\tilde g=\tilde\mu(x)\tilde u^{-3}+\tilde\sigma\tilde u$.
The expression for~$g$ is similar: $g=\mu(x)u^{-3}+\sigma u$, where $\mu=\tilde\mu$, and
$\sigma:=\tilde\sigma\omega^4+\omega(\omega^{-1})_{tt}$ is, like~$\tilde\sigma$, a constant.
We rewrite the relation defining~$\sigma$ as an ordinary differential equation for~$\omega$, $(\omega^{-1})_{tt}=\sigma\omega^{-1}-\tilde\sigma\omega^3$.
Up to scalings from $G^\sim$ there are only three essentially different values of~$\sigma$ (resp.\ $\tilde\sigma$), $\sigma,\tilde\sigma\in\{-1,0,1\}$.
Finally, from the class~\eqref{eq:GenWaveEqs} we single out the subclass of equations of the general form~\eqref{eq:GenWaveEqsSpecial}.

For each pair of values of~$\sigma$, the corresponding equations from the subclass~\eqref{eq:GenWaveEqsSpecial}
with the same value of the parameter function~$\mu$ are related by a point transformation.
This is why within this subclass it suffices to classify admissible transformations with $\tilde\sigma=0$.
We solve the equation $(\omega^{-1})_{tt}=\sigma\omega^{-1}$ with respect to~$\omega$ and then construct~$T$ using the relation
$T_t=\omega^2\bmod G^\sim$. We~find
\[
T=\left\{
\begin{array}{ll}
(a_1t+a_0)/(a_3t+a_2)\quad&\mbox{if}\quad \sigma=0,\\[.5ex]
(a_1{\rm e}^{2t}+a_0)/(a_3{\rm e}^{2t}+a_2)\quad&\mbox{if}\quad \sigma=1,\\[.5ex]
b_1\tan(t+b_0)+b_2\quad&\mbox{if}\quad \sigma=-1,
\end{array}
\right.
\]
where
$a_0$, \dots, $a_3$, are constants with $a_1a_2-a_0a_3\ne0$ 
that are determined up to a common nonvanishing multiplier, 
and $b_0$, $b_1$ and $b_2$ are constants with $b_1\ne0$.

In the case $\sigma=0$, we obtain a subgroup of the complete point symmetry group of the corresponding equation.
This subgroup is obviously isomorphic to ${\rm PGL}(2,\mathbb R)$.
The condition $T_{tt}\ne0$ is equivalent to $a_3\ne0$, 
and we can assume $a_3=1$ due to the indeterminacy up to a constant multiplier.
Then $a_0-a_1a_2\ne0$, and we gauge $a_2$, $a_0$ and $a_1$ to 0, 1 and 0 
using the ${\rm s}$-action of $\mathscr P^t(a_2)$
and the ${\rm t}$-action of $\mathscr P^t(-a_1)\circ\mathscr D^t(c_2^2)\circ\mathscr D^u(c_2)$ 
with $c_2:=(a_0-a_1a_2)^{-1}$.
All the above transformations from the equivalence group~$G^\sim$ induce point symmetries of the equation under consideration.
Therefore, we can assume that $T=t^{-1}\bmod G^\sim$, obtaining the family~\ref{T2a} of admissible transformations. 

In the same way we derive that $T=\frac12{\rm e}^{2t}\bmod G^\sim$ and $T=\tan t\bmod G^\sim$ if $\sigma=1$  and $\sigma=-1$, 
which gives the family~\ref{T2b} and~\ref{T2c} of admissible transformations, respectively.  

We set $\mu_x\ne0$ and $\tilde\mu_{\tilde x}\ne0$ for admissible transformations from the family~\ref{T2}
since similar admissible transformations with $\mu_x=0$
are $G^\sim$-equivalent to admissible transformations from the restriction of the family~\ref{T1} to~$\mathcal W_{11}\cap\mathcal W_{12}$.
The equations of the form~\eqref{eq:GenWaveEqsSpecial} with $\mu_x\ne0$
are not related to those with $\mu_x=0$ by point transformations.

Following the argumentation for the generating set~$\mathcal B_0$ of~$\mathcal G^\sim_0$ 
from the end of the simultaneous proof of Lemmas~\ref{lem:ClassW0AdmTrans} and~\ref{lem:ClassW0SpecialLieSymExts}, 
we can show that the singled out set~$\mathcal B_1$ of admissible transformations of the subclass~$\mathcal W_1$ 
is a minimal self-consistent generating set for~$\mathcal G^\sim_1$ with respect to $G^\sim$-equivalence.
\end{proof}

The equivalence groups of the subclasses~$\mathcal W_0$ and~$\mathcal W_1$ coincide
with the equivalence group~$G^\sim$ of the entire class~$\mathcal W$, 
and $\mathcal G^\sim=\mathcal G^\sim_0\sqcup\mathcal G^\sim_1$.
Therefore, after uniting the generating (up to $G^\sim$-equivalence) sets~$\mathcal B_0$ and~$\mathcal B_1$ 
of $\mathcal G^\sim_0$ and~$\mathcal G^\sim_1$, 
which are minimal and self-consistent with respect to $G^\sim$-equivalence within the corresponding groupoids,
we get the generating (up to $G^\sim$-equivalence) set~$\mathcal B$ of~$\mathcal G^\sim$, 
which is minimal and self-consistent with respect to $G^\sim$-equivalence within~$\mathcal G^\sim$.

\begin{lemma}\label{lem:ClassW1SpecialLieSymExts}
A complete list of $G^\sim$-inequivalent Lie-symmetry extensions for equations of the general form~\eqref{eq:GenWaveEqsSpecial}
is exhausted by the following cases:
\[
\hspace*{-\arraycolsep}%
\begin{array}{lll}
\text{\rm 1a--1c}.& \text{general}\ \mu\colon & \mathfrak g_\theta=\mathfrak g^\cap_\sigma,\\[.5ex]
\text{\rm 2a--2c}.& \mu=\pm1\colon & \mathfrak g_\theta=\mathfrak g^\cap_\sigma +\langle\p_x\rangle,\\[.5ex]
\text{\rm 3a--3c}.& \mu=\nu x^{-2},\ \nu\ne0\colon & \mathfrak g_\theta=\mathfrak g^\cap_\sigma +\langle2x\p_x-u\p_u\rangle,\\[.5ex]
\text{\rm 4a--4c}.& \mu=0\colon & \mathfrak g_\theta=\mathfrak g^\cap_\sigma +\langle\p_x,\,2x\p_x-u\p_u\rangle\\[.5ex]
\end{array}
\]\vspace{-3ex}\par\noindent
with\vspace{-1.5ex}
\[
\hspace*{-\arraycolsep}%
\begin{array}{lll}
\text{\rm a.}& \sigma= 0\colon & \mathfrak g^\cap_0 = \langle \p_t,\,2t\p_t+u\p_u,\,t^2\p_t+tu\p_u\rangle,\\[.5ex]
\text{\rm b.}& \sigma= 1\colon & \mathfrak g^\cap_1 = \langle \p_t,\,{\rm e}^{2t}(\p_t+u\p_u),\,{\rm e}^{-2t}(\p_t-u\p_u)\rangle,\\[.5ex]
\text{\rm c.}& \sigma=-1\colon & \mathfrak g^\cap_{-1} = \langle \p_t,\,\cos(2t)\p_t-\sin(2t)u\p_u,\,\sin(2t)\p_t+\cos(2t)u\p_u\rangle.\\[.5ex]
\end{array}
\]
The cases $\sigma=1$ and $\sigma=-1$ reduce to the case $\sigma=0$
with the same value of the parameter function $\mu=\mu(x)$
by the additional equivalence transformations
$\tilde t=\frac12{\rm e}^{2t}$, $\tilde x=x$, $\tilde u={\rm e}^t u$ and
$\tilde t=\tan t$,        $\tilde x=x$, $\tilde u=u\cos t$, respectively.
\end{lemma}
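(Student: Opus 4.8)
The plan is to read the classification off the determining equations~\eqref{eq:DetEqForLieSymsOfGenWaveEqs1}--\eqref{eq:DetEqForLieSymsOfGenWaveEqs5} specialized to $f=\ve u^{-4}$, $g=\mu(x)u^{-3}+\sigma u$, handling all three values of~$\sigma$ simultaneously. The observation that makes this efficient is that equations~\eqref{eq:DetEqForLieSymsOfGenWaveEqs1}--\eqref{eq:DetEqForLieSymsOfGenWaveEqs4} involve only~$f$, so the form of $\tau$, $\xi$ and~$\eta$ is forced before $\sigma$ or~$\mu$ enter, and the dependence on $(\sigma,\mu)$ is confined to~\eqref{eq:DetEqForLieSymsOfGenWaveEqs5}, where it moreover separates: the $u^{1}$-part of~\eqref{eq:DetEqForLieSymsOfGenWaveEqs5} carries~$\sigma$ and the $u^{-3}$-part carries~$\mu$.

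First I would treat~\eqref{eq:DetEqForLieSymsOfGenWaveEqs1}--\eqref{eq:DetEqForLieSymsOfGenWaveEqs4}. Recalling from Section~\ref{sec:DetEqsForLieSymsIbragrimovClass} that $\tau=\tau(t,x)$, $\xi=\xi(t,x)$ and $\eta=\eta^1(t,x)u+\eta^0(t,x)$, equation~\eqref{eq:DetEqForLieSymsOfGenWaveEqs1} gives $\tau_x=\xi_t=0$ (its left-hand side is $u$-free, its right-hand side carries $u^{-4}$); equation~\eqref{eq:DetEqForLieSymsOfGenWaveEqs4}, after splitting off the $u^{-5}$- and $u^{-4}$-terms, gives $\eta^0=0$ and $\eta^1=\tfrac12(\tau_t-\xi_x)$; equation~\eqref{eq:DetEqForLieSymsOfGenWaveEqs3} then forces $\xi_{xx}=0$, so $\xi=c_3x+c_4$ with constants $c_3,c_4$ and $\eta^1$ is a function of~$t$ alone; and~\eqref{eq:DetEqForLieSymsOfGenWaveEqs2} becomes an identity. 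Hence every Lie-symmetry vector field of~\eqref{eq:GenWaveEqsSpecial} has the form $Q=\tau(t)\p_t+(c_3x+c_4)\p_x+\tfrac12(\tau_t-c_3)u\p_u$.

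Next I would substitute this into~\eqref{eq:DetEqForLieSymsOfGenWaveEqs5}. Its $u^{1}$-component reduces to $\tau_{ttt}=4\sigma\tau_t$, with solution space $\langle1,t,t^2\rangle$, $\langle1,{\rm e}^{2t},{\rm e}^{-2t}\rangle$ or $\langle1,\cos2t,\sin2t\rangle$ according to $\sigma=0,1,-1$; reading off the corresponding vector fields yields exactly $\mathfrak g^\cap_0$, $\mathfrak g^\cap_1$, $\mathfrak g^\cap_{-1}$. Its $u^{-3}$-component, using $4\eta^1-2\tau_t=-2c_3$ and $\eta^1_{xx}=0$, collapses to the $\sigma$-independent classifying equation $(c_3x+c_4)\mu_x+2c_3\mu=0$. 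Since $c_0,c_1,c_2$ stay unconstrained while the admissible pairs $(c_3,c_4)$ form a linear subspace of~$\mathbb R^2$, the classification becomes a finite case split on the dimension of that subspace: it is two-dimensional iff $\mu\equiv0$ (Cases~4a--4c); one-dimensional and spanned by $(0,1)$ iff $\mu$ is a nonzero constant, which a scaling from~$G^\sim$ normalizes to $\mu=\pm1$ (Cases~2a--2c); one-dimensional and spanned by some $(1,a)$ iff $\mu=\nu(x+a)^{-2}$ with $\nu\ne0$, which a shift of~$x$ normalizes to $\mu=\nu x^{-2}$ (Cases~3a--3c); and zero-dimensional for all remaining~$\mu$ (Cases~1a--1c). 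The extra generators are the $x$-parts $\p_x$ (from $(c_3,c_4)=(0,1)$, for which $\eta^1=0$) and $2x\p_x-u\p_u$ (the rescaled $(c_3,c_4)=(1,0)$ field), precisely as stated.

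It then remains to verify maximality and $G^\sim$-inequivalence and to confirm the reduction clause. Maximality is immediate: $\xi$ is affine in~$x$, so no equation~\eqref{eq:GenWaveEqsSpecial} admits an extension larger than the two-dimensional one of Cases~4a--4c. The four families are pairwise $G^\sim$-inequivalent because their extensions have different dimension ($0$, $1$, $1$, $2$), and the two one-dimensional cases are separated by the type of the extra generator, $\p_x$ versus $2x\p_x-u\p_u$, which cannot be interchanged since $G^\sim$ acts on~$x$ only through affine maps and so cannot relate a constant~$\mu$ to a power of~$x$. Finally, the transformations $\tilde t=\tfrac12{\rm e}^{2t}$, $\tilde x=x$, $\tilde u={\rm e}^tu$ and $\tilde t=\tan t$, $\tilde x=x$, $\tilde u=u\cos t$ are the families~\ref{T2b} and~\ref{T2c} of Theorem~\ref{thm:ClassWAdmTrans}; a direct substitution checks that they carry~\eqref{eq:GenWaveEqsSpecial} with $\sigma=1$ (resp.\ $\sigma=-1$) and parameter~$\mu$ to the $\sigma=0$ equation with the same~$\mu$, conjugating $\mathfrak g^\cap_0$ to $\mathfrak g^\cap_1$ (resp.\ $\mathfrak g^\cap_{-1}$). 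I expect no genuine obstacle: the computation is routine throughout, the only point calling for care being the $G^\sim$-equivalence bookkeeping in this last step.
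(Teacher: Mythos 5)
Your proposal is correct and takes essentially the same route as the paper's proof: splitting the determining equations \eqref{eq:DetEqForLieSymsOfGenWaveEqs1}--\eqref{eq:DetEqForLieSymsOfGenWaveEqs5} for $f=\ve u^{-4}$, $g=\mu(x)u^{-3}+\sigma u$ to force $\tau=\tau(t)$, $\xi$ affine in~$x$, $\eta=\tfrac12(\tau_t-\xi_x)u$, and then classifying via the linear constraint on~$\mu$, the only organizational difference being that you keep $\sigma$ general (obtaining $\tau_{ttt}=4\sigma\tau_t$ and hence $\mathfrak g^\cap_\sigma$ directly) whereas the paper first reduces to $\sigma=0$ through Lemma~\ref{lem:ClassW1AdmTrans} and the transformations~\ref{T2b},~\ref{T2c}. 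One small fix: in the inequivalence argument, it is not true that $G^\sim$ ``acts on~$x$ only through affine maps'' (it contains arbitrary $\varphi(x)$); what you need, and what is readily available (cf.\ Remark~\ref{rem:NormalizationOfW120}), is that any equivalence transformation mapping one equation of the form~\eqref{eq:GenWaveEqsSpecial} to another must have $\varphi_{xx}=0$ and $\psi=0$, which indeed separates $\mu=\pm1$ from $\mu=\nu x^{-2}$ and keeps $\sigma$ fixed up to positive rescaling.
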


\begin{proof}
It follows from Lemma~\ref{lem:ClassW1AdmTrans} that it suffices 
to classify only equations of the form~\eqref{eq:GenWaveEqsSpecial} with $\sigma=0$. 
Spitting the system of determining equations 
\eqref{eq:DetEqForLieSymsOfGenWaveEqs1}--\eqref{eq:DetEqForLieSymsOfGenWaveEqs5} 
for a Lie-symmetry vector field~$Q$ of the equation~$\mathcal L_\theta$ 
with $\theta=(f,g)=(\ve u^{-4},\mu(x)u^{-3})$ with respect to~$u$, 
we derive that the components of~$Q$ are of the form 
$\tau=\tau(t)$, $\xi=2c_1x+c_0$, $\eta=(\frac12\tau_t-c_1)u$, 
where $\tau_{ttt}=0$, and $c_0$ and~$c_1$ are constants with $(2c_1x+c_0)\mu_x=-4c_1\mu$. 
The four cases for~$\mu$ from the lemma's statement arise in the course of analysis of the last equation. 
\end{proof}

Lemmas~\ref{lem:ClassW1AdmTrans} and~\ref{lem:ClassW1SpecialLieSymExts} jointly imply 
that there are no more singular Lie-symmetry extensions within the class~$\mathcal W_1$. 

\begin{remark}\label{rem:StructuteOfGroupoidOfW12}
The groupoid of the class of equations of the form~\eqref{eq:GenWaveEqsSpecial} 
with $\ve=\pm1$ and $\sigma\in\{-1,0,1\}$ 
can be represented in the form~\eqref{eq:GroupoidsWithMixedTrans}, 
where the parameter~$\sigma$ plays the role of~$\gamma$, 
$\Phi_0$ is the identity transformation of~$(t,x,u)$, 
and $\Phi_1$ and~$\Phi_{-1}$ are transformational parts 
of~\ref{T2a} and~\ref{T2b}, respectively. 
Since this class is a subclass of~$\mathcal W_{12}$ 
that is obtained  by gauging the arbitrary elements~$\mathcal W_{12}$ 
with equivalence transformations of~$\mathcal W_{12}$, 
then the equivalence groupoid of~$\mathcal W_{12}$ is of similar structure.    
The analogue of the last claim also holds 
for the intermediate class of equations of the form~\eqref{eq:GenWaveEqsSpecial}  
with $\ve\in\mathbb R_{\ne0}$ and $\sigma\in\mathbb R$. 
\end{remark}

\begin{remark}\label{rem:NormalizationOfW120}
The class~$\mathcal W_{120}$ of equations of the form~\eqref{eq:GenWaveEqsSpecial} 
with $\ve=\pm1$ and $\sigma=0$ is normalized. 
Its equivalence group~$G^\sim_{120}$ consists of the transformations 
\[
\tilde t=T(t):=\frac{a_1t+a_0}{a_3t+a_2},\quad 
\tilde x=b_1x+b_0, \quad 
\tilde u=\pm\sqrt{\big|b_1^{-1}T_t\big|}\,u, \quad 
\tilde\ve=\ve,\quad
\tilde\mu=b_1^{-2}\mu,
\]
where $a_0$, \dots, $a_3$ are arbitrary constants with $a_1a_2-a_0a_3\ne0$ 
that are defined up to a common nonzero multiplier, 
and $b_0$ and $b_1$ are arbitrary constants with $b_1\ne0$. 
The normal subgroup of~$G^\sim_{120}$ associated with the kernel point symmetry group 
of equations from~$\mathcal W_{120}$ is singled out from~$G^\sim_{120}$ 
by the constraints $b_0=0$ and $b_1=1$, 
and thus the kernel invariance algebra of equations from~$\mathcal W_{120}$ 
coincides with~$\mathfrak g^\cap_0$. 
This is why the complete group classification of the class~$\mathcal W_{120}$ 
within the framework of the algebraic method 
reduces to the classification of subalgebras of the algebra $\langle\p_x,\,2x\p_x-u\p_u\rangle$, 
which is trivial. 
A complete list of inequivalent subalgebras of this algebra is exhausted by 
$\{0\}$, $\langle\p_x\rangle$, $\langle2x\p_x-u\p_u\rangle$, $\langle\p_x,\,2x\p_x-u\p_u\rangle$,
all of which are appropriate, cf.\ Lemma~\ref{lem:ClassW1SpecialLieSymExts}.
\end{remark}

\section{Classification of appropriate subalgebras}\label{sec:ClassificationSubalgebrasGenWaveEqs}

The equivalence group~$G^\cap$ and the equivalence algebra~$\mathfrak g^\sim$
admit related representations in the form of a semi-direct product and a semi-direct sum,
$G^\sim=\hat G^\cap\rtimes G^\sim_{\rm ess}$
and $\mathfrak g^\sim=\hat{\mathfrak g}^\cap\rsemioplus\mathfrak g^\sim_{\rm ess}$, respectively.
Here
$\hat G^\cap=\{\mathscr P^t(c_0)\mid c_0\in\mathbb R\}$ is the normal subgroup of~$G^\sim$
associated with the kernel group~$G^\cap$ of the class~\eqref{eq:GenWaveEqs},
$G^\sim_{\rm ess}$~is the subgroup of~$G^\sim$ that consists of the transformations
of the form~\eqref{eq:EquivalenceGroupGenWaveEqs} with $c_0=0$
and thus effectively acts on the class~\eqref{eq:GenWaveEqs},
$\hat{\mathfrak g}^\cap=\langle\PP^t\rangle$ is the ideal of~$\mathfrak g^\sim$
corresponding to the kernel algebra~$\mathfrak g^\cap$ and
$\mathfrak g^\sim_{\rm ess}=\langle\DDD^u,\DDD^t,\DDD(\zeta),\ZZ(\chi)\rangle$
is a subalgebra of~$\mathfrak g^\sim$, which is the ``essential'' part of~$\mathfrak g^\sim$
from the point of view of Lie-symmetry extensions within the class~\eqref{eq:GenWaveEqs}.
Denote by $\pi$ the projection from the space with the coordinates $(t,x,u,f,g)$ onto the space with the coordinates $(t,x,u)$,
and by $\pi_*Q$ the pushforward of a projectable vector field~$Q$ in the space with the coordinates $(t,x,u,f,g)$ by~$\pi$.
A subalgebra~$\mathfrak a$ of~$\mathfrak g^\sim$ is called appropriate
if its projection $\pi_*\mathfrak a$ is the maximal Lie invariance algebra~$\mathfrak g_\theta$
of an equation~$\mathcal L_\theta$ from the class~\eqref{eq:GenWaveEqs}.
Any appropriate subalgebra~$\mathfrak a$ of~$\mathfrak g^\sim$ should contain $\hat{\mathfrak g}^\cap$ as an ideal.
Hence it can also be represented in the form of the semi-direct sum $\mathfrak a=\hat{\mathfrak g}^\cap\rsemioplus\mathfrak s$,
where $\mathfrak s$ is a subalgebra of~$\mathfrak g^\sim_{\rm ess}$.
We call a subalgebra $\mathfrak s$ of~$\mathfrak g^\sim_{\rm ess}$ \emph{appropriate}
if $\mathfrak s=\mathfrak g^\sim_{\rm ess}\cap\mathfrak a$
for an appropriate subalgebra~$\mathfrak a$ of~$\mathfrak g^\sim$.
Appropriate subalgebras~$\mathfrak a_1$ and~$\mathfrak a_2$ of~$\mathfrak g^\sim$
are $G^\sim$-equivalent if and only if
the corresponding subalgebras~$\mathfrak s_1$ and~$\mathfrak s_2$
of~$\mathfrak g^\sim_{\rm ess}$ are $G^\sim_{\rm ess}$-equivalent.
As a result, the classification of Lie-symmetry extensions
induced by subalgebras of~$\mathfrak g^\sim$ up to $G^\sim_{\rm ess}$-equivalence reduces to
the classification of appropriate subalgebras of~$\mathfrak g^\sim_{\rm ess}$
up to $G^\sim_{\rm ess}$-equivalence.

For the latter classification,
we need to compute the adjoint action of the group~$G^\sim_{\rm ess}$ on the algebra~$\mathfrak g^\sim_{\rm ess}$.
Since this algebra is infinite-dimensional,
it is convenient to realize this computation
via pushing forward
the vector fields $\DDD^u$, $\DDD^t$, $\DDD(\zeta)$ and $\ZZ(\chi)$, which span~$\mathfrak g^\sim_{\rm ess}$,
by elementary equivalence transformations from~$G^\sim_{\rm ess}$,
i.e., by $\mathscr D^u(c_2)$, $\mathscr D^t(c_1)$, $\mathscr D(\varphi)$ and~$\mathscr Z(\psi)$, cf.\ Section~\ref{sec:EquivGroup}.
In~other words, the usual transformation rule of vector fields under point transformations will be used \cite{bihl2012b,card2011a}.
This yields the following non-identity actions:
\begin{align*}
&\mathscr Z_*(\psi) \DDD^u = \DDD^u- \ZZ(\psi),                                             &&\mathscr D^u_*(c_2) \ZZ(\chi) = c_2\ZZ(\chi),\\
&\mathscr Z_*(\psi) \DDD(\zeta)=\DDD(\zeta) + \ZZ(\zeta\psi_x-\tfrac12\zeta_x\psi), &&\mathscr D_*(\varphi) \ZZ(\chi) = \ZZ\big(|\hat\varphi_x|^{-1/2}\chi(\hat\varphi)\big),\\
&\mathscr D_*(\varphi) \DDD(\zeta) = \DDD\big(\zeta(\hat\varphi)/\hat\varphi_x\big),       &&
\end{align*}
where $\hat\varphi=\hat\varphi(x)$ is the inverse of the function $\varphi$.

All vector fields from~$\pi_*\mathfrak g^\sim_{\rm ess}$ identically satisfy the determining equations
for Lie symmetries of equations from the class~\eqref{eq:GenWaveEqs},
except the equations~\eqref{eq:DetEqForLieSymsOfGenWaveEqs4} and~\eqref{eq:DetEqForLieSymsOfGenWaveEqs5}.
The latter two equations imply restrictions on appropriate subalgebras of~$\mathfrak g^\sim_{\rm ess}$.

\begin{lemma}\label{lem:OnAppropriateSubalgebras1}
$\mathfrak s\cap\langle\DDD^u,\ZZ(\chi)\rangle=\mathfrak s\cap\langle\DDD^t\rangle=\{0\}$
for any appropriate subalgebra~$\mathfrak s$.
\end{lemma}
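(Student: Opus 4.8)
The plan is to test membership in $\mathfrak s$ of the two candidate vector fields $\DDD^u$ and $\DDD^t$ (and, more generally, an arbitrary linear combination $a\DDD^u+\ZZ(\chi)$ and a multiple of $\DDD^t$) against the determining equations~\eqref{eq:DetEqForLieSymsOfGenWaveEqs4} and~\eqref{eq:DetEqForLieSymsOfGenWaveEqs5}, using the fact recalled just before the lemma that every vector field in $\pi_*\mathfrak g^\sim_{\rm ess}$ automatically satisfies the other determining equations~\eqref{eq:DetEqForLieSymsOfGenWaveEqs1}--\eqref{eq:DetEqForLieSymsOfGenWaveEqs3}. So the whole obstruction to a subalgebra $\mathfrak s$ being appropriate is encoded in~\eqref{eq:DetEqForLieSymsOfGenWaveEqs4}--\eqref{eq:DetEqForLieSymsOfGenWaveEqs5}; I just need to see that these two equations cannot hold for the relevant arbitrary elements $(f,g)$ with $(f_u,g_{uu})\ne(0,0)$.

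First I would treat $\mathfrak s\cap\langle\DDD^t\rangle$. If this intersection is nonzero it contains $\DDD^t$, whose projection is $\pi_*\DDD^t=t\p_t$, i.e.\ $\tau=t$, $\xi=\eta=0$. Substituting into~\eqref{eq:DetEqForLieSymsOfGenWaveEqs4} gives $0=-2f$, hence $f=0$, contradicting $f\ne0$ (the defining inequality of the class~$\mathcal W_{\rm gen}\supseteq\mathcal W$). Thus $\mathfrak s\cap\langle\DDD^t\rangle=\{0\}$.

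Next I would handle $\mathfrak s\cap\langle\DDD^u,\ZZ(\chi)\rangle$. A nonzero element of this intersection, after scaling, is either of the form $\DDD^u+\ZZ(\chi)$ for some function $\chi$ or of the form $\ZZ(\chi)$ with $\chi\ne0$. In the first case the projection is $u\p_u+\chi(x)\p_u$, i.e.\ $\tau=\xi=0$, $\eta=u+\chi(x)$, $\eta_u=1$; equation~\eqref{eq:DetEqForLieSymsOfGenWaveEqs4} then reads $(u+\chi)f_u=0$, forcing $f_u=0$, and~\eqref{eq:DetEqForLieSymsOfGenWaveEqs5} reads $(u+\chi)g_u=g-\chi_{xx}f$; differentiating the latter twice in $u$ gives $(u+\chi)g_{uuu}+2g_{uu}=g_{uu}$, i.e.\ $(u+\chi)g_{uuu}=-g_{uu}$, which together with $f_u=0$ is incompatible with $(f_u,g_{uu})\ne(0,0)$ unless $g_{uu}=0$ as well — but that is exactly the excluded linear case. (One must check the sub-case $g_{uu}\equiv0$ is genuinely excluded from $\mathcal W$, which it is by definition.) In the remaining case $\ZZ(\chi)$ with $\chi\ne0$, the projection is $\chi(x)\p_u$, i.e.\ $\tau=\xi=0$, $\eta=\chi(x)$, $\eta_u=0$; then~\eqref{eq:DetEqForLieSymsOfGenWaveEqs4} gives $\chi f_u=0$ so $f_u=0$, and~\eqref{eq:DetEqForLieSymsOfGenWaveEqs5} gives $\chi g_u=-\chi_{xx}f$, whose $u$-derivative yields $\chi g_{uu}=0$, hence $g_{uu}=0$; combined with $f_u=0$ this again contradicts $(f_u,g_{uu})\ne(0,0)$. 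So the intersection is trivial.

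The only subtlety — and the part I would present most carefully — is the bookkeeping needed to pass from ``$\DDD^u$ or $\ZZ(\chi)$ lies in $\mathfrak s$'' to ``$\mathfrak s$ contains an element of the stated forms'': since $\mathfrak s\subseteq\mathfrak g^\sim_{\rm ess}=\langle\DDD^u,\DDD^t,\DDD(\zeta),\ZZ(\chi)\rangle$ and we intersect with $\langle\DDD^u,\ZZ(\chi)\rangle$, any element of the intersection is a linear combination of $\DDD^u$ and some $\ZZ(\chi_j)$'s, i.e.\ precisely $c\DDD^u+\ZZ(\chi)$ for a single aggregated $\chi$; scaling by $c$ when $c\ne0$ and otherwise taking the $\ZZ(\chi)$ part covers both cases, so no generality is lost. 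The rest is the elementary substitution-and-split argument above; I do not expect any real obstacle beyond keeping track of the excluded linear subclass $\mathcal W_{\rm lin}$ at each branch.
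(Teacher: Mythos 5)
Your overall route is exactly the paper's: substitute the projections of a general element $b\DDD^u+\ZZ(\chi)$ and of $\DDD^t$ into the determining equations~\eqref{eq:DetEqForLieSymsOfGenWaveEqs4}--\eqref{eq:DetEqForLieSymsOfGenWaveEqs5}, obtain $(bu+\chi)f_u=0$ and $(bu+\chi)g_u=bg-\chi_{xx}f$ (resp.\ $f=0$), and contradict $f\ne0$ or $(f_u,g_{uu})\ne(0,0)$. Your treatment of $\DDD^t$ and of the case $\ZZ(\chi)$ with $\chi\ne0$ is fine, as is the aggregation remark that any element of $\mathfrak s\cap\langle\DDD^u,\ZZ(\chi)\rangle$ has the form $c\,\DDD^u+\ZZ(\chi)$.

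There is, however, one step in the $\DDD^u+\ZZ(\chi)$ branch that does not work as written. From $(u+\chi)g_u=g-\chi_{xx}f$ and $f_u=0$ you pass directly to the twice-differentiated relation $(u+\chi)g_{uuu}=-g_{uu}$ and assert that this is incompatible with $g_{uu}\ne0$. It is not: that equation alone admits solutions with $g_{uu}=C(x)/(u+\chi)\ne0$, so no contradiction follows from it. The needed conclusion comes from differentiating the relation \emph{once} with respect to~$u$: this gives $g_u+(u+\chi)g_{uu}=g_u$, i.e.\ $(u+\chi)g_{uu}=0$, hence $g_{uu}=0$, which together with $f_u=0$ violates the nonlinearity condition $(f_u,g_{uu})\ne(0,0)$ defining~$\mathcal W$. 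With that one-line repair (which is what the paper's proof implicitly uses, for general $b\ne0$ in $(bu+\chi)g_{uu}=0$), your argument is complete and coincides with the paper's.
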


\begin{proof}
Suppose that an appropriate subalgebra $\mathfrak s$ of~$\mathfrak g^\sim_{\rm ess}$ contains a vector field~$Q=b\DDD^u+\ZZ(\chi)$,
where the constant~$b$ or the function~$\chi=\chi(x)$ does not vanish.
Then $\pi_*Q$ is a Lie-symmetry vector field for an equation~$\mathcal L_\theta$ from the class~\eqref{eq:GenWaveEqs}.
Substituting the components of the vector field~$\pi_*Q$
into the determining equations~\eqref{eq:DetEqForLieSymsOfGenWaveEqs4} and~\eqref{eq:DetEqForLieSymsOfGenWaveEqs5}
implies the following conditions for the arbitrary-element tuple $\theta=(f,g)$:
\[
(bu+\chi)f_u=0,\quad (bu+\chi)g_u = bg-\chi_{xx}f.
\]
Then $f_u=0$ and $g_{uu}=0$ if $b\ne0$ or $\chi\ne0$. This contradicts the definition of the class~\eqref{eq:GenWaveEqs}.

Analogously, the condition~$\pi_*\DDD^t\in\mathfrak g_\theta$ gives the equation $f=0$,
which is also inconsistent with the definition of the class~\eqref{eq:GenWaveEqs}.

Therefore, any appropriate subalgebra contains no vector fields of the forms considered.
\end{proof}

\begin{lemma}\label{lem:OnAppropriateSubalgebras2}
$\dim\big(\mathfrak s\cap\langle\DDD(\zeta),\ZZ(\chi)\rangle\big)\in\{0,1,3\}$ for any appropriate subalgebra~$\mathfrak s$.
\end{lemma}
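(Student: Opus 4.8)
The plan is to analyse how a vector field of the form $\DDD(\zeta)+\ZZ(\chi)$ can sit inside an appropriate subalgebra, exploiting the determining equations~\eqref{eq:DetEqForLieSymsOfGenWaveEqs4} and~\eqref{eq:DetEqForLieSymsOfGenWaveEqs5} together with the commutation relations of~$\mathfrak g^\sim_{\rm ess}$. Write $\mathfrak d:=\mathfrak s\cap\langle\DDD(\zeta),\ZZ(\chi)\rangle$ and note that, since $[\DDD(\zeta^1),\DDD(\zeta^2)]=\DDD(\zeta^1\zeta^2_x-\zeta^1_x\zeta^2)$ and $[\DDD(\zeta),\ZZ(\chi)]=\ZZ(\zeta\chi_x-\tfrac12\zeta_x\chi)$, the subspace $\mathfrak d$ is a Lie subalgebra, and $\mathfrak d\cap\langle\ZZ(\chi)\rangle$ is an ideal of~$\mathfrak d$; by Lemma~\ref{lem:OnAppropriateSubalgebras1} this ideal is trivial, so the projection $\DDD(\zeta)+\ZZ(\chi)\mapsto\zeta$ is injective on~$\mathfrak d$. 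Hence $\dim\mathfrak d$ equals the dimension of a Lie algebra of vector fields $\zeta(x)\p_x$ on the line, and the possibility $\dim\mathfrak d=2$ has to be excluded; the classical fact that a Lie algebra of vector fields on~$\mathbb R$ has dimension $0$, $1$, or $3$ (realizations of $\{0\}$, the affine algebra being excluded because its $2$-dimensional part is not closed without the translation, or rather: the only transitive cases are $\langle\p_x\rangle$, $\langle\p_x,x\p_x\rangle$, $\langle\p_x,x\p_x,x^2\p_x\rangle$) gives $\dim\mathfrak d\in\{0,1,2,3\}$ a priori, so the real content is ruling out $\dim\mathfrak d=2$ and also ruling out the $2$-dimensional non-transitive case $\langle\p_x,x\p_x\rangle$ via the arbitrary elements.

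First I would set up the computation: for $Q=\DDD(\zeta)+\ZZ(\chi)\in\mathfrak s$ the projected field is $\pi_*Q=\zeta\p_x+(\tfrac12\zeta_xu+\chi)\p_u$, so substituting $\tau=0$, $\xi=\zeta$, $\eta=\tfrac12\zeta_xu+\chi$ into~\eqref{eq:DetEqForLieSymsOfGenWaveEqs4} yields $\zeta f_x+(\tfrac12\zeta_xu+\chi)f_u=2\zeta_xf$, i.e. a first-order linear PDE constraining $f$ once $\zeta$ and~$\chi$ are fixed, and similarly~\eqref{eq:DetEqForLieSymsOfGenWaveEqs5} constrains~$g$. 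Now suppose $\dim\mathfrak d=2$, so $\mathfrak d$ contains two independent fields $Q_i=\DDD(\zeta^i)+\ZZ(\chi^i)$, $i=1,2$, with $\zeta^1,\zeta^2$ linearly independent functions whose span $\langle\zeta^1\p_x,\zeta^2\p_x\rangle$ is a $2$-dimensional Lie algebra of vector fields on~$\mathbb R$; up to the change of variable $x$ (which is realized by $\mathscr D(\varphi)\in G^\sim_{\rm ess}$, whose adjoint action on $\DDD(\zeta)$ is $\DDD(\zeta)\mapsto\DDD(\zeta(\hat\varphi)/\hat\varphi_x)$) one may normalize this pair to $\{\p_x,x\p_x\}$, i.e. $\zeta^1=1$, $\zeta^2=x$. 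Then the two copies of~\eqref{eq:DetEqForLieSymsOfGenWaveEqs4} become $\chi^1f_u=0$ and $f_x+(\tfrac12u+\chi^2)f_u=2f$; the first forces $\chi^1=0$ (since $f_u\ne0$ is impossible together with $\chi^1\ne0$ would contradict... no: it forces $\chi^1f_u=0$, and if $f_u\equiv0$ then we are in a degenerate situation) — here one must split on $f_u=0$ versus $f_u\ne0$. If $f_u\ne0$ somewhere then $\chi^1=0$; using the residual freedom $\mathscr Z(\psi)$ (whose adjoint on $\DDD(\zeta)$ adds $\ZZ(\zeta\psi_x-\tfrac12\zeta_x\psi)$) one can try to also kill $\chi^2$, reducing to $\mathfrak d=\langle\DDD(1),\DDD(x)\rangle$; then~\eqref{eq:DetEqForLieSymsOfGenWaveEqs4} reads $f_x=2f$ (from $\zeta=1$) and $xf_x+\tfrac12uf_u=2f$, i.e. $\tfrac12uf_u=2f-xf_x=2f-2xf$... wait, $f_x=2f$ gives $xf_x=2xf$, so $\tfrac12uf_u=(2-2x)f$, which is inconsistent with $f$ depending on $x$ only through $f_x=2f$ unless $f_u$ compensates — spelling this out should produce a contradiction with $f\ne0$.

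The cleanest route, which I would actually adopt, is to argue via the image $\{\zeta\mid\DDD(\zeta)+\ZZ(\chi)\in\mathfrak d\ \text{for some}\ \chi\}$: this is a Lie algebra $\mathfrak l$ of vector fields $\zeta\p_x$ on an interval of~$\mathbb R$, and by Lie's classification the only such algebras of dimension $\ge2$ that are realizable are, up to local diffeomorphism, $\langle\p_x,x\p_x\rangle$ (dimension $2$) and $\langle\p_x,x\p_x,x^2\p_x\rangle$ (dimension $3$), plus their subalgebras; in particular no $\mathfrak l$ of dimension $\ge4$ exists, and a $2$-dimensional $\mathfrak l$ is conjugate to $\langle\p_x,x\p_x\rangle$. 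So the lemma reduces to: \emph{if $\mathfrak l\cong\langle\p_x,x\p_x\rangle$ then the corresponding $\DDD$-$\ZZ$ combinations cannot all lie in an appropriate subalgebra.} To finish, after normalizing $\mathfrak l=\langle\p_x,x\p_x\rangle$ and using $\mathscr Z(\psi)$-conjugation to simplify the $\chi$-parts, substitute into~\eqref{eq:DetEqForLieSymsOfGenWaveEqs4}--\eqref{eq:DetEqForLieSymsOfGenWaveEqs5} and derive $f_u=g_{uu}=0$, contradicting the defining inequality $(f_u,g_{uu})\ne(0,0)$ of the class~$\mathcal W$. The main obstacle is the bookkeeping in this last step: one has two functional parameters $\chi^1,\chi^2$ and must carefully use the adjoint action of $\mathscr Z(\psi)$ and $\mathscr D^u(c_2)$ to bring them to a canonical form before the substitution cleanly forces $f_u=g_{uu}=0$; handling the subcase where $f_u\equiv0$ on the nose (so that~\eqref{eq:DetEqForLieSymsOfGenWaveEqs4} is vacuous for the $\chi$-terms and one must extract the contradiction from~\eqref{eq:DetEqForLieSymsOfGenWaveEqs5} and $g_{uu}\ne0$) is the fiddly part that I expect to require the most care.
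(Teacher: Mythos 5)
Your reduction of the problem to the image algebra $\mathfrak l=\{\zeta\mid\DDD(\zeta)+\ZZ(\chi)\in\mathfrak s \text{ for some }\chi\}$ is sound (injectivity follows from Lemma~\ref{lem:OnAppropriateSubalgebras1}), and Lie's classification of vector-field algebras on the line is a legitimate alternative route to the upper bound $\dim\le3$, which the paper instead obtains computationally via $\zeta^i_{xxx}=0$. But the decisive step, the exclusion of dimension two, fails as you propose it. The contradiction you predict ($f_u=g_{uu}=0$, or $f=0$) never materializes: normalizing $\zeta^1=1$, $\zeta^2=x$ and removing the $\chi$'s, the two copies of~\eqref{eq:DetEqForLieSymsOfGenWaveEqs4} read $f_x=0$ and $xf_x+\tfrac12uf_u=2f$ (note your intermediate equations are off: for $\zeta=1$ one gets $f_x+\chi^1f_u=0$, not $\chi^1f_u=0$, and for $\zeta=x$ the first term is $xf_x$, not $f_x$), whose solution $f=\alpha u^4$ with a nonzero constant~$\alpha$ is perfectly admissible in~$\mathcal W$; similarly \eqref{eq:DetEqForLieSymsOfGenWaveEqs5} is consistent and yields $g=\mu^1u$ up to equivalence. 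Indeed, equations such as $u_{tt}=\pm u^4u_{xx}$ (Cases~\ref{case14d} and~\ref{case19d}) really are invariant under the two fields $\pi_*\DDD(1)$ and $\pi_*\DDD(x)$, so no bookkeeping with the nonlinearity constraint $(f_u,g_{uu})\ne(0,0)$ can rule out the two-dimensional configuration at the level of invariant equations.

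What actually excludes $\dim\big(\mathfrak s\cap\langle\DDD(\zeta),\ZZ(\chi)\rangle\big)=2$ is the maximality built into the word ``appropriate'', which your argument never invokes. The paper's proof shows that the presence of two independent fields forces, up to $G^\sim$-equivalence (eliminating $f_x$ and $g_x$ via the Wronskian of $\zeta^1,\zeta^2$, then gauging $\beta$ by $\mathscr Z$ and $\mu^1$ by $\mathscr D(\varphi)$), $\chi^i=0$, $f=\pm u^4$, $g=\mu^1u$ with $\mu^1=\const$ and $\zeta^i_{xxx}=0$; such an equation automatically admits all three fields $\pi_*\DDD(1)$, $\pi_*\DDD(x)$, $\pi_*\DDD(x^2)$ (cf.\ Corollary~\ref{cor:OnAppropriateSubalgebras1}). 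Since $\pi_*$ is injective on $\mathfrak g^\sim_{\rm ess}$ and $\pi_*\mathfrak s=\mathfrak g_\theta$ for an appropriate subalgebra, the element $\DDD(x^2)$ must then itself lie in~$\mathfrak s$, so the intersection is three-dimensional, never two-dimensional. Your final step must be replaced by this ``two implies three, and maximality forces the third into $\mathfrak s$'' argument; the normalization of $\mathfrak l$ to $\langle\p_x,x\p_x\rangle$ and the $\mathscr Z(\psi)$-gauging you set up can be kept as the computational scaffolding for it.
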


\begin{proof}
Suppose that~$\mathfrak s$ is an appropriate subalgebra of $\mathfrak g^\sim_{\rm ess}$
and $\dim\big(\mathfrak s\cap\langle\DDD(\zeta),\ZZ(\chi)\rangle\big)\geqslant2$.
This means that the subalgebra~$\mathfrak s$ contains (at least) two vector fields~$Q^i=\DDD(\zeta^i)+\ZZ(\chi^i)$,
where the functions~$\zeta^i$, $i=1,2$, should be linearly independent in view of Lemma~\ref{lem:OnAppropriateSubalgebras1}.
In other words, the projections $\pi_*Q^i$ of $Q^i$ simultaneously
are Lie-symmetry vector fields of an equation from the class~\eqref{eq:GenWaveEqs}.
By~$W$ we denote the Wronskian of the functions $\zeta^1$ and $\zeta^2$, $W=\zeta^1\zeta^2_x-\zeta^2\zeta^1_x$.
$W\ne0$ as the functions~$\zeta^1$ and $\zeta^2$ are linearly independent.

Plugging the coefficients of~$\pi_*Q^i$ into the equation~\eqref{eq:DetEqForLieSymsOfGenWaveEqs4} gives two equations with respect to~$f$ only,
\begin{equation}\label{eq:DetermingEquationsSimplifiedTwice1}
2\zeta^if_x+(\zeta^i_x u+2\chi^i)f_u = 4\zeta^i_xf.
\end{equation}
We multiply the equation~\eqref{eq:DetermingEquationsSimplifiedTwice1} with $i=1$ by $\zeta^2$ 
and subtract it from the equation~\eqref{eq:DetermingEquationsSimplifiedTwice1} with $i=2$ multiplied by $\zeta^1$. 
Dividing the resulting equation by~$W$, we obtain the ordinary differential equation
$(u+\beta)f_u=4f$,
where $\beta=\beta(x):=2(\zeta^1\chi^2_x-\zeta^2\chi^1_x)/W$ and the variable~$x$ plays the role of a parameter.
It is possible to set $\beta=0$ by means of an equivalence transformation, $\mathscr Z(-\beta)$.
Indeed, this transformation preserves the form of the vector fields~$Q^i$, only changing the values of the functional parameters $\chi^i$.
In particular, it does not affect the linear independency of the functions $\zeta^i$.
The integration of the above equation for $\beta=0$ yields that $f=\alpha u^4$, where $\alpha=\alpha(x)$ is a nonvanishing function of~$x$.
In view of the derived form of~$f$, the splitting of equations~\eqref{eq:DetermingEquationsSimplifiedTwice1} with respect to~$u$
leads to $\zeta^i\alpha_x=0$ and $\chi^i\alpha=0$, i.e., $\alpha_x=0$ and $\chi^i=0$.
The constant $\alpha$ can be scaled to $\alpha=\pm1$ by an equivalence transformation.

In a similar manner, consider the equation~\eqref{eq:DetEqForLieSymsOfGenWaveEqs5}, taking into account the restrictions set on parameter functions and the form of~$f$.
For each~$Q^i$, the equation~\eqref{eq:DetEqForLieSymsOfGenWaveEqs5} gives an equation with respect to~$g$,
\begin{equation}\label{eq:DetermingEquationsSimplifiedTwice2}
2\zeta^ig_x+\zeta^i_x ug_u = \zeta^i_x g-\zeta^i_{xxx}\alpha u^5.
\end{equation}
Again, we multiply the equation~\eqref{eq:DetermingEquationsSimplifiedTwice2} with $i=1$ by $\zeta^2$ 
and subtract it from the equation~\eqref{eq:DetermingEquationsSimplifiedTwice2} with $i=2$ multiplied by $\zeta^1$, 
divide the resulting equation by~$W$ and thereby obtain that
$ug_u = g + \mu^0 u^5$,
where $\mu^0=\mu^0(x):=-\alpha(\zeta^1\zeta^2_{xxx}-\zeta^2\zeta^1_{xxx})/W$  and the variable~$x$ again plays the role of a parameter. 
Integrating the last equation for~$g$ directly gives $g=\mu^0u^5/4+\mu^1u$, where $\mu^1=\mu^1(x)$ is a smooth function of~$x$. 
The parameter function~$\mu^1$ can be set equal to zero by the equivalence transformation~$\mathscr D(\varphi)$,
where the function $\varphi=\varphi(x)$ is a solution 
of the equation $\alpha(2\varphi_{xxx}\varphi_x-\varphi_{xx}^{\,\,\,\,2}) +\mu^1\varphi_x^{\,\,2}=0$.
Substituting the derived form of~$g$ into the equations~\eqref{eq:DetermingEquationsSimplifiedTwice2} and splitting with respect to~$u$, 
we find that~$\mu^1_x=0$, $\zeta^i_{xxx}=0$.

Summing up, we have proved that any equation of the class~\eqref{eq:GenWaveEqs}
admitting (at least) two linearly independent vector fields~$\pi_*Q^i$
in fact possesses exactly three linearly independent vector fields of this form and
is $G^\sim$-equivalent to an equation of the form
$u_{tt} = \pm u^4u_{xx} + \mu^1u$,
where $\mu^1$ is a constant which can be scaled to $\pm1$ if it is not zero.
\end{proof}

The equation $u_{tt} = \pm u^4u_{xx}$, for which $\mu^1=0$, admits an additional Lie-symmetry extension.

\begin{corollary}\label{cor:OnAppropriateSubalgebras1}
There are only two $G^\sim$-inequivalent cases of Lie-symmetry extensions in the class~\eqref{eq:GenWaveEqs}
where the corresponding Lie invariance algebras contain at least two linearly independent vector fields of the form~$\pi_*Q^i$
with $Q^i=\DDD(\zeta^i)+\ZZ(\chi^i)$,
\begin{gather*}
\ref{case14d}.\ u_{tt} = \ve u^4u_{xx} + \ve' u\colon\quad\mathfrak g^{\max} = \mathfrak g^\cap + \pi_*\big\langle\DDD(1),\DDD(x),\DDD(x^2)\big\rangle,\\
\ref{case19d}.\ u_{tt} = \ve u^4u_{xx}\colon\quad\mathfrak g^{\max} = \mathfrak g^\cap + \pi_*\big\langle\DDD(1),\DDD(x),\DDD(x^2),\DDD^u-2\DDD^t\big\rangle 
\end{gather*}
with $\ve,\ve'=\pm1$.
\end{corollary}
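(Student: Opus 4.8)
The plan is to combine the structural information already established in Lemma~\ref{lem:OnAppropriateSubalgebras2} with a short analysis of the remaining freedom in choosing the three vector fields and of the additional determining equations. Lemma~\ref{lem:OnAppropriateSubalgebras2} tells us that if $\mathfrak g_\theta$ contains at least two linearly independent vector fields of the form $\pi_*Q^i$ with $Q^i=\DDD(\zeta^i)+\ZZ(\chi^i)$, then up to $G^\sim$-equivalence the equation is of the form $u_{tt}=\ve u^4u_{xx}+\mu^1 u$ with $\mu^1$ a constant (scalable to $\pm1$ if nonzero), and that the span $\mathfrak s\cap\langle\DDD(\zeta),\ZZ(\chi)\rangle$ is then exactly three-dimensional with $\chi^i=0$ and $\zeta^i_{xxx}=0$. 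Hence this intersection is precisely $\pi_*\langle\DDD(1),\DDD(x),\DDD(x^2)\rangle$, which already yields the extension displayed in Case~\ref{case14d}. So the real work is (a) to handle the case $\mu^1\ne0$ versus $\mu^1=0$, and (b) for $\mu^1=0$, to determine the full maximal Lie invariance algebra, i.e.\ whether any further vector fields from $\mathfrak g^\sim_{\rm ess}$ (necessarily involving $\DDD^t$ or $\DDD^u$, since the $\DDD(\zeta),\ZZ(\chi)$-part is exhausted) can be adjoined.

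First I would pin down, for each of the two equations $u_{tt}=\ve u^4u_{xx}+\mu^1 u$, the maximal Lie invariance algebra by going back to the determining equations \eqref{eq:DetEqForLieSymsOfGenWaveEqs1}--\eqref{eq:DetEqForLieSymsOfGenWaveEqs5} with $f=\ve u^4$ and $g=\mu^1 u$. Equation~\eqref{eq:DetEqForLieSymsOfGenWaveEqs4} becomes $4\eta^1 u + (\text{lower order}) = 2(\xi_x-\tau_t)\ve u^4$, which on splitting with respect to~$u$ forces $\xi_x=\tau_t$ and $\eta^1=0$, so $\eta=\eta^0(t,x)$ and (using \eqref{eq:DetEqForLieSymsOfGenWaveEqs2}, \eqref{eq:DetEqForLieSymsOfGenWaveEqs3}) $\eta^0$ is constrained too; pushing this through together with \eqref{eq:DetEqForLieSymsOfGenWaveEqs5} yields that $\tau=\tau(t)$ with $\tau$ at most affine and $\xi$ at most quadratic in $x$, and in the case $\mu^1=0$ an extra one-parameter scaling survives because the equation is then homogeneous of an appropriate degree — this is exactly the vector field corresponding to $\DDD^u-2\DDD^t$, i.e.\ $-2t\p_t+u\p_u$ on the level of~$\pi_*$. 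For $\mu^1\ne0$ no such scaling is admitted since the term $\mu^1 u$ breaks the homogeneity. This gives the two algebras in the statement and shows they are the maximal ones.

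Next I would confirm $G^\sim$-inequivalence of Cases~\ref{case14d} and~\ref{case19d}: the arbitrary-element tuples are $(\ve u^4,\ve' u)$ and $(\ve u^4,0)$, and no equivalence transformation~\eqref{eq:EquivalenceGroupGenWaveEqs} can send $g\ne0$ to $g=0$ while preserving $f=\ve u^4$ — indeed, the transformation law for $g$ in Theorem~\ref{thm:EquivalenceGroupIbragrimovClass} shows that if $g$ has a nonzero $u$-linear part with $f$ a pure power $u^4$, the transformed $g$ still has one. The two $\ve'$-values in Case~\ref{case14d} are already declared $G^\sim$-inequivalent (they are related only by the additional equivalence transformation from family~\ref{T1}, as recorded after Theorem~\ref{thm:ClassWAdmTrans}). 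Finally, that these are the \emph{only} cases with two linearly independent $\pi_*Q^i$ is immediate from Lemma~\ref{lem:OnAppropriateSubalgebras2}.

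The main obstacle I anticipate is step (a)–(b): carefully solving the determining equations for $u_{tt}=\ve u^4 u_{xx}+\mu^1 u$ to be sure no exotic extra symmetry (for instance a $t$-dependent one entering through \eqref{eq:DetEqForLieSymsOfGenWaveEqs5}) has been overlooked, and in particular verifying that in the $\mu^1=0$ case the extension is exactly four-dimensional and not larger. This is a bounded computation — splitting polynomials in $u$ and using $\tau_{ttt}=0$-type consequences — but it must be done with care to rule out spurious solutions; everything else follows formally from Lemma~\ref{lem:OnAppropriateSubalgebras2} and Theorem~\ref{thm:EquivalenceGroupIbragrimovClass}.
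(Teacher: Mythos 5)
Your overall strategy coincides with the paper's: the corollary is read off from Lemma~\ref{lem:OnAppropriateSubalgebras2}, whose proof already pins down $f=\ve u^4$, $g=\mu^1u$ with $\mu^1$ constant, $\chi^i=0$ and $\zeta^i_{xxx}=0$ (so the $\DDD(\zeta)$-part is exactly $\langle\DDD(1),\DDD(x),\DDD(x^2)\rangle$), and the only remaining observation is that the value $\mu^1=0$ admits the additional extension by $\DDD^u-2\DDD^t$ while $\mu^1\ne0$ does not. The paper does not redo a determining-equation analysis at this point; you propose one as a verification of maximality, which is legitimate in principle.

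However, the computation you sketch contains a concrete error that would make the step fail as written. With $f=\ve u^4$ and $\eta=\eta^1(t,x)u+\eta^0(t,x)$, equation~\eqref{eq:DetEqForLieSymsOfGenWaveEqs4} reads $4\ve(\eta^1u+\eta^0)u^3=2(\xi_x-\tau_t)\ve u^4$, and splitting with respect to~$u$ gives $\eta^0=0$ and $\eta^1=\tfrac12(\xi_x-\tau_t)$ --- not, as you claim, $\eta^1=0$ and $\xi_x=\tau_t$. Your version excludes precisely the vector fields $\pi_*\DDD(x)=x\p_x+\tfrac12u\p_u$ and $\pi_*\DDD(x^2)=x^2\p_x+xu\p_u$ that the corollary asserts, so the computation carried out as described would yield a strictly smaller algebra and contradict the statement being proved. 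With the correct splitting, equations \eqref{eq:DetEqForLieSymsOfGenWaveEqs1}--\eqref{eq:DetEqForLieSymsOfGenWaveEqs3} give $\tau=\tau(t)$, $\xi=\xi(x)$ and $\tau_{tt}=0$, and equation~\eqref{eq:DetEqForLieSymsOfGenWaveEqs5} gives $\xi_{xxx}=0$ together with $\tau_t\mu^1=0$; this confirms exactly the two algebras of the corollary, four-dimensional for $\mu^1\ne0$ and five-dimensional, with the extra generator $\pi_*(\DDD^u-2\DDD^t)$, for $\mu^1=0$ --- but that is not the derivation you wrote down, so this step must be redone. The remaining parts of your argument (the $G^\sim$-inequivalence of the two cases via the transformation law for~$g$ in Theorem~\ref{thm:EquivalenceGroupIbragrimovClass}, and exhaustiveness via Lemma~\ref{lem:OnAppropriateSubalgebras2}) are sound.
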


Corollary~\ref{cor:OnAppropriateSubalgebras1} gives the classification of appropriate subalgebras of~$\mathfrak g^\sim_{\rm ess}$
the dimensions of whose intersections with $\langle\DDD(\zeta),\ZZ(\chi)\rangle$ are not less than two.
Hence we should continue with the computation of inequivalent appropriate subalgebras of~$\mathfrak g^\sim_{\rm ess}$
that contain at most one linearly independent vector field of the form $\DDD(\zeta)+\ZZ(\chi)$,
where $\zeta=\zeta(x)$ is a nonvanishing function.
In view of Lemma~\ref{lem:OnAppropriateSubalgebras1} it is obvious that the dimension of such subalgebras cannot be greater than three.
Here we select candidates for such subalgebras using only restrictions on appropriate subalgebras presented in Lemma~\ref{lem:OnAppropriateSubalgebras1}.
Since there exist specific restrictions for two- and three-dimensional appropriate subalgebras,
we will make an additional selection of appropriate subalgebras from the set of candidates directly in the course of the construction of invariant equations.

The result of the classification is formulated in the subsequent lemmas.

\begin{lemma}\label{lem:1DimInequivExtsForGenWaveEqs}
A complete list of $G^\sim_{\rm ess}$-inequivalent appropriate one-dimensional subalgebras of~$\mathfrak g^\sim_{\rm ess}$ is given by
\begin{gather}\label{eq:OneDimensionalSubalgebrasGenWaveEqs}
\langle2\DDD^u-q\DDD^t+2\DDD(\delta)\rangle, \quad
\langle\DDD^t-\DDD(2)\rangle, \quad 
\langle\DDD^t-\ZZ(2)\rangle, \quad
\langle\DDD(1)\rangle,
\end{gather}
where $\delta\in\{0,1\}$ and $q$ is an arbitrary constant.
\end{lemma}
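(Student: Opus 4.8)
The plan is to classify one-dimensional subalgebras $\langle Q\rangle$ of $\mathfrak g^\sim_{\rm ess}=\langle\DDD^u,\DDD^t,\DDD(\zeta),\ZZ(\chi)\rangle$ up to $G^\sim_{\rm ess}$-equivalence, retaining only those that are \emph{appropriate}, i.e.\ for which $\pi_*\langle Q\rangle$ is the full maximal Lie invariance algebra of some equation in the class~\eqref{eq:GenWaveEqs}. First I would write a general element as $Q=b_1\DDD^u+b_2\DDD^t+\DDD(\zeta)+\ZZ(\chi)$ with constants $b_1,b_2$ and functions $\zeta=\zeta(x)$, $\chi=\chi(x)$, and split the analysis according to whether the ``$\DDD(\zeta)$-component'' vanishes, i.e.\ whether $\zeta\equiv 0$ or $\zeta\not\equiv0$.

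If $\zeta\not\equiv0$: using the action $\mathscr D_*(\varphi)\DDD(\zeta)=\DDD(\zeta(\hat\varphi)/\hat\varphi_x)$ one can rectify $\zeta$ to $1$ by choosing $\varphi$ with $\varphi_x=1/\zeta$, so $Q\sim b_1\DDD^u+b_2\DDD^t+\DDD(1)+\ZZ(\chi)$ for a new $\chi$. Then the action $\mathscr Z_*(\psi)$ changes $\DDD(1)+\ZZ(\chi)$ by $\ZZ(\psi_x)$ and changes $b_1\DDD^u$ by $-\ZZ(b_1\psi)$, giving $\ZZ(\psi_x-b_1\psi+\chi)$; choosing $\psi$ to solve $\psi_x-b_1\psi+\chi=0$ kills the $\ZZ$-term entirely (note $\mathscr Z_*(\psi)$ fixes $\DDD^t$). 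Remaining residual scalings $\mathscr D^u(c_2)$, $\mathscr D^t(c_1)$ act trivially on $\DDD^u$, $\DDD^t$, $\DDD(1)$, so $Q$ is reduced, up to overall rescaling, to $b_1\DDD^u+b_2\DDD^t+\DDD(1)$. By Lemma~\ref{lem:OnAppropriateSubalgebras1} an appropriate subalgebra contains neither $\DDD^t$ nor any $b_1\DDD^u+\ZZ(\chi)$ alone, but these forbid only the cases $\zeta=0$; here we must instead impose appropriateness directly by substituting $\pi_*Q$ into the determining equations~\eqref{eq:DetEqForLieSymsOfGenWaveEqs4}--\eqref{eq:DetEqForLieSymsOfGenWaveEqs5} and checking that the resulting $(f,g)$ satisfies $(f_u,g_{uu})\ne(0,0)$ and that no larger invariance algebra is forced. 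This computation should produce exactly $\langle2\DDD^u-q\DDD^t+2\DDD(\delta)\rangle$ with $\delta\in\{0,1\}$ and $\langle\DDD(1)\rangle$ (the case $b_1=b_2=0$), with the normalisations $2\DDD^u$, $-q\DDD^t$, $2\DDD(\delta)$ chosen to match the parametrisation of Case~\ref{case1} and Case~\ref{case4} in Table~\ref{tab:GenWaveEqsExtensions}; the $\delta=1$ versus $\delta=0$ split is an invariant of whether the $\DDD^u$ and $\DDD(1)$ parts are simultaneously present, and cannot be removed by further equivalence.

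If $\zeta\equiv0$: then $Q=b_1\DDD^u+b_2\DDD^t+\ZZ(\chi)$. By Lemma~\ref{lem:OnAppropriateSubalgebras1} appropriateness forces $b_2\ne0$ (else $Q$ is of the forbidden form $b_1\DDD^u+\ZZ(\chi)$) and also forbids $Q$ to lie in $\langle\DDD^u,\ZZ(\chi)\rangle$, which is automatic once $b_2\ne0$. Rescale so $b_2=1$. Now $\mathscr Z_*(\psi)\DDD^u=\DDD^u-\ZZ(\psi)$ while $\mathscr Z_*(\psi)$ fixes $\DDD^t$ and $\ZZ(\chi)$, so $Q\sim b_1\DDD^u+\DDD^t+\ZZ(\chi-b_1\psi)$; if $b_1\ne0$ pick $\psi=\chi/b_1$ to reach $b_1\DDD^u+\DDD^t$, and then $\mathscr D^u(c_2)$ (which rescales only $\ZZ$) cannot touch it, but we can still absorb $b_1$ into nothing further — so the two subcases are $\DDD^t+b_1\DDD^u$ ($b_1\ne0$) and $\DDD^t+\ZZ(\chi)$ ($b_1=0$). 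For $b_1\ne0$ one checks via~\eqref{eq:DetEqForLieSymsOfGenWaveEqs4} that the resulting $f$ satisfies $f_u\ne0$ and, by rescaling $b_1$ using $\mathscr D^t(c_1)$ acting on $\DDD^t$ — wait, $\mathscr D^t(c_1)$ fixes $\DDD^t$; rather one uses that replacing $Q$ by a multiple is allowed, so $b_1\DDD^u+\DDD^t\sim\DDD^u+\tfrac1{b_1}\DDD^t$ and then the constraint from maximality of the invariance algebra (Case~\ref{case2}/Case~\ref{case3}) fixes the coefficient, yielding the representative $\DDD^t-\DDD(2)$ after the substitution $\ZZ\to\DDD$ that handles the $b_1=0$, $\chi$-arbitrary case: for $b_1=0$, $\mathscr D(\varphi)$ acts on $\ZZ(\chi)$ by $\ZZ(\chi)\mapsto\ZZ(|\hat\varphi_x|^{-1/2}\chi(\hat\varphi))$, which together with $\mathscr Z_*(\psi)\DDD^t=0$ and appropriateness (the invariant equation must not have extra symmetries) normalises $\chi$ to a constant, giving $\langle\DDD^t-\ZZ(2)\rangle$; the analogous normalisation with a nonzero $\DDD(\zeta)$-component reappearing through a base change accounts for $\langle\DDD^t-\DDD(2)\rangle$.

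The main obstacle I expect is \emph{not} the equivalence reduction of the abstract element $Q$, which is a routine chase through the adjoint-action table, but rather the appropriateness filter: having produced a short list of $G^\sim_{\rm ess}$-inequivalent candidate one-dimensional subalgebras, one must, for each, solve the reduced determining equations~\eqref{eq:DetEqForLieSymsOfGenWaveEqs4}--\eqref{eq:DetEqForLieSymsOfGenWaveEqs5} for $(f,g)$, verify the nonlinearity constraint $(f_u,g_{uu})\ne(0,0)$, and — crucially — verify that the maximal Lie invariance algebra of the generic equation so obtained does not strictly contain $\pi_*\langle Q\rangle$ (otherwise $\langle Q\rangle$ is not appropriate, only its enlargement is). This last check is where candidates get discarded and where the precise coefficient normalisations in~\eqref{eq:OneDimensionalSubalgebrasGenWaveEqs} (the $2$, the $-q$, the $-2$) get pinned down, since they are exactly what is needed for $\pi_*\langle Q\rangle$ to match the basis of extension listed in Cases~\ref{case1}--\ref{case4} of Table~\ref{tab:GenWaveEqsExtensions}; I would lean on the determining equations already derived in Section~\ref{sec:DetEqsForLieSymsIbragrimovClass} rather than recomputing from scratch.
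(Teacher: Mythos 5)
Your overall strategy --- reduce a general element $Q=b_1\DDD^u+b_2\DDD^t+\DDD(\zeta)+\ZZ(\chi)$ by the adjoint action of $G^\sim_{\rm ess}$, use Lemma~\ref{lem:OnAppropriateSubalgebras1} to discard forbidden directions, and confirm appropriateness afterwards via the determining equations --- is exactly the paper's, and your reductions in the branch $\zeta\not\equiv0$ (rectifying $\zeta$ to a constant by $\mathscr D_*(\varphi)$ and killing the $\ZZ$-term by solving a first-order ODE for $\psi$ in $\mathscr Z_*(\psi)$) are sound. The gap is in the middle of the case analysis. First, within $\zeta\not\equiv0$ you never treat the subcase $b_1=0$, $b_2\ne0$: after your reduction it gives $b_2\DDD^t+\DDD(1)$, which is equivalent to $\DDD^t-\DDD(2)$, and this is the \emph{only} place the second representative can arise; your stated output for that branch (the family $\langle2\DDD^u-q\DDD^t+2\DDD(\delta)\rangle$ together with $\langle\DDD(1)\rangle$) omits it. Second, you try to recover $\langle\DDD^t-\DDD(2)\rangle$ from the branch $\zeta\equiv0$ via a ``substitution $\ZZ\to\DDD$'' and a ``nonzero $\DDD(\zeta)$-component reappearing through a base change''. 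That cannot happen: every adjoint action recorded in Section~\ref{sec:ClassificationSubalgebrasGenWaveEqs} leaves the $\DDD$-component intact ($\mathscr Z_*$, $\mathscr D^u_*$ and $\mathscr D^t_*$ never create one, and $\mathscr D_*(\varphi)$ maps $\DDD(\zeta)$ to $\DDD(\zeta(\hat\varphi)/\hat\varphi_x)$), so the subspace $\langle\DDD^u,\DDD^t,\ZZ(\chi)\rangle$ is invariant and no element with $\zeta\equiv0$ is $G^\sim_{\rm ess}$-equivalent to $\DDD^t-\DDD(2)$. Third, the case $\zeta\equiv0$, $b_1\ne0$ that you route towards $\DDD^t-\DDD(2)$ is in fact equivalent, after killing $\chi$ with $\mathscr Z_*(\psi)$ and rescaling $Q$, to $2\DDD^u-q\DDD^t$, i.e.\ to the $\delta=0$ member of the first family (Case~\ref{case1} with $\delta=0$), which is appropriate; as written, your scheme misfiles this family while simultaneously failing to produce $\langle\DDD^t-\DDD(2)\rangle$ where it belongs.

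A smaller but related misconception: the coefficients $2$, $-q$, $2\delta$ and $-2$ in~\eqref{eq:OneDimensionalSubalgebrasGenWaveEqs} are not ``pinned down by the appropriateness filter'' or by maximality of the invariance algebra; they are normalizations obtained by rescaling $Q$ and by the pushforwards $\mathscr D_*(\varphi)$, $\mathscr D^u_*(c_2)$, $\mathscr Z_*(\psi)$ (any nonzero constant value of $\zeta$ or $\chi$ can be moved to any other nonzero constant), chosen merely so that the resulting invariant equations look simple. For one-dimensional subalgebras the appropriateness restrictions used in the classification are exactly those of Lemma~\ref{lem:OnAppropriateSubalgebras1} (they exclude $\ZZ(\chi)$, $b_1\DDD^u+\ZZ(\chi)$ and $\DDD^t$ alone); the verification that each listed candidate is genuinely appropriate is then done in Section~\ref{sec:GenWaveEqsRegularLieSymmetryExtensions} by constructing the invariant equations. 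The proof is repaired by splitting on $b_1\ne0$; $b_1=0$, $b_2\ne0$; $b_1=b_2=0$ (as in the paper), or, if you prefer your $\zeta$-first ordering, by treating all sign patterns of $(b_1,b_2)$ inside each $\zeta$-branch; either way the four families of~\eqref{eq:OneDimensionalSubalgebrasGenWaveEqs} then come out cleanly.
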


\begin{proof}
The classification of the appropriate one-dimensional subalgebras of~$\mathfrak g^\sim_{\rm ess}$
can be carried out effectively by simplifying a general element of~$\mathfrak g^\sim_{\rm ess}$,
\[
Q=a_1\DDD^u+a_2\DDD^t+\DDD(\zeta)+\ZZ(\chi),
\]
using scalings of~$Q$ and pushforwards by elementary transformations from~$G^\sim_{\rm ess}$.
For this aim, it is necessary to distinguish multiple cases,
subject to which of the constants $a_i$ or the functions $\zeta$ and $\chi$ are nonzero.
Note that in the proofs of this and the next two lemmas,
we indicate only the kinds of elementary transformations to be used for simplifying
but not the required values of associated parameters.

For $a_1\ne0$ we can scale the vector field $Q$ to achieve $a_1=2$.
Using $\mathscr Z_*(\psi)$ 
we can set $\chi=0$.
If $\zeta\ne0$, then we set $\zeta=2$ by pushing forward~$Q$ by~$\mathscr D(\varphi)$.
By denoting $a_2=-q$ we obtain the first case from the list~\eqref{eq:OneDimensionalSubalgebrasGenWaveEqs}.

If $a_1=0$ and $a_2\ne0$, we set $a_2=1$ by a scaling of~$Q$.
For $\zeta\ne0$, we can scale $\zeta=-2$ by means of~$\mathscr D_*(\varphi)$
and additionally set~$\chi=0$ upon using the pushforward by~$\mathscr Z(\psi)$. 
If $\zeta=0$, then we have $\chi\ne0$ in view of Lemma~\ref{lem:OnAppropriateSubalgebras1}
and hence we can use $\mathscr D^u_*(c_2)$ and $\mathscr D_*(\varphi)$ in order to set $\chi=-2$.
This gives the second and the third elements of the list~\eqref{eq:OneDimensionalSubalgebrasGenWaveEqs}, respectively.

In case of $a_1=a_2=0$ but $\zeta\ne0$, we can set $\zeta=1$ by~$\mathscr D_*(\varphi)$ 
and use the pushforward~$\mathscr Z_*(\psi)$ 
to arrive at $\chi=0$,
which yields the fourth element of the above list of one-dimensional inequivalent subalgebras.

In view of Lemma~\ref{lem:OnAppropriateSubalgebras1}, the case $a_1=a_2=0$ and $\zeta=0$ is not appropriate.
\end{proof}

\begin{remark}
In Lemma~\ref{lem:1DimInequivExtsForGenWaveEqs} and in the next two lemmas,
we choose such values of parameters in basis elements of appropriate subalgebras
among possible ones up to $G^\sim$-equivalence
that the corresponding equations from the class~$\mathcal W$ have a simple form.
\end{remark}

\begin{lemma}\label{lem:2DimInequivExtsForGenWaveEqs}
Up to $G^\sim_{\rm ess}$-equivalence, any appropriate two-dimensional subalgebra of~$\mathfrak g^\sim_{\rm ess}$
that contains at most one linearly independent vector field of the form $\DDD(\zeta)+\ZZ(\chi)$ belongs to the following list:
\begin{gather}\label{eq:TwoDimensionalSubalgebrasGenWaveEqs}
\begin{split}
&\langle\DDD^u-\DDD(p),\,\DDD^t-\DDD(2)\rangle,\quad
 \langle\DDD^u-2\DDD(x),\,\DDD^t-\ZZ(2)\rangle, \\
&\langle a_1\DDD^u+a_2\DDD^t+a_3\DDD(x)+\ZZ(\delta),\,\DDD(1)\rangle,
\end{split}
\end{gather}
where $p$, $a_1$, $a_2$, $a_3$ and $\delta$ are constants with
$p\ne0$, $(a_1,a_2)\ne(0,0)$, $(a_2,a_3)\ne(0,0)$ and $(a_1,a_3,\delta)\ne(0,0,0)$.
Due to scalings of the first basis element and $G^\sim_{\rm ess}$-equivalence, we can also assume that
one of $a$'s equals 1, $(2a_1+a_3)\delta=0$, and $\delta\in\{0,1\}$.
\end{lemma}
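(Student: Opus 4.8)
The plan is to classify, up to $G^\sim_{\rm ess}$-equivalence, two-dimensional subalgebras $\mathfrak s\subset\mathfrak g^\sim_{\rm ess}$ satisfying two standing constraints: first, by Lemma~\ref{lem:OnAppropriateSubalgebras1}, $\mathfrak s$ contains no nonzero element of $\langle\DDD^u,\ZZ(\chi)\rangle$ and no element with a nonzero $\DDD^t$-component whose $\DDD(\zeta)$-part vanishes; second, by hypothesis $\dim\big(\mathfrak s\cap\langle\DDD(\zeta),\ZZ(\chi)\rangle\big)\le 1$, so by Lemma~\ref{lem:OnAppropriateSubalgebras2} this intersection has dimension $0$ or $1$. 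I would first dispose of the case where the intersection is trivial, i.e.\ every nonzero $Q\in\mathfrak s$ has $(a_1,a_2)\ne(0,0)$ in the notation $Q=a_1\DDD^u+a_2\DDD^t+\DDD(\zeta)+\ZZ(\chi)$. Looking at the image of $\mathfrak s$ under the projection onto the $\langle\DDD^u,\DDD^t\rangle$-coordinates, which is at most two-dimensional; if it were two-dimensional then $\mathfrak s$ would contain a nonzero element of $\langle\DDD(\zeta),\ZZ(\chi)\rangle$ (take a suitable linear combination killing both $a_1$ and $a_2$), contradicting triviality of the intersection — unless $\dim\mathfrak s=2$ forces exactly this, so in fact the projection is one-dimensional, hence $\mathfrak s$ contains a one-dimensional kernel of this projection lying inside $\langle\DDD(\zeta),\ZZ(\chi)\rangle$, again a contradiction. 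Thus the trivial-intersection case cannot occur, and $\mathfrak s$ always contains exactly one (up to scalar) vector field $Q_2=\DDD(\zeta)+\ZZ(\chi)$ with $\zeta\not\equiv0$.

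Next I would normalize $Q_2$. Using a pushforward $\mathscr D_*(\varphi)$ we can rectify $\zeta$, and since the first element of the list~\eqref{eq:OneDimensionalSubalgebrasGenWaveEqs} type is not what we want here, the natural normal forms are $\zeta=1$ (giving $Q_2=\DDD(1)$ after also killing $\chi$ by $\mathscr Z_*(\psi)$, using $[\DDD(1),\ZZ(\chi)]=\ZZ(\chi_x)$ so that $\mathscr Z_*(\psi)$ acts on $\DDD(1)$ by $\DDD(1)+\ZZ(-\tfrac12\psi_x)$, which can absorb $\chi$), or $\zeta=2$ (giving $\DDD(2)$), or $\zeta=-2$ or $\zeta=2x$ depending on which one-dimensional appropriate subalgebra from Lemma~\ref{lem:1DimInequivExtsForGenWaveEqs} the line $\langle Q_2\rangle$ is. Then the second basis element $Q_1=a_1\DDD^u+a_2\DDD^t+\DDD(\zeta_1)+\ZZ(\chi_1)$ can be simplified modulo adding multiples of $Q_2$ and using the residual stabilizer of $Q_2$ in $G^\sim_{\rm ess}$. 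The closure condition $[Q_1,Q_2]\in\mathfrak s$, using the commutation relations $[\DDD(\zeta^1),\DDD(\zeta^2)]=\DDD(\zeta^1\zeta^2_x-\zeta^1_x\zeta^2)$, $[\DDD(\zeta),\ZZ(\chi)]=\ZZ(\zeta\chi_x-\tfrac12\zeta_x\chi)$, $[\PP^t,\DDD^t]=\PP^t$, $[\ZZ(\chi),\DDD^u]=\ZZ(\chi)$ (note $\DDD^u$, $\DDD^t$ are central in $\mathfrak g^\sim_{\rm ess}$ except for these), strongly constrains $\zeta_1$ and $\chi_1$: it forces $\zeta_1$ to be a polynomial of low degree in $x$ matched to $\zeta$, which after further normalization lands in the three stated families. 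For the first two families one takes $Q_2$ to be $\DDD(p)$ or $\ZZ(2)$-adjusted so that the pair is $\langle\DDD^u-\DDD(p),\DDD^t-\DDD(2)\rangle$ (with $[\DDD^u-\DDD(p),\DDD^t-\DDD(2)]=[\DDD(p),\DDD(2)]-\cdots=0$, checking closure) and $\langle\DDD^u-2\DDD(x),\DDD^t-\ZZ(2)\rangle$; for the third, $Q_2=\DDD(1)$ and $Q_1=a_1\DDD^u+a_2\DDD^t+a_3\DDD(x)+\ZZ(\delta)$ with $[Q_1,\DDD(1)]=-a_3\DDD(1)\in\mathfrak s$ automatically, while the $\ZZ$-part of the commutator, $[\ZZ(\delta),\DDD^u]$-type contributions, must be absorbed, giving the constraint $(2a_1+a_3)\delta=0$ after the allowed scalings.

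Finally I would argue the nondegeneracy conditions: $p\ne0$ (else $\DDD^u$ would be in the span up to the $\DDD$-part, reducing to a different case or violating Lemma~\ref{lem:OnAppropriateSubalgebras1}), $(a_1,a_2)\ne(0,0)$ and $(a_2,a_3)\ne(0,0)$ and $(a_1,a_3,\delta)\ne(0,0,0)$ (each failure makes $\mathfrak s$ violate Lemma~\ref{lem:OnAppropriateSubalgebras1} or reduces $\dim\mathfrak s$ below two), and that the residual $G^\sim_{\rm ess}$-action together with scaling $Q_1$ lets us set one of the $a$'s to $1$ and $\delta\in\{0,1\}$; the relation $(2a_1+a_3)\delta=0$ is exactly the obstruction to further simplifying $\delta$ away when the $\DDD^u$- and $\DDD(x)$-components conspire. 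The main obstacle I anticipate is bookkeeping the residual stabilizer of the normalized $Q_2$ and showing that no two subalgebras in the final list are $G^\sim_{\rm ess}$-equivalent to each other — this requires tracking how $\mathscr D^u(c_2)$, $\mathscr D^t(c_1)$, $\mathscr D(\varphi)$, $\mathscr Z(\psi)$ act on the $(a_1,a_2,a_3,\delta,p)$ parameters and verifying the listed conditions are a complete and irredundant set of invariants; the commutator computations themselves are routine given the explicit structure constants above.
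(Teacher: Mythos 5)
There is a genuine gap, and it sits at the very start of your case decomposition. You claim the case $\mathfrak s\cap\langle\DDD(\zeta),\ZZ(\chi)\rangle=\{0\}$ cannot occur, but the linear-algebra argument you give is faulty: if the projection of $\mathfrak s$ onto $\langle\DDD^u,\DDD^t\rangle$ has two-dimensional image, then by rank--nullity the kernel of this projection restricted to $\mathfrak s$ is zero, i.e.\ a linear combination killing both the $\DDD^u$- and $\DDD^t$-coefficients kills the whole element, so you get $0$, not a nonzero element of $\langle\DDD(\zeta),\ZZ(\chi)\rangle$. The trivial-intersection case is therefore perfectly possible, and in fact it is essential: the first two families of the lemma, $\langle\DDD^u-\DDD(p),\,\DDD^t-\DDD(2)\rangle$ and $\langle\DDD^u-2\DDD(x),\,\DDD^t-\ZZ(2)\rangle$, have trivial intersection with $\langle\DDD(\zeta),\ZZ(\chi)\rangle$ and arise precisely from it. In the paper's proof this case is handled by taking a basis $Q^1=\DDD^u+\DDD(\zeta^1)+\ZZ(\chi^1)$, $Q^2=\DDD^t+\DDD(\zeta^2)+\ZZ(\chi^2)$ (possible exactly because the projection is an isomorphism), normalizing $(\zeta^2,\chi^2)$ by $\mathscr D_*(\varphi)$, $\mathscr Z_*(\psi)$ and $\mathscr D^u_*(c_2)$, and then using that $[Q^1,Q^2]$ lies in $\langle\DDD(\zeta),\ZZ(\chi)\rangle$ and so must vanish, which forces $\zeta^1$, $\chi^1$ (resp.\ $\chi^2$) to be constants; splitting on $\zeta^2\ne0$ versus $\zeta^2=0$ yields the two families. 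Your later, unmotivated reintroduction of these families (``one takes $Q_2$ to be $\DDD(p)$ or $\ZZ(2)$-adjusted'') contradicts your own elimination of the case and contains no derivation, so as written the classification is incomplete and internally inconsistent.

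For the remaining case, where the intersection is one-dimensional, your outline (normalize the intersection element to $\DDD(1)$, use closure to get $\zeta^1_x=a_3$, $\chi^1_x=0$, then normalize) matches the paper's argument. One correction, though: the condition $(2a_1+a_3)\delta=0$ is not a closure or ``absorption'' constraint --- the span is a subalgebra for every $\delta$ since $[Q^1,\DDD(1)]=-a_3\DDD(1)$ and $[\ZZ(\delta),\DDD(1)]=0$ for constant $\delta$. It is a normalization: when $2a_1+a_3\ne0$ the pushforward $\mathscr Z_*\bigl(2\delta/(2a_1+a_3)\bigr)$ removes the $\ZZ(\delta)$ term while fixing $\DDD(1)$, so up to $G^\sim_{\rm ess}$-equivalence one may assume $(2a_1+a_3)\delta=0$; you would also need to track explicitly that the inequalities $(a_2,a_3)\ne(0,0)$ and $(a_1,a_3,\delta)\ne(0,0,0)$ come from Lemma~\ref{lem:OnAppropriateSubalgebras1}.
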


\begin{proof}
Let~$Q^1$ and~$Q^2$ be two arbitrary linearly independent vector fields from~$\mathfrak g^\sim_{\rm ess}$ 
that span a subalgebra~$\mathfrak s$ of~$\mathfrak g^\sim_{\rm ess}$ 
satisfying the inequality $\dim\big(\mathfrak s\cap\langle\DDD(\zeta),\ZZ(\chi)\rangle\big)\leqslant1$
and the conditions from Lemma~\ref{lem:OnAppropriateSubalgebras1}. 
We simplify the basis elements~$Q^1$ and~$Q^2$ as much as possible 
by linear combining and simultaneous pushforwards by transformations from~$G^\sim_{\rm ess}$.
The proof is split into two parts.

First, we consider possible two-dimensional subalgebras of~$\mathfrak g^\sim_{\rm ess}$ 
not containing vector fields of the form $\DDD(\zeta)+\ZZ(\chi)$.
In view of this additional restriction and Lemma~\ref{lem:OnAppropriateSubalgebras1},
basis vector fields of~$\mathfrak s$ can be chosen in the form 
\[
Q^1=\DDD^u+\DDD(\zeta^1)+\ZZ(\chi^1), \quad Q^2=\DDD^t+\DDD(\zeta^2)+\ZZ(\chi^2),
\]
where $\zeta^1\ne0$ and $(\zeta^2,\chi^2)\ne(0,0)$.

If $\zeta^2\ne0$, then we set $\zeta^2=-2$ and $\chi^2=0$ 
successively using $\mathscr D_*(\varphi)$ and $\mathscr Z_*(\psi)$. 
Since the subalgebra~$\mathfrak s$ is closed with respect to the Lie bracket of vector fields, 
i.e., $[Q^1,Q^2]\in\langle Q^1,Q^2\rangle$, 
we derive $[Q^1,Q^2]=2\DDD(\zeta^1_x)+2\ZZ(\chi^1_x)=0$, 
and hence $\zeta^1_x=0$ and $\chi^1_x=0$, 
i.e., $\zeta^1$ and $\chi^1$ are constants. 
We re-denote the nonzero constant~$\zeta^1$ by $-p$. 
The pushforward $\mathscr Z_*(\psi)$ does not change~$Q^2$ and sets $\chi^1=0$, 
which leads to the first family of subalgebras in the list~\eqref{eq:TwoDimensionalSubalgebrasGenWaveEqs}. 

For $\zeta^2=0$, we use $\mathscr D_*(\varphi)$ and $\mathscr Z_*(\psi)$ 
to set $\zeta^1=-2x$ and $\chi^1=0$.
The condition $[Q^1,Q^2]\in\langle Q^1,Q^2\rangle$ implies 
$[Q^1,Q^2]=-2\ZZ(\chi^2_x)=0$. 
Therefore, $\chi^2$ is a nonzero constant, 
which can be gauged by $\mathscr D^u_*(c_2)$ to $-2$, 
giving the second family of subalgebras in the list~\eqref{eq:TwoDimensionalSubalgebrasGenWaveEqs}.

Now we study the case $\dim\big(\mathfrak s\cap\langle\DDD(\zeta),\ZZ(\chi)\big)=1$.
Up to linearly combining the basis elements~$Q^1$ and~$Q^2$, we can initially take
\[
Q^1=a_1\DDD^u+a_2\DDD^t+\DDD(\zeta^1)+\ZZ(\chi^1),\quad
Q^2=\DDD(\zeta^2)+\ZZ(\chi^2),
\]
where $(a_1,a_2)\ne(0,0)$ and $\zeta^2\ne0$.
We set $\zeta^2=1$ and $\chi^2=0$ using $\mathscr D_*(\varphi)$ and $\mathscr Z_*(\psi)$.
Since $\mathfrak s$ is a Lie algebra, we have that $[Q^2,Q^1]=\DDD(\zeta^1_x)+\ZZ(\chi^1_x)=a_3Q_2$ for some constant~$a_3$.
Therefore, $\zeta^1_x=a_3$ and $\chi^1_x=0$.
Combing $Q^1$ with $Q^2$, we obtain that $\zeta^1=a_3x$ and $\chi^1=c=\const$.
Up to $G^\sim_{\rm ess}$-equivalence we can assume that $(2a_1+a_3)c=0$.
Indeed, acting by $\mathscr Z_*(2c/(2a_1+a_3))$ in the case $2a_1+a_3\ne0$, we set $c=0$ in~$Q^1$ and do not change the vector field~$Q^2$.
Using pushforwards by scalings of the variable~$u$ and by alternating its signs,
we can scale the constant parameter $c$ and change its sign.
Additionally we can multiply the whole vector field~$Q^1$ by a nonvanishing constant in order to scale one of nonvanishing $a$'s to one.
The conditions $(a_2,a_3)\ne(0,0)$ and $(a_1,a_3,c)\ne(0,0,0)$ follow from Lemma~\ref{lem:OnAppropriateSubalgebras1}.
After denoting $c$ by $\delta$, this yields the third case of the list~\eqref{eq:TwoDimensionalSubalgebrasGenWaveEqs}
and thereby completes the proof of the lemma.
\end{proof}

\begin{lemma}\label{lem:3DimInequivExtsForGenWaveEqs}
Up to $G^\sim_{\rm ess}$-equivalence, any appropriate three-dimensional subalgebra of~$\mathfrak g^\sim_{\rm ess}$
that contains at most one linearly independent vector field of the form $\DDD(\zeta)+\ZZ(\chi)$ has one of the forms
\begin{equation}\label{eq:3Dextensions}
\langle\DDD^u+p_1\DDD(x),\,\DDD^t+p_2\DDD(x),\,\DDD(1)\rangle, \quad
\langle\DDD^u-2\DDD(x)+\ZZ(d),\,\DDD^t-\ZZ(2),\,\DDD(1)\rangle,
\end{equation}
where $p_1$, $p_2$ and~$d$ are constants such that $p_1p_2\ne0$.
\end{lemma}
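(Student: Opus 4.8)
The plan is to follow the same strategy as in the proof of Lemma~\ref{lem:2DimInequivExtsForGenWaveEqs}, exploiting that $\langle\DDD(\zeta),\ZZ(\chi)\rangle$ is an ideal of~$\mathfrak g^\sim_{\rm ess}$ whose quotient is two-dimensional, spanned by the images of~$\DDD^u$ and~$\DDD^t$. Let $\mathfrak s$ be an appropriate three-dimensional subalgebra containing at most one linearly independent vector field of the form $\DDD(\zeta)+\ZZ(\chi)$. The kernel of the projection of~$\mathfrak s$ onto this quotient is $\mathfrak s\cap\langle\DDD(\zeta),\ZZ(\chi)\rangle$, whose dimension belongs to~$\{0,1\}$ by Lemma~\ref{lem:OnAppropriateSubalgebras2} and the hypothesis; since $\dim\mathfrak s=3$ and the quotient is two-dimensional, this kernel is exactly one-dimensional and the projection is onto. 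By Lemma~\ref{lem:OnAppropriateSubalgebras1} the kernel is not spanned by a pure~$\ZZ(\chi)$, so it is spanned by some $\DDD(\zeta)+\ZZ(\chi)$ with $\zeta\ne0$, which I would normalize to $Q^3:=\DDD(1)$ by successive pushforwards by a suitable~$\mathscr D(\varphi)$ and then a suitable~$\mathscr Z(\psi)$. Since the projection is onto, the basis can be completed by $Q^1=\DDD^u+\DDD(\zeta^1)+\ZZ(\chi^1)$ and $Q^2=\DDD^t+\DDD(\zeta^2)+\ZZ(\chi^2)$ with $\zeta^i,\chi^i$ smooth functions of~$x$; Lemma~\ref{lem:OnAppropriateSubalgebras1} additionally forces $\zeta^1\ne0$. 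The residual equivalence freedom after this normalization consists of $x$-translations, of $\mathscr Z_*(c)$ with a constant~$c$, and of scalings~$\mathscr D^u_*$, $\mathscr D^t_*$ and shifts~$\mathscr P^t$.

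Next I would use closedness of~$\mathfrak s$. From the commutation relations of Section~\ref{sec:EquivGroup}, $[Q^3,Q^i]=\DDD(\zeta^i_x)+\ZZ(\chi^i_x)$, which must lie in $\mathfrak s\cap\langle\DDD(\zeta),\ZZ(\chi)\rangle=\langle\DDD(1)\rangle$; hence $\zeta^i_x$ is constant and $\chi^i$ is constant, and subtracting an appropriate multiple of~$Q^3$ from~$Q^i$ one may take $\zeta^i=p_ix$ with constants~$p_i$, where $p_1\ne0$. A direct computation with the same commutation relations then gives $[Q^1,Q^2]=\tfrac12\bigl(p_2\chi^1-(2+p_1)\chi^2\bigr)\ZZ(1)$. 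Because a linear combination $aQ^1+bQ^2+cQ^3$ is a multiple of~$\ZZ(1)$ only when $a=b=c=0$, no nonzero multiple of~$\ZZ(1)$ lies in~$\mathfrak s$, so closedness forces $[Q^1,Q^2]=0$, i.e.\ $p_2\chi^1=(2+p_1)\chi^2$.

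Finally I would normalize~$\chi^1,\chi^2$ and split into cases. The pushforward $\mathscr Z_*(c)$ with a constant~$c$ fixes~$Q^3$ and sends $\chi^1\mapsto\chi^1-\tfrac12c(2+p_1)$ and $\chi^2\mapsto\chi^2-\tfrac12cp_2$. If $p_1\ne-2$, choose~$c$ with $\chi^1=0$; the relation then gives $\chi^2=0$, and $p_2\ne0$ since otherwise $Q^2=\DDD^t\in\mathfrak s$ contradicts Lemma~\ref{lem:OnAppropriateSubalgebras1}; this yields the first family in~\eqref{eq:3Dextensions} with $p_1p_2\ne0$. If $p_1=-2$, the relation reads $p_2\chi^1=0$: when $p_2\ne0$ one gets $\chi^1=0$ and may kill~$\chi^2$ by the same~$\mathscr Z_*(c)$, obtaining again the first family with $p_1=-2$; when $p_2=0$, appropriateness ($\mathfrak s\cap\langle\DDD^t\rangle=\{0\}$) forces $\chi^2\ne0$, a scaling~$\mathscr D^u_*$ sets $\chi^2=-2$, and $\chi^1=:d$ remains an arbitrary constant on which the residual transformations act trivially, which gives the second family since $\DDD^t-2\ZZ(1)=\DDD^t-\ZZ(2)$ and $d\,\ZZ(1)=\ZZ(d)$ by linearity of~$\ZZ$. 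The bracket computations are routine; the step that needs care is the bookkeeping of the residual equivalence freedom after the normalization $Q^3=\DDD(1)$, and in particular the fact that the action of~$\mathscr Z_*(c)$ on~$\chi^1$ degenerates precisely when $p_1=-2$ — this degeneration is exactly what distinguishes the second family from the first.
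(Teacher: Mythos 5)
Your proof is correct and follows essentially the same route as the paper's: normalize the one-dimensional intersection with the ideal to $\langle\DDD(1)\rangle$, use the brackets with $\DDD(1)$ to force $\zeta^i=p_ix$ and $\chi^i=\const$, impose $[Q^1,Q^2]=0$ to obtain $p_2\chi^1=(p_1+2)\chi^2$, and then gauge with $\mathscr Z_*$ and $\mathscr D^u_*$, splitting on whether $p_1=-2$ and $p_2=0$. The quotient-algebra framing at the start is merely a cleaner packaging of the linear-independence conditions the paper draws from Lemma~\ref{lem:OnAppropriateSubalgebras1}, so the two arguments coincide in substance.
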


\begin{proof}
In view of Lemma~\ref{lem:OnAppropriateSubalgebras1},
any appropriate three-dimensional subalgebra of~$\mathfrak g^\sim_{\rm ess}$,
which contains at most one linearly independent vector field of the form $\DDD(\zeta)+\ZZ(\chi)$, is spanned by vector fields
$Q^1=\DDD^u+\DDD(\zeta^1)+\ZZ(\chi^1)$,
$Q^2=\DDD^t+\DDD(\zeta^2)+\ZZ(\chi^2)$ and
$Q^3=       \DDD(\zeta^3)+\ZZ(\chi^3)$,
where~$\zeta^i$ and~$\chi^i$ are smooth functions of~$x$, 
$\zeta^1$ and~$\zeta^3$ are linearly independent, 
and $(\zeta^1,\chi^1)$ and $(\zeta^2,\chi^2)$ are linearly independent as well.
We also have $[Q^i,Q^j]\in\langle Q^1,Q^2,Q^3\rangle$, $i,j=1,2,3$.

Using $\mathscr D_*(\varphi)$ and $\mathscr Z_*(\psi)$ with suitably chosen functions $\varphi$ and $\psi$ of~$x$,
we set $\zeta^3=1$ and $\chi^3=0$, i.e., we make $Q^3=\DDD(1)$.
The commutation relations of~$Q^3$ with~$Q^1$ and~$Q^2$ are
\begin{gather*}
[Q^3,Q^1]=\DDD(\zeta^1_x)+\ZZ(\chi^1_x)=p_1Q^3,\\
[Q^3,Q^2]=\DDD(\zeta^2_x)+\ZZ(\chi^2_x)=p_2Q^3
\end{gather*}
for some constants $p_i$, $i=1,2$.
These commutation relations imply the conditions $\zeta^i_x=p_i$ and  $\chi^i_x=0$.
Therefore, up to combining $Q^i$ with~$Q^3$ we obtain $\zeta^i=p_ix$ and $\chi^i=d_i$ for some constants $d_i$, 
and $p_1\ne0$ and $(p_2,d_2)\ne(0,0)$ in view of the above linear independence.
Then the commutation relation
\[
[Q^2,Q^1]=\tfrac12\ZZ((p_1+2)d_2-p_2d_1)=0
\]
yields $p_2d_1=(p_1+2)d_2$.
If $p_1\ne-2$, we can set $d_1=0$ using $\mathscr Z_*(2d_1/(p_1+2))$ and then $d_2=0$.
Analogously, in the case $p_2\ne0$ we can set $d_2=0$ using $\mathscr Z_*(d_2/p_2)$ and then $d_1=0$.
Therefore, up to $G^\sim$-equivalence we have two different cases,
$d_1=d_2=0$ and $(p_1,p_2)=(-2,0)$.
In view of Lemma~\ref{lem:OnAppropriateSubalgebras1} we obtain 
$p_1p_2\ne0$ and $d_2\ne0$ in the first and second cases, respectively.
Any nonzero value of~$d_2$ can be gauged by $\mathscr D^u_*(c_2)$ to a fixed nonzero value, e.g., $-2$.
Re-denoting $d_2$ by~$d$ completes the proof of the lemma.
\end{proof}


\section{Regular Lie-symmetry extensions}
\label{sec:GenWaveEqsRegularLieSymmetryExtensions}

For each vector field~$\mathcal Q$ from~$\mathfrak g^\sim$,
the substitution of the components of~$\pi_*\mathcal Q$ 
into the system~\eqref{eq:DetEqForLieSymsOfGenWaveEqs4}--\eqref{eq:DetEqForLieSymsOfGenWaveEqs5}
results in the condition on the arbitrary-element tuple $\theta=(f,g)$
for the equation~$\mathcal L_\theta$ to be invariant with respect to~$\pi_*\mathcal Q$.
This is why equations from the class~\eqref{eq:GenWaveEqs} 
that are invariant with respect to the projection~$\pi_*\mathfrak s$
of an appropriate subalgebra~$\mathfrak s$ of~$\mathfrak g^\sim$ 
can be described by the following way:
For each basis element~$\mathcal Q$ of~$\mathfrak s$, 
we substitute the components of~$\pi_*\mathcal Q$ into the equations~\eqref{eq:DetEqForLieSymsOfGenWaveEqs4} and~\eqref{eq:DetEqForLieSymsOfGenWaveEqs5}.
Collecting all the equations derived from the entire basis~$\mathfrak s$ 
leads to a system of first-order (quasi)linear partial differential equations
in the arbitrary elements~$f$ and~$g$ to be solved.
Simultaneously we check whether the projection~$\pi_*\mathfrak s$ is really the maximal Lie invariance algebra 
of the equation~$\mathcal L_\theta$ for obtained values of the arbitrary-element tuple~$\theta=(f,g)$.

Each of the algebras listed in Lemma~\ref{lem:1DimInequivExtsForGenWaveEqs} 
is really an appropriate one-dimensional subalgebra of~$\mathfrak g^\sim_{\rm ess}$
and results in a simple uncoupled system of two first-order linear differential equations in~$f$ and~$g$.
The corresponding list of equations from the class~\eqref{eq:GenWaveEqs},
which possess one-dimensional Lie-symmetry extensions of~$\mathfrak g^\cap$ related to $\mathfrak g^\sim$, reads
\begin{gather*}\hspace*{-\arraycolsep}
\begin{array}{lll}
\ref{case1}.& 2\DDD^u-q\DDD^t+2\DDD(\delta) \colon &u_{tt} = |u|^q(\hat f(\omega)u_{xx}+\hat g(\omega)u),\\[.5ex]
\ref{case2}.& \DDD^t-\DDD(2)                \colon &u_{tt} = {\rm e}^x(\hat f(u)u_{xx}+\hat g(u))             ,\\[.5ex]
\ref{case3}.& \DDD^t-\ZZ(2)                 \colon &u_{tt} = {\rm e}^u(\hat f(x)u_{xx}+\hat g(x))             ,\\[.5ex]
\ref{case4}.& \DDD(1)                       \colon &u_{tt} = \hat f(u)u_{xx}+\hat g(u)                  ,
\end{array}\hspace*{-5ex}
\end{gather*}
where $\omega:=x-\delta\ln|u|$, $\delta\in\{0,1\}$ and $q$ is an arbitrary constant.
Here and in what follows, in each case we present only vector fields 
that extend the basis $(\PP^t)$ of the ideal~$\hat{\mathfrak g}^\cap$ of~$\mathfrak g^\sim$
into a basis of the corresponding subalgebra of~$\mathfrak g^\sim$.

The computation related to two-dimensional extensions is more complicated.
We first present its result and then give some explanations.
\begin{gather*}\hspace*{-\arraycolsep}
\begin{array}{rl}
\ref{case9}.  & \DDD^u-\DDD(p),\ \DDD^t-\DDD(2),\ p\ne0             \colon \quad u_{tt}=\pm {\rm e}^x|u|^p(u_{xx}+\nu u),      \\[.5ex]
\ref{case10}. & \DDD^u-2\DDD(x),\,\DDD^t-\ZZ(2)                     \colon \quad u_{tt}=\pm x^2{\rm e}^uu_{xx}+\nu {\rm e}^u  ,\\[.5ex]
\ref{case11}. & -\DDD^u+2\DDD^t+2\DDD(x),\ \DDD(1)                  \colon \quad u_{tt}=\hat f(u)u_{xx}           ,            \\[.5ex]
\ref{case12}. & (1-q)\DDD^u+2q\DDD^t-2(1-q)\DDD(x)-\ZZ(4),\ \DDD(1) \colon \quad u_{tt}=\pm {\rm e}^uu_{xx}+\ve' {\rm e}^{qu} ,\\[.5ex]
\ref{case13}. & (3-p+q)D^u+2(1-q)\DDD^t+2(1+p-q)\DDD(x),\ \DDD(1)   \colon \quad u_{tt}=\pm |u|^pu_{xx}+\ve'|u|^q .
\end{array}\hspace*{-5ex}
\end{gather*}
Constraints for constant and functional parameters 
that are imposed by the maximality condition for the corresponding extensions and their inequivalence
are discussed after Theorem~\ref{thm:GenWaveEqsGroupClassification}.

Cases~\ref{case9} and~\ref{case10} are associated with the first and second families of subalgebras
listed in Lemma~\ref{lem:2DimInequivExtsForGenWaveEqs}, respectively.
In both the cases, $\nu$ is an arbitrary constant.
Note that an arbitrary nonzero constant multiplier in the expression for the arbitrary element~$f$, 
which arises in the course of integrating the equation for~$f$, can always be set to~$\pm1$, e.g., by a scaling of~$t$.

The third span from Lemma~\ref{lem:2DimInequivExtsForGenWaveEqs} 
in fact represents a multiparametric series of candidates for appropriate extensions,
which is partitioned in the course of the construction of invariant equations into Cases~\ref{case11}--\ref{case13}.
Not all values of series parameters give appropriate extensions.
Additional constraints for parameters follow 
from the compatibility conditions of the associated system in the arbitrary elements,
\begin{gather*}
\hspace*{-\arraycolsep}\begin{array}{ll}
f_x=0,\quad & ((a_1+\frac12a_3)u+\delta)f_u=pf,\\[.5ex]
g_x=0,\quad & ((a_1+\frac12a_3)u+\delta)g_u=qg,
\end{array}
\end{gather*}
with the inequalities $f\ne0$ and $(f_u,g_{uu})\ne(0,0)$ and the requirement that the dimension of extensions should not exceed two.
Here we introduce the notation $p=2(a_3-a_2)$ and $q=a_1+\frac12a_3-2a_2$.

The above partition is carried out in the following way.
If $a_3=-2a_1$ and $\delta=0$, the inequality $f\ne0$ implies that $p=0$, i.e., $a_2=a_3$.
Since $a_1$, $a_2$ and $a_3$ cannot simultaneously be zero, we obtain that $q\ne0$ and hence $g=0$.
Multiplying the first basis element by $-a_1^{-1}$, we set $a_1=-1$.
This gives Case~\ref{case11}.
For $a_3=-2a_1$ and $\delta=1$ we have $a_2=-q/2$, $a_3=(p-q)/2$ and $a_1=-(p-q)/4$. 
The parameter~$p$ should be nonzero since otherwise we obtain the Liouville equation 
whose maximal Lie invariance algebra is infinite-dimensional. 
We additionally multiply the first basis element by $-4$ 
and scale $p$ with $\mathscr D^u(c_2)$ for some~$c_2$ to~$1$
and obtain Case~\ref{case12}.
Case \ref{case13} corresponds to the condition $a_3\ne-2a_1$.
Scaling the first basis element allows us to set $a_1+\frac12a_3=4$.
Then $a_2=2(1-q)$, $a_3=2(1+p-q)$ and $a_3=(3-p+q)$.
In both Cases~\ref{case12} and~\ref{case13} the parameter $\ve'$ is nonzero 
(otherwise the extension dimension is greater than two)
and can be gauged to $\pm1$ by a simultaneous scaling of~$t$ and~$x$.

Consider the candidates for three-dimensional appropriate extensions listed in Lemma~\ref{lem:3DimInequivExtsForGenWaveEqs}.
The compatibility of the associated systems in the arbitrary elements, supplemented with the inequality $f\ne0$, implies
$p_1=2(p_2-1)$ and $d=-4$ for the first and the second span of Lemma~\ref{lem:3DimInequivExtsForGenWaveEqs}, respectively.
The general solutions of these systems up to $G^\sim$-equivalence are $(f,g)=(\pm |u|^p,0)$ and $(f,g)=(\pm {\rm e}^u,0)$.
This gives the following cases of Lie-symmetry extensions:
\begin{gather*}\hspace*{-\arraycolsep}
\begin{array}{lll}
\ref{case16}. & (p-4)\DDD^u-2p\DDD(x),\ (p-4)\DDD^t-4\DDD(x),\ \DDD(1),\ p\ne0,4\colon & u_{tt}=\pm |u|^pu_{xx},\\[.5ex]
\ref{case17}. & \DDD^u-2\DDD(x)-\ZZ(4),\ \DDD^t-\ZZ(2),\ \DDD(1)\colon & u_{tt}=\pm {\rm e}^uu_{xx}.
\end{array}
\end{gather*}
Here $p:=4(p_2-1)/p_2\ne4$ since for $p=4$ the corresponding equation 
admits the Lie-symmetry vector fields $\pi_*\DDD(x)$ and~$\pi_*\DDD(x^2)$.

Equations from the class~\eqref{eq:GenWaveEqs} which are invariant with respect to two linearly independent vector fields of the form~$\pi_*Q^i$,
where $Q^i=\DDD(\zeta^i)+\ZZ(\chi^i)$, are classified in Corollary~\ref{cor:OnAppropriateSubalgebras1}. 
Therefore, $G^\sim$-inequivalent regular Lie-symmetry extensions in the class~\eqref{eq:GenWaveEqs} 
are exhausted by Cases \ref{case1}--\ref{case4}, \ref{case9}--\ref{case13}, \ref{case14d}, \ref{case16}, \ref{case17} and \ref{case19d}.

\section{Conclusion and discussion}\label{sec:ConclusionGenWaveEqs}

In the present paper, we have carried out 
the complete group classification of the class~$\mathcal W$ 
of (1+1)-dimensional nonlinear wave and elliptic equations of the form~\eqref{eq:GenWaveEqs}
up to both $G^\sim$- and $\mathcal G^\sim$-equivalences
using the new version of the algebraic method of group classification 
for non-normalized classes of differential equations. 
The results of the classification are collected in Theorem~\ref{thm:GenWaveEqsGroupClassification}.
The key ingredient of the classification procedure is the construction 
of a generating set for the equivalence groupoid~$\mathcal G^\sim$ of the class~$\mathcal W$ modulo $G^\sim$-equivalence. 
This 
set is given in Theorem~\ref{thm:ClassWAdmTrans}. 
In view of the partition $\mathcal G^\sim_{\rm gen}=\mathcal G^\sim\sqcup\mathcal G^\sim_{\rm lin}$ 
of the equivalence groupoid~$\mathcal G^\sim_{\rm gen}$ of the superclass~$\mathcal W_{\rm gen}$ 
constituted by all the equations of the form~\eqref{eq:GenWaveEqs} with $f\ne0$,
cf.\ Remark~\ref{rem:OnInequivOfLinAndNonlinCasesOfGenWaveEqs}, 
we can merge the results on~$\mathcal W$ 
with the analogous results from Remark~\ref{rem:GenWaveEqsLinCase} 
on the class~$\mathcal W_{\rm lin}$ of linear equations of the form~\eqref{eq:GenWaveEqs}
to those for~$\mathcal W_{\rm gen}$. 
In other words, we have also obtained 
the complete group classifications of the classes~$\mathcal W_{\rm lin}$ and~$\mathcal W_{\rm gen}$ 
and the classifications of admissible transformations of these classes. 

Below we compare these paper's results with some similar results existing in the literature 
for related classes of differential equations. 
The problem of group classification for the class of semilinear wave equations of the general form 
\begin{gather}\label{eq:LahnoClass}
u_{tt}=u_{xx}+g(t,x,u,u_x)
\end{gather}
was solved in \cite{lahn2005a,lahn2006a}. 
The class~\eqref{eq:LahnoClass} was partitioned into four (normalized) subclasses,
and each of these subclasses was classified separately. 
One of these subclasses, which we denote by~$\mathcal K$, 
is singled out from the class~\eqref{eq:LahnoClass} by the constraint~$g_{u_x}=0$.
The group classification of the subclass~$\mathcal K$ 
was carried out in Section~6 of~\cite{lahn2006a} and the major part of classification results 
was collected in Table~1 therein, see also Section~V and Table~I in~\cite{lahn2005a}. 
Cases~\ref{case1}$_{\delta=1,p=2,\smash{\hat f}={\rm e}^{p\omega}}$, \ref{case2}$_{\smash{\hat f}=1}$, 
\ref{case5a}$_{\ve=1}$, \ref{case6a}$_{\ve=1}$ and~\ref{case18a}$_{\ve=1}$ 
of Table~\ref{tab:GenWaveEqsExtensions} in the present paper 
correspond to Cases~3, 2, 8, 5 and~9 of Table~1 in~\cite{lahn2006a}, 
whereas the Liouville equation is given as Case~\ref{case20} in Table~\ref{tab:GenWaveEqsExtensions} of the present paper 
and as the equation~(5.4) in~\cite{lahn2006a},
and this exhausts all possible analogous cases. 
The counterpart of Case~\ref{case1}$_{\delta=1,p\ne2,\smash{\hat f}={\rm e}^{p\omega}}$ of our Table~\ref{tab:GenWaveEqsExtensions} 
is missed in~\cite{lahn2005a,lahn2006a}.
In fact, each of Cases~3 and~4 of Table~1 in~\cite{lahn2006a} should contain one more constant parameter, 
which cannot be removed by equivalence transformations of the subclass~$\mathcal K$,
and Case~7 therein should be excluded from the classification since it is equivalent to Case~9. 
In~\cite{lahn2007a,lahn2011a}, Lahno and Spichak classified 
the semilinear elliptic equations of the rather general form \[u_{tt}+u_{xx}=F(t,x,u,u_t,u_x)\] 
whose maximal Lie invariance algebras are finite-dimensional. 
Cases~\ref{case6a}, \ref{case7}, \ref{case5a} and~\ref{case18a} of Table~\ref{tab:GenWaveEqsExtensions}
are the restrictions of the first cases of Theorems~3.1 and~3.2 from~\cite{lahn2007a} 
and of the cases ``$A_{3.8}$-Invariant Equations (1)'' and ``$A_{4.10}$-Invariant Equations (3)'' from~\cite{lahn2011a}
to the class~$\mathcal W$, respectively.
There are no other related cases in~\cite{lahn2007a,lahn2011a} and the present paper.  

More important than the solutions of the above specific classification problems 
are the development and modification of general concepts and techniques 
as well as their combinations 
that have been carried out in the course of solving these problems in the present paper.

Partitions of classes of differential equations into subclasses 
that induce partitions of the corresponding equivalence groupoids
had regularly been applied in the course of the study of equivalence groupoids 
\cite{bihl2012b,popo2006b,popo2010a}.   
We have made the two partitions of classes, 
\[
\mathcal W_{\rm gen}=\mathcal W\sqcup\mathcal W_{\rm lin}
\quad\mbox{and}\quad
\mathcal W=\mathcal W_0\sqcup\mathcal W_1, 
\] 
obtaining the partitions of the groupoids 
\[
\mathcal G^\sim_{\rm gen}=\mathcal G^\sim\sqcup\mathcal G^\sim_{\rm lin}
\quad\mbox{and}\quad
\mathcal G^\sim=\mathcal G^\sim_0\sqcup\mathcal G^\sim_1,
\] 
see Remark~\ref{rem:OnInequivOfLinAndNonlinCasesOfGenWaveEqs} and Proposition~\ref{pro:GenWaveEqsGroupoidPartition}. 
All the above classes and subclasses have the same equivalence group~$G^\sim$.
Nevertheless, in contrast to the examples existing in the literature, 
the subclasses in these two partitions do not have better normalization properties than their superclasses.
This is why no kind of normalization can be used for justifying the partitions, 
which are rather derived via the direct analysis of the determining equations for  admissible transformations. 
Although the structure of the partition components is simpler than the entire groupoid for both the groupoid partitions, 
this becomes clear only after a comprehensive study of admissible transformations. 
\looseness=1

We have separately constructed the generating sets~$\mathcal B_0$ and~$\mathcal B_1$ 
of the equivalence groupoids~$\mathcal G^\sim_0$ and~$\mathcal G^\sim_1$, 
which are constituted by the families~\ref{T1}$|_{\mathcal W_0}$ and~\ref{T3}--\ref{T9} 
and by the families~\ref{T1}$|_{\mathcal W_1}$ and~\ref{T2} given in Theorem~\ref{thm:ClassWAdmTrans}, 
respectively. 
Due to constructing the sets~$\mathcal B_0$ and~$\mathcal B_1$ modulo $G^\sim$-equivalence, 
we can and should factor out elements of the action groupoid~$\mathcal G^{G^\sim}\!\!$ 
from admissible transformations before including them in these sets. 
This is realized via successively gauging arbitrary elements of singled out subclasses of equations
that are sources or, equivalently, targets of elements from $\mathcal G^\sim\setminus\mathcal G^{G^\sim}\!$. 
In other words, mapping these subclasses onto smaller ones with simpler equivalence groupoids
by families of equivalence transformations, 
we have factored out subgroups of $G^\sim$ 
and have simplified the consideration for the corresponding classification cases. 

To classify admissible transformations of the class~$\mathcal W_0$ in the optimal way, 
we split the construction of the generating set~$\mathcal B_0$ 
in the simultaneous proof of Lemmas~\ref{lem:ClassW0AdmTrans} and~\ref{lem:ClassW0SpecialLieSymExts}
into two cases depending on the number of independent constraints for arbitrary elements 
that arise in the course of the classification.
In this way, we have extended for the first time the method of furcate splitting 
to the construction of generating sets of admissible transformations.

Moreover, we have found a bijective functor between two categories, 
which are the equivalence groupoids~$\mathcal G^\sim_{00\ve'}$ and~$\mathcal G^\sim_{01g_2}$ 
of the subclasses~$\mathcal W_{00\ve'}$ and~$\mathcal W_{01g_2}$ of equations 
of the forms~\eqref{eq:GenWaveEqsSingularSubclass00} and~\eqref{eq:GenWaveEqsSingularSubclass01} 
with $g^1\ne0$ and a fixed $\ve'\in\{-1,1\}$ 
and with a fixed~$g^2$ satisfying $g^2_ug^2_{uuu}\ne(g^2_{uu})^2$, respectively. 
The isomorphism from~$\mathcal G^\sim_{00\ve'}$ to~$\mathcal G^\sim_{01g_2}$ is given by 
\begin{gather*}
\ve\mapsto\check\ve=\ve,\quad 
g^1\mapsto\check g^1=g^1,\\
\Phi\colon\, \tilde t=T,\ \tilde x=X,\ \tilde u=u-\ln|X_x^{\;2}-\ve X_t^{\;2}|
\quad\mapsto\quad
\check\Phi\colon\, \tilde t=T,\ \tilde x=X,\ \tilde u=u.
\end{gather*}
Fixing~$\ve'$ and~$g^2$ is natural
since values of these parameters cannot be changed by admissible transformations in 
the entire classes~$\mathcal W_{00}$ and~$\mathcal W_{01}$ 
up to gauge equivalence transformations of moving a nonzero constant multiplier 
between $g^1$ and~$g^2$ within~$\mathcal W_{01}$, which can be neglected. 
That is, the partitions of the classes~$\mathcal W_{00}$ and~$\mathcal W_{01}$ 
into the subclasses associated with fixed values of~$\ve'$ and of~$g^2$,
\[
\mathcal W_{00}=\sqcup_{\ve'}\mathcal W_{00\ve'} 
\quad\mbox{and}\quad
\mathcal W_{01}=\sqcup_{g^2}\mathcal W_{01g^2},
\] 
induce the partition of the corresponding equivalence groupoids, 
\[
\mathcal G^\sim_{00}=\sqcup_{\ve'}\,\mathcal G^\sim_{00\ve'}
\quad\mbox{and}\quad
\mathcal G^\sim_{01}=\sqcup_{g^2}\,\mathcal G^\sim_{01g^2}.
\] 
No equations from~$\mathcal W_{00\ve'}$ are related to equations from~$\mathcal W_{01g_2}$ 
by point transformations. 
In other words, the functor from~$\mathcal G^\sim_{00\ve'}$ to~$\mathcal G^\sim_{01g_2}$ 
is not underlaid by a family of point transformations 
generating a mapping from~$\mathcal W_{00\ve'}$ onto~$\mathcal W_{01g_2}$ or conversely. 
Nevertheless, it allows us to easily obtain the equivalence groupoid~$\mathcal G^\sim_{01g_2}$ 
from the equivalence groupoid~$\mathcal G^\sim_{00\ve'}$.

A necessary preliminary step for finding the above functor is 
the proper selection of classes to be related via a functor. 
For the first (degenerate) case in the simultaneous proof 
of Lemmas~\ref{lem:ClassW0AdmTrans} and~\ref{lem:ClassW0SpecialLieSymExts},
under the gauge $f=\ve$ we derive the specific form $g=g^0(x){\rm e}^u+g^1(x)$ 
for values of~$g$ of the source and target equations 
of admissible transformations that are not generated by elements of~$G^\sim$. 
There are two possibilities for a further gauging of parameters in the above form of~$g$, 
either to $g^1=0$ or to ${g^0=\ve'}$. 
The first possibility seems preferable since after gauging 
we obtain equations of the same general form as those in the class~\eqref{eq:GenWaveEqsSingularSubclass01}. 
In this way, the study can be reduced to describing the equivalence groupoid of 
the single class of equations of the form~\eqref{eq:GenWaveEqsSingularSubclass01}, 
where the auxiliary inequality \[{g^2_ug^2_{uuu}\ne(g^2_{uu})^2}\] is neglected.
At the same time, the structure of the subgroupoid of the above groupoid 
that is the equivalence groupoid of the subclass singled out by the constraint $g^2_ug^2_{uuu}=(g^2_{uu})^2$ 
is different from and more complicated than the structure of its complement, 
and thus this subgroupoid needs a separate consideration. 
As a result, the preferable gauge is in fact $g^0=\ve'$. 
Although we then have to study two classes of equations of different forms,
via excluding the evidently marked out value $g^1=0$, 
which corresponds to the Liouville equations giving rise to the family~\ref{T9} of admissible transformations,
and via fixing~$\ve'$ and~$g^2$ 
we have partitioned the corresponding equivalence groupoids 
into naturally isomorphic subgroupoids. 
Therefore, it suffices to describe only one of them. 

It is convenient to construct a generating set for the equivalence groupoid~$\mathcal G^\sim_{00\ve'}$ 
up to the equivalence group of the subclass~$\mathcal W_{00\ve'}$
since then we can apply various algebraic techniques, 
including an original extension of Hydon's algebraic method to admissible transformations.%
\footnote{%
This consideration shows that the algebraic method can further be developed 
to the construction of the complete equivalence groupoids 
for classes of differential equations 
via applying the algebraic method to the corresponding equivalence algebroids,  
which are infinitesimal counterparts of the equivalence groupoids. 
}
These techniques are based on knowing the maximal Lie invariance algebras 
of equations from the subclass~$\mathcal W_{00\ve'}$ 
whose efficient classification involves a preliminary knowledge on admissible transformations
within the subclass~$\mathcal W_{00\ve'}$.  
This is why we have merged the proofs 
of Lemmas~\ref{lem:ClassW0AdmTrans} and~\ref{lem:ClassW0SpecialLieSymExts}.
Mapping  the families \ref{T3'}--\ref{T7'} into the families \ref{T3}--\ref{T7}
and uniting the restrictions of the families \ref{T3}--\ref{T8} to~$\mathcal W_{01}$ 
and to the class of equations of the same form with ${g^2_ug^2_{uuu}=(g^2_{uu})^2}$
provide us with the presentation of the final results in Theorem~\ref{thm:ClassWAdmTrans} 
in a concise~form.  

An unexpected by-product of the proper additional gauging of the arbitrary elements for equations 
from the class~$\mathcal W_0$ with $f=\ve$ and $g=g^0(x){\rm e}^u+g^1(x)$ 
by transformations from the group~$G^\sim$ is 
that this gauging is in accordance with the maximal natural gauging of the arbitrary elements 
within the class~$\mathcal W_{\rm lin}$ by transformations from the same group, 
which leads to the subclass~$\mathcal W_{\rm lin'}$ of~$\mathcal W_{\rm lin}$.
There exists a canonical isomorphism between 
the essential equivalence groupoid~$\mathcal G^{\sim\rm ess}_{\rm lin'}$ of~$\mathcal W_{\rm lin'}$ 
and the equivalence groupoid of the class of equations 
of the form~\eqref{eq:GenWaveEqsSingularSubclass00} with a fixed value of~$\ve'$, 
and it is the above concordance that makes this existence evident.
As a result, the complete group classifications of the class~$\mathcal W_{\rm lin}$ 
up to $G^\sim$- and $\mathcal G^\sim$-equivalences and 
the classification of admissible transformations within this class are carried out 
in the single Remark~\ref{rem:GenWaveEqsLinCase}.
This is one more demonstration of the efficiency of the functor method in 
classification problems of group analysis of differential equations. 
Note that analogously to the previous isomorphism 
between~$\mathcal G^\sim_{00\ve'}$ and~$\mathcal G^\sim_{01g_2}$,
this groupoid isomorphism is not induced by families of admissible point transformations 
within the superclass~$\mathcal W_{\rm gen}$.
 
Necessary preliminaries for the classification of singular Lie-symmetry extensions within the subclass~$\mathcal W_1$ 
have been given by the classification of admissible transformations within this subclass.
As a result, the former classification can easily be completed by either the direct or the algebraic method. 

The classification of regular Lie-symmetry extensions within the class~$\mathcal W$ has been carried out 
within the framework of the algebraic method 
and has reduced to the preliminary group classification of the class~$\mathcal W$. 
We have used our optimized version of this method, 
which involves the classification of candidates for appropriate subalgebras of~$\mathfrak g^\sim$  
by taking into account the principal restrictions on the dimensions and structure of such subalgebras 
and the completion of selecting appropriate subalgebras 
in the course of constructing the corresponding equations possessing Lie-symmetry extensions. 

The results obtained in this paper can be used, in particular, for finding exact solutions 
of equations from the class~\eqref{eq:GenWaveEqs}.

\section*{Acknowledgements}

The authors thank the anonymous reviewers for their valuable comments and suggestions.
The authors are also grateful to Vyacheslav Boyko, Michael Kunzinger, Dmytro Popovych and Galyna Popovych 
for productive and helpful discussions.
OV acknowledges the financial support of her research within the L'Or\'eal-UNESCO \emph{For Women in Science} International Rising Talents Programme.
The research of AB was undertaken, in part, thanks to funding from the Canada Research Chairs program,
the InnovateNL LeverageR{\&}D program and the NSERC Discovery Grant program.
The research of ROP was supported by the Austrian Science Fund (FWF), projects P25064 and P30233.

\footnotesize

\end{document}